\renewcommand{\paragraph}{%
  \@startsection{paragraph}{4}%
  {\z@}{2.25ex \@plus 1ex \@minus .2ex}{-1em}%
  {\normalfont\normalsize\bfseries}%
}
\newcommand\reallywidehat[1]{%
\savestack{\tmpbox}{\stretchto{%
  \scaleto{%
    \scalerel*[\widthof{\ensuremath{#1}}]{\kern-.6pt\bigwedge\kern-.6pt}%
    {\rule[-\textheight/2]{1ex}{\textheight}}
  }{\textheight}%
}{0.5ex}}%
\stackon[1pt]{#1}{\tmpbox}%
}
\newcommand{\indep}{\perp \!\!\! \perp}
\newcommand{\bE}{\mathbb{E}}
\newcommand{\KL}{D_{\mathrm{KL}}}
\newtheorem{definition}{Definition}
\newtheorem{theorem}{Theorem}
\newtheorem{corollary}[theorem]{Corollary}
\newtheorem{conjecture}{Conjecture}
\newtheorem{lemma}{Lemma}
\newtheorem{prop}{Proposition}
\newtheorem{rmk}{Remark}
\newenvironment{fminipage}%
  {\begin{Sbox}\begin{minipage}}%
  {\end{minipage}\end{Sbox}\fbox{\TheSbox}}
\newcommand*{\rom}[1]{\expandafter\@slowromancap\romannumeral #1@}
\newcommand{\Ind}{\mathbbm{1}}
\newcommand{\yb}{\mathbf{y}}
\newcommand{\Ib}{\mathbf{I}}
\newcommand{\Jb}{\mathbf{J}}
\newcommand{\abs}[1]{\left|#1\right|}
\newcommand{\cN}{\mathcal{N}}
\newcommand {\pr} {\mathbb{P}}
\newcommand{\calA}{{\cal A}}
\newcommand{\calB}{{\cal B}}
\newcommand{\calC}{{\cal C}}
\newcommand{\calD}{{\cal D}}
\newcommand{\calE}{{\cal E}}
\newcommand{\calH}{{\cal H}}
\newcommand{\calK}{{\cal K}}
\newcommand{\calN}{{\cal N}}
\newcommand{\calP}{{\cal P}}
\newcommand{\calQ}{{\cal Q}}
\newcommand{\calS}{{\cal S}}
\newcommand{\calU}{{\cal U}}
\newcommand{\calV}{{\cal V}}
\newcommand{\be}{\begin{equation}}
\newcommand{\ee}{\end{equation}}
\newcommand{\beqna}{\begin{eqnarray}}
\newcommand{\eeqna}{\end{eqnarray}}
\newcommand{\p}[1]{\left(#1\right)}
\newcommand{\pp}[1]{\left[#1\right]}
\newcommand{\ppp}[1]{\left\{#1\right\}}
\newcommand{\norm}[1]{\left\|#1\right\|}
\newcommand{\innerP}[1]{\left\langle#1\right\rangle}
\renewcommand{\paragraph}{%
  \@startsection{paragraph}{4}%
  {\z@}{1ex \@plus 1ex \@minus .2ex}{-1em}%
  {\normalfont\normalsize\bfseries}%
}
\def\thanks#1{\protected@xdef\@thanks{\@thanks
        \protect\footnotetext{#1}}}
\newcommand{\s}[1]{\mathsf{#1}}
\DeclareMathOperator*{\argmax}{arg\,max}
\begin{document}

\title{Detection and Recovery of Hidden Submatrices}


\author{Marom Dadon~~~~Wasim Huleihel~~~~Tamir Bendory\thanks{M. D., W. H., and T. B. are with the Department of Electrical Engineering-Systems at Tel Aviv university, {T}el {A}viv 6997801, Israel (e-mails:  \texttt{marom.dadon@gmail.com, wasimh@tauex.tau.ac.il, bendory@tauex.tau.ac.il }). W.H. is supported by ISF under Grant 1734/21. T.B. is supported in part by BSF under Grant 2020159, in part by NSF-BSF under Grant 2019752, and in part by ISF under Grant 1924/21.}}


\maketitle

\begin{abstract}
In this paper, we study the problems of detection and recovery of hidden submatrices with elevated means inside a large Gaussian random matrix. We consider two different structures for the planted submatrices. In the first model, the planted matrices are disjoint, and their row and column indices can be arbitrary. Inspired by scientific applications, the second model restricts the row and column indices to be consecutive. In the detection problem, under the null hypothesis, the observed matrix is a realization of independent and identically distributed standard normal entries. Under the alternative, there exists a set of hidden submatrices with elevated means inside the same standard normal matrix. Recovery refers to the task of locating the hidden submatrices. For both problems, and for both models, we characterize the statistical and computational barriers by deriving information-theoretic lower bounds, designing and analyzing algorithms matching those bounds, and proving computational lower bounds based on the low-degree polynomials conjecture. In particular, we show that the space of the model parameters (i.e., number of planted submatrices, their dimensions, and elevated mean) can be partitioned into three regions: the \emph{impossible} regime, where all algorithms fail; the \emph{hard} regime, where while detection or recovery are statistically possible, we give some evidence that polynomial-time algorithm do not exist; and finally the \emph{easy} regime, where polynomial-time algorithms exist. 
\end{abstract}

\section{Introduction}\label{sec:intro}

This paper studies the detection and recovery problems of hidden submatrices inside a large Gaussian random matrix. In the \emph{detection problem}, under the null hypothesis, the observed matrix is a realization of an independent and identically distributed random matrix with standard normal entries. Under the alternative, there exists a set of hidden submatrices with elevated means inside the same standard normal matrix. Our task is to design a statistical test (i.e., an algorithm) to decide which hypothesis is correct. The \emph{recovery task} is the problem of locating the hidden submatrices. In this case, the devised algorithm estimates the location of the submatrices. 

We consider two statistical models for the planted submatrices. In the first model, the planted matrices are disjoint, and their row and column indices can be arbitrary. The detection and recovery variants of this model are well-known as the \emph{submatrix detection} and \emph{submatrix recovery} (or localization) problems, respectively, and received significant attention in the last few years, e.g.,  \cite{shabalin2009finding,kolar2011minimax,balakrishnan2011statistical,butucea2013detection,arias2014community,hajek2015computational,montanari2015limitation,verzelen2015community,ma2015computational,XingNobel,Arias10,Bhamidi17,chen2016statistical,cai2015computational,brennan18a,brennan19,9917525}, and  references therein. Specifically, for the case of a \emph{single} planted submatrix, the task is to detect the presence of a small $k\times k$ submatrix with entries sampled from a distribution~$\calP$ in an $n\times n$ matrix of samples from a distribution $\calQ$. In the special case where $\calP$ and $\calQ$ are Gaussians, the statistical and computational barriers, i.e., information-theoretic lower bounds, algorithms, and computational lower bounds, were studied in great detail and were characterized in~\cite{butucea2013detection,montanari2015limitation,shabalin2009finding,kolar2011minimax,balakrishnan2011statistical,ma2015computational,brennan19}. When $\calP$ and $\calQ$ are Bernoulli random variables, the detection task is well-known as the planted dense subgraph problem, which has also been studied extensively in the literature, e.g., \cite{butucea2013detection,arias2014community,verzelen2015community,hajek2015computational,brennan18a}. Most notably, for both the Gaussian and Bernoulli problems, it is well understood by now that there appears to be a statistical-computational gap between the minimum value of $k$ at which detection can be solved, and the minimum value of $k$ at which detection can be solved in polynomial time (i.e., with an efficient algorithm). The statistical and computational barriers to the recovery problem have also received significant attention in the literature, e.g., \cite{chen2016statistical,montanari2015finding,candogan2018finding,hajek2016achieving,hajek2016information,cai2015computational,brennan18a}, covering several types of distributions, as well as single and (non-overlapping) multiple planted submatrices.

The submatrix model above, where the planted column and row indices are arbitrary, might be less realistic in certain scientific and engineering applications. Accordingly, we also analyze a second model that restricts the row and column indices to be consecutive. One important motivation for this model stems from single-particle cryo-electron microscopy (cryo-EM): a leading technology to elucidate the three-dimensional atomic structure of macromolecules, such as proteins~\cite{bai2015cryo,lyumkis2019challenges}. At the beginning of the algorithmic pipeline of cryo-EM, it is required to locate multiple particle images (tomographic projections of randomly oriented copies of the sought molecular structure) in a highly noisy, large image~\cite{singer2018mathematics,bendory2020single}. This task is dubbed particle picking. While many particle picking algorithms were designed, e.g., ~\cite{wang2016deeppicker,heimowitz2018apple,bepler2019positive,eldar2020klt}, this work can be seen as a first attempt to unveil the statistical and computational properties of this task that were not analyzed heretofore. 

\paragraph{Main contributions.} To present our results, let us introduce a few notations. In our models, we have $m$ disjoint $k\times k$ submatrices planted in an $n\times n$ matrix. We denote the observed matrix by $\s{X}$. As mentioned above, we deal with the Gaussian setting, where the entries of the planted submatrices are independent Gaussian random variables with mean $\lambda>0$ and unit variance, while the entries of the other entries in $\s{X}$ are independent Gaussian random variables with zero mean and unit variance. This falls under the general ``signal+noise" model, in the sense that $\s{X} = \lambda\cdot\s{S}+\s{Z}$, with $\s{S}$ being the signal of interest, $\s{Z}$ is a Gaussian noise matrix, and $\lambda$ describes the signal-to-noise ratio (SNR) of the problem. As mentioned above, in this paper, we consider two models for $\s{S}$; the first with the arbitrary placement of the $m$ planted submatrices, and the second with each of the $m$ planted submatrices having consecutive row and column indices. We will refer to the detection/recovery of the first model as \emph{submatrix detection/recovery}, while for the second as \emph{consecutive submatrix detection/recovery}. Contrary to the consecutive submatrix detection and recovery problems, which were not studied in the literature, the non-consecutive submatrix detection and recovery problems received significant attention; our contribution in this paper to this problem is the analysis of the detection of multiple (possibly growing) number of planted submatrices, which seems to be overlooked in the literature. As mentioned above, the recovery counterpart of multiple planted submatrices was studied in, e.g., \cite{cai2015computational}. 

For the submatrix detection, the consecutive submatrix detection, and  the consecutive submatrix recovery problems, we study the computational and statistical boundaries and derive information-theoretic lower bounds, algorithmic upper bounds, and computational lower bounds. In particular, we show that the space of the model parameters $(k,m,\lambda)$ can be partitioned into different disjoint regions: the \emph{impossible} regime, where all algorithms fail; the \emph{hard} regime, where while detection or recovery are statistically possible, we give some evidence that polynomial-time algorithms do not exist; and finally the \emph{easy} regime, where polynomial-time algorithms exist. Table~\ref{tab:results} summarizes the statistical and computational thresholds for the detection and recovery problems discussed above. We emphasize that the bounds in the second row of Table~\ref{tab:results} (submatrix recovery), as well as the first row (submatrix detection) for $m=1$, are known in the literature, as mentioned above.

\begin{table}[h!]
\centering
\begin{tabular}{|p{1.1cm}||c||c||c||c|}
 \hline
\textbf{Type}& \textbf{Impossible}& \textbf{Hard} & \textbf{Easy}\\
 \hline
 $\s{SD}$ & $\lambda \ll\frac{n}{mk^2}\wedge \frac{1}{\sqrt{k}}$ & $\frac{n}{mk^2}\wedge \frac{1}{\sqrt{k}}\ll\lambda\ll1\wedge\frac{n}{mk^2}$ & $\lambda\gg1\wedge\frac{n}{mk^2}$\\
\hline
 $\s{SR}$ & $\lambda \ll\frac{1}{\sqrt{k}}$ &$\frac{1}{\sqrt{k}}\ll\lambda\ll1\wedge\frac{\sqrt{n}}{k}$ & $\lambda\gg1\wedge\frac{\sqrt{n}}{k}$\\
\hline
 $\s{CSD}$ & $\lambda\ll\frac{1}{k}$ & $\s{NO}$& $\lambda\gg\frac{1}{k}$\\
 \hline
  $\s{CSR}$ & $\lambda\ll\frac{1}{\sqrt{k}}$ & $\s{NO}$& $\lambda \gg\frac{1}{\sqrt{k}}$\\
 \hline
\end{tabular}
\caption{Statistical and computational thresholds for submatrix detection ($\s{SD}$), submatrix recovery ($\s{SR}$), consecutive submatrix detection ($\s{CSD}$), and consecutive submatrix recovery ($\s{CSR}$), up to poly-log factors. The bounds in the first row for the special case of $m=1$ and the second row, are known in the literature (e.g., \cite{butucea2013detection,ma2015computational,chen2016statistical,cai2015computational}).}
\label{tab:results}
\end{table}

Interestingly, while it is well-known that the number of planted submatrices $m$ does not play any significant role in the statistical and computational barriers in the submatrix recovery problem, it can be seen that this is not the case for the submatrix detection problem. Similarly to the submatrix recovery problem (and of course the single planted submatrix detection problem), the submatrix detection problem undergoes a statistical-computational gap. To provide evidence for this phenomenon, we follow a recent line of work \cite{hopkins2017bayesian,Hopkins18,Kunisky19,Cherapanamjeri20,gamarnik2020lowdegree} and show that the class of low-degree polynomials fail to solve the detection problem in this conjecturally hard regime. Furthermore, it can be seen that the consecutive submatrix detection and recovery problems are either impossible or easy to solve, namely, there is no hard regime. Here, for both the detection and recovery problems, the number of planted submatrices $m$ does not play an inherent role. We note that there is a statistical gap between consecutive detection and recovery; the former is statistically easier. This is true as long as exact recovery is the performance criterion. We also analyze the correlated recovery (also known as weak recovery) criterion, where recovery is successful if only a fraction of planted entries are recovered. For this weaker criterion, we show that recovery and detection are asymptotically equivalent. 

\paragraph{Notation.} 
Given a distribution $\mathbb{P}$, let $\mathbb{P}^{\otimes n}$ denote the distribution of the $n$-dimensional random vector $(X_1, X_2, \dots, X_n)$, where the $X_i$ are i.i.d.\ according to $\mathbb{P}$. Similarly, $\mathbb{P}^{\otimes m \times n}$ denotes the distribution on $\mathbb{R}^{m \times n}$ with i.i.d.\ entries distributed as $\mathbb{P}$. Given a finite or measurable set $\mathcal{X}$, let $\text{Unif}[\mathcal{X}]$ denote the uniform distribution on $\mathcal{X}$. The relation $X\indep Y$ means that the random variables $X$ and $Y$ are statistically independent. The Hadamard and inner product between two $n\times n$ matrices $\s{A}$ and $\s{B}$ are  denoted, respectively, by $\s{A}\odot\s{B}$ and $\innerP{\s{A},\s{B}}=\s{trace}(\s{A}^T\s{B})$. For $x\in\mathbb{R}$, we define $[x]_+ = \max(x,0)$. The nuclear norm of a symmetric matrix $\s{A}$ is denoted by $\norm{\s{A}}_{\star}$, and equals the summation of the absolute values of the eigenvalues of $\s{A}$.

Let $\calN(\mu, \sigma^2)$ denote a normal random variable with mean $\mu$ and variance $\sigma^2$, when $\mu \in \mathbb{R}$ and $\sigma \in \mathbb{R}_{\ge 0}$. Let $\calN(\mu, \Sigma)$ denote a multivariate normal random vector with mean $\mu \in \mathbb{R}^d$ and covariance matrix $\Sigma$, where $\Sigma$ is a $d \times d$ positive semidefinite matrix. Let~$\Phi$ denote the cumulative distribution of a standard normal random variable with $\Phi(x) = \int_{-\infty}^x e^{-t^2/2} dt$. For probability measures $\mathbb{P}$ and $\mathbb{Q}$, let $d_{\s{TV}}(\mathbb{P},\mathbb{Q})=\frac{1}{2}\int |\mathrm{d}\mathbb{P}-\mathrm{d}\mathbb{Q}|$, $\chi^2(\mathbb{P}||\mathbb{Q}) = \int\frac{(\mathrm{d}\mathbb{P}-\mathrm{d}\mathbb{Q})^2}{\mathrm{d}\mathbb{Q}}$, and $d_{\s{KL}}(\mathbb{P}||\mathbb{Q}) = \bE_{\mathbb{P}}\log\frac{\mathrm{d}\mathbb{P}}{\mathrm{d}\mathbb{Q}}$, denote the total variation distance, the $\chi^2$-divergence, and the Kullback-Leibler (KL) divergence, respectively. Let $\s{Bern}(p)$ and $\s{Binomial}(n,p)$ denote the Bernoulli and Binomial distributions with parameters $p$ and $n$, respectively. We denote by $\s{Hypergeometric}(n,k,m)$ the Hypergeometric distribution with parameters $(n,k,m)$. 

We use standard asymptotic notation. For two positive sequences $\{a_n\}$ and $\{b_n\}$, we write $a_n = O(b_n)$ if $a_n\leq Cb_n$, for some absolute constant $C$ and for all $n$; $a_n = \Omega(b_n)$, if $b_n = O(a_n)$; $a_n = \Theta(b_n)$, if $a_n = O(b_n)$ and $a_n = \Omega(b_n)$, $a_n = o(b_n)$ or $b_n = \omega(a_n)$, if $a_n/b_n\to0$, as $n\to\infty$. Finally, for $a,b\in\mathbb{R}$, we let $a\vee b\triangleq\max\{a,b\}$ and $a\wedge b\triangleq\min\{a,b\}$. Throughout the paper, $\s{C}$ refers to any constant independent of the parameters of the problem at hand and will be reused for different constants. The notation $\ll$ refers to polynomially less than in $n$, namely, $a_n\ll b_n$  if $\liminf_{n\to\infty}\log_n a_n<\liminf_{n\to\infty}\log_n b_n$, e.g., $n\ll n^2$, but $n\not\ll n\log_2 n$. For $n\in\mathbb{N}$, we let $[n] = \{1, 2, \dots, n\}$. For a subset $S \subseteq \mathbb{R}$, we let $\mathbbm{1}\{S\}$ denote the indicator function of the set~$S$.

\section{Problem Formulation}\label{sec:model}

In this section, we present our model and define the detection and recovery problems we investigate, starting with the former. For simplicity of notations, we denote $\calQ = \calN(0,1)$ and $\calP = \calN(\lambda,1)$, for some $\lambda>0$, which can be interpreted as the signal-to-noise ratio (SNR) parameter of the underlying model.

\subsection{The detection problem} Let $(m,k,n)$ be three natural numbers, satisfying $m\cdot k\leq n$. We emphasize that the values of $m$, $k$, and $\lambda$, are allowed to be functions of $n$---the dimension of the observation. Let $\calK_{k,m,n}$ denote all possible sets that can be represented as a union of $m$ disjoint subsets of $[n]$, each of size $k$; see Figure~\ref{fig:illustationEnsembles} for an illustration. Formally, 
\begin{align} 
        \calK_{k,m,n}\triangleq \biggl\{ &\s{K}_{k,m} = \bigcup_{i=1}^m\s{S}_i \times \s{T}_i:\;\s{S}_i, \s{T}_i\subset\calC_k,\;\forall i\in[m],\nonumber\\
        &\hspace{3.65cm}(\s{S}_i\times\s{T}_i) \cap (\s{S}_j\times \s{T}_j)=\emptyset,\;\forall i\neq j\in[m]\biggl\}\label{eqn:SetSD},
\end{align}
where $\calC_k\triangleq\ppp{\s{S}\subset[n]:\;|\s{S}|=k}$, namely, it is the set of all subsets of $[n]$ of size $k$.
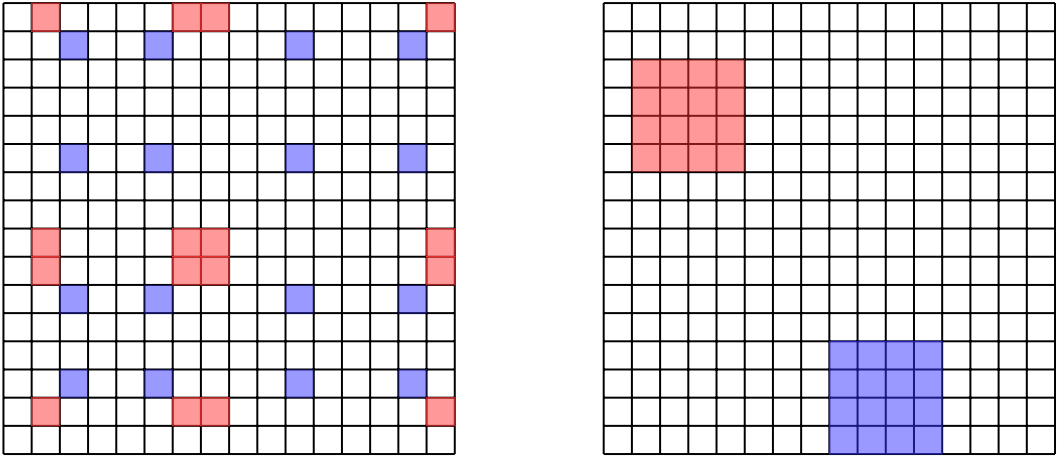
\begin{figure}
     \hspace{1.2cm}\begin{subfigure}[b]{0.3\textwidth}
     \centering
         \begin{tikzpicture}[scale=0.75]
        \draw[step=5mm, line width=0.75] (0,0) grid (8,8);
        \filldraw [semitransparent, fill=blue!80] (1,1) rectangle (1.5,1.5);
        \filldraw [semitransparent, fill=blue!80] (1,2.5) rectangle (1.5,3);
        \filldraw [semitransparent, fill=blue!80] (1,5) rectangle (1.5,5.5);
        \filldraw [semitransparent, fill=blue!80] (1,7) rectangle (1.5,7.5);
        \filldraw [semitransparent, fill=blue!80] (2.5,1) rectangle (3,1.5);
        \filldraw [semitransparent, fill=blue!80] (2.5,2.5) rectangle (3,3);
        \filldraw [semitransparent, fill=blue!80] (2.5,5) rectangle (3,5.5);
        \filldraw [semitransparent, fill=blue!80] (2.5,7) rectangle (3,7.5);
        \filldraw [semitransparent, fill=blue!80] (5,1) rectangle (5.5,1.5);
        \filldraw [semitransparent, fill=blue!80] (5,2.5) rectangle (5.5,3);
        \filldraw [semitransparent, fill=blue!80] (5,5) rectangle (5.5,5.5);
        \filldraw [semitransparent, fill=blue!80] (5,7) rectangle (5.5,7.5);
        \filldraw [semitransparent, fill=blue!80] (7,1) rectangle (7.5,1.5);
        \filldraw [semitransparent, fill=blue!80] (7,2.5) rectangle (7.5,3);
        \filldraw [semitransparent, fill=blue!80] (7,5) rectangle (7.5,5.5);
        \filldraw [semitransparent, fill=blue!80] (7,7) rectangle (7.5,7.5);
        
        \filldraw [semitransparent, red!80] (0.5,0.5) rectangle (1,1);
        \filldraw [semitransparent, red!80] (0.5,3) rectangle (1,3.5);
        \filldraw [semitransparent, red!80] (0.5,3.5) rectangle (1,4);
        \filldraw [semitransparent, red!80] (0.5,7.5) rectangle (1,8);
        \filldraw [semitransparent, red!80] (3,0.5) rectangle (3.5,1);
        \filldraw [semitransparent, red!80] (3,3) rectangle (3.5,3.5);
        \filldraw [semitransparent, red!80] (3,3.5) rectangle (3.5,4);
        \filldraw [semitransparent, red!80] (3,7.5) rectangle (3.5,8);
        \filldraw [semitransparent, red!80] (3.5,0.5) rectangle (4,1);
        \filldraw [semitransparent, red!80] (3.5,3) rectangle (4,3.5);
        \filldraw [semitransparent, red!80] (3.5,3.5) rectangle (4,4);
        \filldraw [semitransparent, red!80] (3.5,7.5) rectangle (4,8);
        \filldraw [semitransparent, red!80] (7.5,0.5) rectangle (8,1);
        \filldraw [semitransparent, red!80] (7.5,3) rectangle (8,3.5);
        \filldraw [semitransparent, red!80] (7.5,3.5) rectangle (8,4);
        \filldraw [semitransparent, red!80] (7.5,7.5) rectangle (8,8);
    \end{tikzpicture}
     \end{subfigure}
\quad\quad\quad\quad\quad\quad\quad
\begin{subfigure}[b]{0.3\textwidth}
         \centering
         \begin{tikzpicture}[scale=0.75]
    \draw[step=5mm, line width=0.75] (0,0) grid (8,8);
        \filldraw [semitransparent, fill=red!80] (0.5,5) rectangle (2.5,7);
        \filldraw [semitransparent, blue!80] (4,0) rectangle (6, 2);
    
    \end{tikzpicture}
     \end{subfigure}
        \caption{Illustration of the models considered in this paper:  $\mathcal{K}_{k,m,n}$ of Definition~\ref{eqn:SetSD} (left) and $\mathcal{K}_{k, m, n}^{\s{con}}$ 
 of Definition~\ref{def:gcons} (right), for $k=4$, $m=2$, and $n=16$.}
        \label{fig:illustationEnsembles}
\end{figure}
We next formulate two different detection problems that we wish to investigate, starting with the following one, a generalization of the Gaussian planted clique problem (or, bi-clustering, see, e.g., \cite{ma2015computational}) to multiple hidden submatrices (or, clusters). 

\begin{definition}[Submatrix detection]\label{def:SD}
Let $(\calP,\calQ)$ be a pair of distributions  over a measurable space $(\mathbb{R},\mathcal{B})$. Let $\s{SD}(n,k,m,\calP,\calQ)$ denote the hypothesis testing problem with observation $\s{X} \in \mathbb{R}^{n \times n}$ and hypotheses
\begin{align}
\calH_0: \s{X} \sim \calQ^{\otimes n \times n}\quad\quad\s{vs.}\quad\quad\calH_1: \s{X}\sim\mathcal{D}(n,k,m,\calP, \calQ),\label{eqn:modelSD}
\end{align}
where $\mathcal{D}(n, k, m,\calP, \calQ)$ is the distribution of matrices $\s{X}$ with entries $\s{X}_{ij} \sim \calP$ if $i, j \in \s{K}_{k,m}$ and $\s{X}_{ij} \sim \calQ$ otherwise that are conditionally independent given $\s{K}_{k,m}$, which is chosen uniformly at random over all subsets of $\calK_{k,m,n}$.
\end{definition}
To wit, under $\calH_0$ the elements of $\s{X}$ are all distributed i.i.d. according to $\calQ$, while under $\calH_1$, there are $m$ planted disjoint submatrices $\s{K}_{k,m}$ in $\s{X}$ with entries distributed according to~$\calP$, and the other entries (outside of  $\s{K}_{k,m}$) are distributed according to~$\calQ$. 

Note that the columns and row indices of the planted submatrices in \eqref{eqn:SetSD} can appear everywhere; in particular, they are not necessarily consecutive. In some applications, however, we would like those submatrices to be defined by a set of consecutive rows and a set of consecutive columns (e.g., when those submatrices model images like in cryo-EM). Accordingly, we consider the following set:
\begin{align} 
        \calK_{k,m,n}^{\s{con}}\triangleq \biggl\{ &\s{K}_{k,m} = \bigcup_{i=1}^m\s{S}_i \times \s{T}_i:\;\s{S}_i, \s{T}_i\subset\calC_k^{\s{con}},\;\forall i\in[m],\nonumber\\
        &\hspace{3.65cm}(\s{S}_i\times\s{T}_i) \cap (\s{S}_j\times \s{T}_j)=\emptyset,\;\forall i\neq j\in[m]\biggl\}\label{generalK},
\end{align}
where $\calC_k^{\s{con}}\triangleq\ppp{\s{S}\subset[n]:\;|\s{S}|=k,\;\s{S} \text{ is consecutive}}$, namely, it is the set of all subsets of $[n]$ of size $k$  with consecutive elements. For example, for $n=4$, we have $\calC_3^{\s{con}} = \{1,2,3\}\cup\{2,3,4\}$. The difference between $\calK_{k,m,n}$ and $\calK_{k,m,n}^{\text{con}}$ is depicted in Figure~\ref{fig:illustationEnsembles}; it is evident that the submatrices in $\calK_{k,m,n}$ can appear everywhere, while those in $\calK_{k,m,n}^{\s{con}}$ are consecutive. Consider the following detection problem. 
\begin{definition}[Consecutive submatrix detection]\label{def:gcons}
   Let $(\calP,\calQ)$ be a pair of distributions  over a measurable space $(\mathbb{R},\mathcal{B})$. Let $\s{CSD}(n,k,m,\calP,\calQ)$ denote the hypothesis testing problem with observation $\s{X} \in \mathbb{R}^{n \times n}$ and hypotheses
    \begin{align}
    \calH_0: \s{X} \sim \calQ^{\otimes n \times n}\quad\quad\s{vs.}\quad\quad\calH_1: \s{X}\sim\widetilde{\mathcal{D}}(n,k,m,\calP, \calQ),\label{eqn:modelgcons}
    \end{align}
    where $\widetilde{\mathcal{D}}(n, k, m, \calP, \calQ)$ is the distribution of matrices $\s{X}$ with entries $\s{X}_{ij} \sim \calP$ if $i, j \in \s{K}_{k,m}$ and $\s{X}_{ij} \sim \calQ$ otherwise that are conditionally independent given $\s{K}_{k,m}$, which is chosen uniformly at random over all subsets of $\calK_{k,m,n}^{\s{con}}$.
\end{definition}  

Observing $\s{X}$, a detection algorithm $\calA_n$ for the problems above is tasked with outputting a decision in $\{0,1\}$. We define the \emph{risk} of a detection algorithm $\calA_n$ as the sum of its $\s{Type}$-$\s{I}$ and $\s{Type}$-$\s{II}$ errors probabilities, namely,
\begin{align}
\s{R}(\calA_n) = \pr_{\calH_0}(\calA_n(\s{X})=1)+\pr_{\calH_1}(\calA_n(\s{X})=0),
\end{align}
where $\pr_{\calH_0}$ and $\pr_{\calH_1}$ denote the probability distributions under the null hypothesis and the alternative hypothesis, respectively. If $\s{R}(\calA_n)\to0$ as $n\to\infty$, then we say that $\calA_n$ solves the detection problem. The algorithms we consider here are either unconstrained (and thus might be computationally expensive) or run in polynomial time (computationally efficient). Typically, unconstrained algorithms are considered in order to show that information-theoretic lower bounds are asymptotically tight. An algorithm that runs in polynomial time must run in $\s{poly}(n)$ time, where~$n$ is the size of the input. As mentioned in the introduction, our goal is to derive necessary and sufficient conditions for when it is impossible and possible to detect the underlying submatrices, with and without computational constraints, for both the $\s{SD}$ and $\s{CSD}$ models.

\subsection{The recovery problem}
Next, we consider the recovery variant of the problem in Definition~\ref{def:gcons}. Note that the submatrix recovery problem that corresponds to the problem in Definition~\ref{def:SD}, where the entries of the submatrices are not necessarily consecutive, was investigated in~\cite{chen2016statistical}. In the recovery problem, we assume that the data follow the distribution under $\calH_1$ in Definition~\ref{def:gcons}, and the inference task is to recover the location of the planted submatrices. This is the analog of the particle picking problem in cryo-EM that was introduced in Section~\ref{sec:intro}.  Consider the following definition.
\sloppy

\begin{definition}[Consecutive submatrix recovery]\label{def:gconsrec}
   Let $(\calP,\calQ)$ be a pair of distributions  over a measurable space $(\mathbb{R},\mathcal{B})$. Assume that $\s{X} \in \mathbb{R}^{n \times n}\sim\widetilde{\mathcal{D}}(n,k,m,\calP, \calQ)$, where $\widetilde{\mathcal{D}}(n, k,m, \calP, \calQ)$ is the distribution of matrices $\s{X}$ with entries $\s{X}_{ij} \sim \calP$ if $i, j \in \s{K}^\star$ and $\s{X}_{ij} \sim \calQ$ otherwise that are conditionally independent given $\s{K}^\star\in\calK_{k,m,n}^{\s{con}}$. The goal is to recover the hidden submatrices $\s{K}^\star$, up to a permutation of the submatrices indices, given the matrix $\s{X}$. We let $\s{CSR}(n,k,m,\calP,\calQ)$ denote this recovery problem.
\end{definition}

Several metrics of reconstruction accuracy are possible, and we will focus on two: \emph{exact} and \emph{correlated} recovery criteria. Our estimation procedures produce a set $\hat{\s{K}} = \hat{\s{K}}(\s{X})$ aimed to estimate at best the underlying true submatrices $\s{K}^\star$. Consider the following definitions. 

\begin{definition}[Exact recovery]
    We say that $\hat{\s{K}}$ achieves exact recovery of $\s{K}^\star$, if, as $n\to\infty$, $\sup_{\s{K}^\star\in\calK_{k,m,n}^{\s{con}}}\pr(\hat{\s{K}}\neq\s{K}^\star)\to0$.
\end{definition}
\begin{definition}[Correlated recovery]
    The overlap of $\s{K}^\star$ and $\hat{\s{K}}$ is defined as the expected size of their intersection, i.e.,
    \begin{align}
        \s{overlap}(\s{K}^\star,\hat{\s{K}}) \triangleq \bE\langle{\s{K}^\star,\hat{\s{K}}}\rangle = \sum_{i=1}^n\pr(i\in\s{K}^\star\cap\hat{\s{K}}).
    \end{align}
    We say that $\hat{\s{K}}$ achieves correlated recovery of $\s{K}^\star$ if there exists a fixed constant $\epsilon>0$, such that $\lim_{n\to\infty}\sup_{\s{K}^\star\in\calK_{k,m,n}^{\s{con}}}\frac{\s{overlap}(\s{K}^\star,\hat{\s{K}})}{mk^2}\geq\epsilon$.
\end{definition}
Similarly to the detection problem, also here we will care about both unconstrained and polynomial time algorithms, and we aim to derive necessary and sufficient conditions for when it is impossible and possible to recover the underlying submatrices.

\section{Main Results}\label{sec:main}

In this section, we present our main results for the detection and recovery problems, starting with the former. For both problems, we derive the statistical and computational bounds for the two models we presented in the previous section.

\subsection{The detection problem}

\paragraph{Upper bounds.} We start by presenting our upper bounds. To that end, we propose three algorithms and analyze their performance. Define the statistics,
\begin{align}
    \s{T}_{\s{sum}}(\s{X})&\triangleq\sum_{i,j\in[n]}\s{X}_{ij},\label{eqn:secantestSum}\\
    \s{T}^{\s{SD}}_{\s{scan}}(\s{X})&\triangleq\max_{\s{K}\in\calK_{k,1,n}}\sum_{i,j\in \s{K}}\s{X}_{ij},\label{eqn:secantest0}\\
    \s{T}^{\s{CSD}}_{\s{scan}}(\s{X})&\triangleq\max_{\s{K}\in\calK_{k,1,n}^{\s{con}}}\sum_{i,j\in \s{K}}\s{X}_{ij}.\label{eqn:secantest0CSD}
\end{align}
The statistics in \eqref{eqn:secantestSum} amounts to adding up all the elements of $\s{X}$, while \eqref{eqn:secantest0} and \eqref{eqn:secantest0CSD} enumerate all $k\times k$ submatrices of $\s{X}$ in $\calK_{k,1,n}$ and $\calK_{k,1,n}^{\s{con}}$, and take the submatrix with  the maximal sum of entries, respectively. Fix $\delta>0$. Then, our tests are defined as,
\begin{align}
    \calA_{\s{sum}}(\s{X})&\triangleq\mathbbm{1}\ppp{\s{T}_{\s{sum}}(\s{X})\geq\tau_{\s{sum}}},\label{eqn:sumTest}\\
    \calA_{\s{scan}}^{\s{SD}}(\s{X})&\triangleq\mathbbm{1}\ppp{\s{T}_{\s{scan}}^{\s{SD}}(\s{X})\geq \tau^{\s{SD}}_{\s{scan}}},\label{eqn:secantestTT}\\
    \calA_{\s{scan}}^{\s{CSD}}(\s{X})&\triangleq\mathbbm{1}\ppp{\s{T}_{\s{scan}}^{\s{CSD}}(\s{X})\geq \tau^{\s{CSD}}_{\s{scan}}},\label{eqn:secantestTT2}
\end{align}
where the thresholds are given by $\tau_{\s{sum}}\triangleq\frac{mk^2\lambda}{2}$, $\tau^{\s{SD}}_{\s{scan}}\triangleq \sqrt{(4+\delta)k^2\log\binom{n}{k}}$, and $\tau^{\s{CSD}}_{\s{scan}}\triangleq \sqrt{(4+\delta)k^2\log{n}}$, and correspond roughly to the average between the expected values of each of the statistics in \eqref{eqn:secantestSum}--\eqref{eqn:secantest0CSD} under the null and alternative hypotheses. It should be emphasized that the tests in \eqref{eqn:sumTest}--\eqref{eqn:secantestTT} were proposed in, e.g., \cite{kolar2011minimax,butucea2013detection,ma2015computational}, for the single planted submatrix detection problem.

A few important remarks are in order. First, note that in the scan test, we search for a single planted matrix rather than $m$ such matrices. Second, the sum test exhibits polynomial computational complexity, of $O(n^2)$ operations, and hence efficient. The scan test in \eqref{eqn:secantestTT}, however, exhibits an exponential computational complexity, and thus is inefficient. Indeed, the search space in \eqref{eqn:secantestTT} is of cardinality $|{\calK}_{k,1,n}|=\binom{n}{k}^2$. On the other hand, the scan test $\calA_{\s{scan}}^{\s{CSD}}$ for the consecutive setting is efficient because $|\calK_{k,1,n}^{\s{con}}|\leq n^2$. 
The following result provides sufficient conditions under which the risk of each of the above tests is asymptotically small. 

\begin{theorem}[Detection upper bounds]\label{thm:upper}
Consider the detection problems in Definitions~\ref{def:SD} and \ref{def:gcons}. Then, we have the following bounds:
\begin{enumerate}
    \item (Efficient $\s{SD}$) There exists an efficient algorithm $\calA_{\s{sum}}$ in \eqref{eqn:sumTest}, such that if
    \begin{align}
    \lambda = \omega\p{\frac{n}{mk^2}},
\end{align}
    then $\s{R}\p{\calA_{\s{sum}}}\to0$, as $n\to\infty$, for the problems in Definitions~\ref{def:SD} and \ref{def:gcons}.
\item (Exhaustive $\s{SD}$) There exists an algorithm $\calA_{\s{scan}}^{\s{SD}}$ in \eqref{eqn:secantestTT}, such that if
    \begin{align}
    \lambda = \omega\p{\sqrt{\frac{\log\frac{n}{k}}{k}}},
\end{align}
    then $\s{R}\p{\calA_{\s{scan}}^{\s{SD}}}\to0$, as $n\to\infty$, for the problem in Definition~\ref{def:SD}. 
\item (Efficient $\s{CSD}$) There exists an efficient algorithm $\calA_{\s{scan}}^{\s{CSD}}$ in \eqref{eqn:secantestTT2}, such that if
    \begin{align}
    \lambda = \omega\p{\frac{\sqrt{\log\frac{n}{k}}}{k}},
\end{align}
    then $\s{R}(\calA_{\s{scan}}^{\s{CSD}})\to0$, as $n\to\infty$, for the problem in Definition~\ref{def:gcons}.
\end{enumerate}
\end{theorem}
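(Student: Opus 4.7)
The plan is to bound the Type I and Type II error of each test separately, using the Gaussianity of $\s{X}$ heavily. Throughout, we fix $\delta>0$ as in the definitions of the thresholds, and observe that if $\s{K}\subset[n]\times[n]$ with $|\s{K}|=r$, then $\sum_{(i,j)\in\s{K}}\s{X}_{ij}$ is normal with variance $r$ and with mean $0$ under $\calH_0$ and with mean $\lambda\cdot|\s{K}\cap\s{K}_{k,m}|$ under $\calH_1$.

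For the sum test, I would just note that $\s{T}_{\s{sum}}(\s{X})\sim\calN(0,n^2)$ under $\calH_0$ and $\s{T}_{\s{sum}}(\s{X})\sim\calN(mk^2\lambda,n^2)$ under $\calH_1$, since each of the $m$ disjoint planted submatrices contributes $k^2$ entries of mean $\lambda$ regardless of whether rows and columns are consecutive. The threshold $\tau_{\s{sum}}=mk^2\lambda/2$ is the midpoint between these two means, so both error probabilities are at most $\pr(|Z|\geq mk^2\lambda/(2n))$ for $Z\sim\calN(0,1)$, and the standard Gaussian tail bound gives $\s{R}(\calA_{\s{sum}})\leq 2\exp(-m^2k^4\lambda^2/(8n^2))\to 0$ whenever $\lambda=\omega(n/(mk^2))$.

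For the scan tests, the Type I error is handled by a union bound over $\calK_{k,1,n}$ (respectively $\calK_{k,1,n}^{\s{con}}$). Under $\calH_0$, each statistic $\sum_{(i,j)\in\s{K}}\s{X}_{ij}$ is $\calN(0,k^2)$, so by Gaussian tail and union bound,
\begin{equation*}
\pr_{\calH_0}\bigl(\s{T}_{\s{scan}}^{\s{SD}}(\s{X})\geq\tau_{\s{scan}}^{\s{SD}}\bigr)\leq \binom{n}{k}^{2}\exp\Bigl(-\tfrac{(4+\delta)k^2\log\binom{n}{k}}{2k^2}\Bigr)=\binom{n}{k}^{-\delta/2}\to 0,
\end{equation*}
and analogously $\pr_{\calH_0}(\s{T}_{\s{scan}}^{\s{CSD}}\geq\tau_{\s{scan}}^{\s{CSD}})\leq n^{2}\cdot n^{-(2+\delta/2)}\to 0$ using $|\calK_{k,1,n}^{\s{con}}|\leq n^2$. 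The Type II error is controlled by a one-sided argument: conditioned on $\s{K}_{k,m}$, take any single planted block $\s{K}^\star\in\calK_{k,1,n}$ (or $\calK_{k,1,n}^{\s{con}}$) contained in $\s{K}_{k,m}$, so that the maximum is at least $W\triangleq\sum_{(i,j)\in\s{K}^\star}\s{X}_{ij}\sim\calN(k^2\lambda,k^2)$. The Type II error is then at most $\pr(W<\tau_{\s{scan}})\leq\exp\bigl(-(k^2\lambda-\tau_{\s{scan}})^{2}/(2k^2)\bigr)$, which tends to $0$ as soon as $k^2\lambda-\tau_{\s{scan}}=\omega(k)$. For the SD scan this gives $k\lambda=\omega(\sqrt{\log\binom{n}{k}})=\omega(\sqrt{k\log(n/k)})$, i.e.\ $\lambda=\omega(\sqrt{\log(n/k)/k})$; for the CSD scan it gives $\lambda=\omega(\sqrt{\log n}/k)$, which matches the claimed rate up to the $\sqrt{\log(n/k)}$ versus $\sqrt{\log n}$ distinction (equivalent for $k$ polynomially smaller than $n$).

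There is essentially no deep obstacle here: the three arguments reduce to Gaussian concentration plus a union bound whose size is controlled by $|\calK_{k,1,n}|=\binom{n}{k}^2$ in the arbitrary case and $|\calK_{k,1,n}^{\s{con}}|\leq n^2$ in the consecutive case, which is precisely why the threshold $\tau^{\s{SD}}_{\s{scan}}$ carries $\log\binom{n}{k}\asymp k\log(n/k)$ while $\tau^{\s{CSD}}_{\s{scan}}$ only carries $\log n$. The only mild subtlety worth recording is that since the scan tests search only for a single planted submatrix, the analysis under $\calH_1$ does not depend on $m$ at all, which is consistent with the statement; likewise, under $\calH_0$ the noise statistics are independent of $m$, so the multiplicity of planted blocks enters the upper bounds exclusively through the sum test, in the form $\lambda=\omega(n/(mk^2))$.
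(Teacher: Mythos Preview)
Your proposal is correct and follows essentially the same approach as the paper's proof: Gaussian tail bounds on $\s{T}_{\s{sum}}$, $\s{T}_{\s{scan}}^{\s{SD}}$, and $\s{T}_{\s{scan}}^{\s{CSD}}$, with a union bound over $|\calK_{k,1,n}|=\binom{n}{k}^2$ or $|\calK_{k,1,n}^{\s{con}}|\leq n^2$ for the Type~I error and a single planted block lower-bounding the scan maximum for the Type~II error. Your observation that the CSD analysis naturally produces $\sqrt{\log n}$ rather than $\sqrt{\log(n/k)}$ is also consistent with the paper's own computation (the threshold $\tau_{\s{scan}}^{\s{CSD}}$ uses $\log n$), and the two are equivalent at the level of the theorem's $\omega(\cdot)$ statement.
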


As can be seen from Theorem~\ref{thm:upper}, only the sum test performance barrier exhibits dependency on $m$. The scan test is, for both $\s{SD}$ and $\s{CSD}$, inherently independent of $m$. This makes sense because when summing all the elements of $\s{X}$, as $m$ gets larger the mean (the ``signal") under the alternative hypothesis gets larger as well. On the other hand, since the scan test searches for a single planted submatrix, the number of planted submatrices does not play a role. One could argue that it might be beneficial to search for the $m$ planted submatrices in the scan test, however, as we show below, this is not needed, and the bounds above are asymptotically tight. 

\paragraph{Lower bounds.} To present our lower bounds, we first recall that the optimal testing error probability is determined by the total variation distance between the distributions under the null and the alternative hypotheses as follows (see, e.g., \cite[Lemma 2.1]{Tsybakov}),
\begin{align}
    \min_{\calA_n:\mathbb{R}^{n\times n}\to\ppp{0,1}}\pr_{\calH_0}(\calA_n(\s{X})=1)+\pr_{\calH_1}(\calA_n(\s{X})=0) = 1-d_{\s{TV}}(\pr_{\calH_0},\pr_{\calH_1}).
\end{align}
The following result shows that under certain conditions the total variation between the null and alternative distributions is asymptotically small, and thus, there exists no test which can solve the above detection problems reliably.
\begin{theorem}[Information-theoretic lower bounds]\label{thm:lower}
We have the following results.
\begin{enumerate}
    \item Consider the detection problem in Definition~\ref{def:SD}. 
If,
\begin{align}
\lambda = o\p{\frac{n}{mk^2}\wedge \frac{1}{\sqrt{k}}},
\end{align}
then $d_{\s{TV}}(\pr_{\calH_0},\pr_{\calH_1})=o(1)$.
\item Consider the detection problem in Definition~\ref{def:gcons}. If $\lambda = o\p{k^{-1}}$,
then $d_{\s{TV}}(\pr_{\calH_0},\pr_{\calH_1})=o(1)$.
\end{enumerate}
 \end{theorem}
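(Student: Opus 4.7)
The plan is to use the chi-squared method. Since $d_{\s{TV}}(\pr_{\calH_0},\pr_{\calH_1})^2 \leq \tfrac{1}{4}\chi^2(\pr_{\calH_1}\|\pr_{\calH_0})$ by Cauchy--Schwarz, it suffices to prove that $\chi^2(\pr_{\calH_1}\|\pr_{\calH_0}) = o(1)$. For the Gaussian planted mixtures with a uniform prior on $\s{K}^\star\in\calK$, a direct computation yields the closed form
\[
1+\chi^2(\pr_{\calH_1}\|\pr_{\calH_0}) = \bE_{\s{K}_1,\s{K}_2}\bigl[\exp\bigl(\lambda^2|\s{K}_1\cap\s{K}_2|\bigr)\bigr],
\]
where $\s{K}_1,\s{K}_2$ are i.i.d.\ uniform on the configuration family $\calK_{k,m,n}$ (Part 1) or $\calK_{k,m,n}^{\s{con}}$ (Part 2). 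The task thus reduces to bounding the MGF of the random overlap $N\triangleq|\s{K}_1\cap\s{K}_2|$, and the two different conclusions come from the two different underlying priors.

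For Part 1 (SD), I would decompose $N=\sum_{i,j\in[m]}|S_i\cap S_j'|\cdot|T_i\cap T_j'|$, where $(S_i,T_i)_i$ and $(S_j',T_j')_j$ are the row/column index sets of the submatrices that make up $\s{K}_1,\s{K}_2$. Marginally each factor is $\s{Hypergeometric}(n,k,k)$ with mean $k^2/n$, and I would proceed by nested conditioning together with the standard exponential inequality $\bE[e^{tU}]\leq\exp\bigl((k^2/n)(e^t-1)\bigr)$ for such $U$. The hypothesis $\lambda=o(1/\sqrt{k})$ is precisely what allows linearizing the inner exponent, because $\lambda^2 V\leq\lambda^2 k=o(1)$, so $e^{\lambda^2 V}-1\leq 2\lambda^2 V$. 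The hypothesis $\lambda=o(n/(mk^2))$ then ensures that the resulting outer bound, which tracks the expected total overlap $\bE[N]\asymp m^2k^4/n^2$, satisfies $\lambda^2 m^2k^4/n^2=(\lambda mk^2/n)^2=o(1)$. Combining, I get $1+\chi^2\leq\exp\bigl(O(\lambda^2 m^2k^4/n^2)\bigr)=1+o(1)$.

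For Part 2 (CSD), the index sets are consecutive intervals and the configuration space is much smaller ($|\calK_{k,m,n}^{\s{con}}|\leq n^{2m}$). The key deterministic fact is that any pairwise block overlap $|B_i^{(1)}\cap B_j^{(2)}|$ is at most $k^2$, while the event that two consecutive blocks intersect at all has probability only $O(k^2/n^2)$. Writing $Z=|\{(i,j):B_i^{(1)}\cap B_j^{(2)}\neq\emptyset\}|$, I would use the crude bound $N\leq k^2 Z$ and then a Poisson-type MGF inequality for $Z$ (as a sum of negatively associated Bernoullis of mean $O(k^2/n^2)$), obtaining $\bE[\exp(\lambda^2 k^2 Z)]\leq\exp\bigl((e^{\lambda^2 k^2}-1)\cdot O(m^2 k^2/n^2)\bigr)$. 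Under $\lambda=o(1/k)$ the outer exponent $\lambda^2 k^2$ is itself $o(1)$, which linearizes the bound to $O\bigl(\lambda^2 k^2\cdot m^2 k^2/n^2\bigr)=O\bigl((\lambda k)^2(mk/n)^2\bigr)$. Using the disjointness constraint $mk\leq n$, this is $O((\lambda k)^2)=o(1)$, independently of $m$.

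The principal technical obstacle is handling the dependence among the block-level overlap variables induced by the disjointness constraints $(S_i\times T_i)\cap(S_j\times T_j)=\emptyset$ within each configuration, which couple the hypergeometric variables in Part 1 and the Bernoulli indicators in Part 2. For SD I would address this via the fact that such constraints preserve negative association on the relevant indicator variables, which allows bounding the joint MGF by a product of marginal hypergeometric MGFs; for CSD the same negative association justifies the Poisson-type MGF bound for $Z$. A secondary subtlety is uniformity in $m$, handled cleanly because the bound scales as $m^2$ in the overlap-pair count but is tempered by the $(mk/n)^2\leq 1$ factor coming from the disjointness constraint.
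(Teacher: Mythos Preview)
Your proposal is correct and follows essentially the same route as the paper. Both use the second-moment (chi-squared) method, reduce to bounding $\bE_{\s{K}_1,\s{K}_2}[\exp(\lambda^2|\s{K}_1\cap\s{K}_2|)]$, decompose the overlap as $\sum_{i,j}|S_i\cap S_j'|\cdot|T_i\cap T_j'|$, invoke negative association to factorize the MGF over the $m^2$ block pairs, and then control each factor via hypergeometric/Binomial MGF bounds (Part~1) or via the crude pairwise-overlap bound $\leq k^2$ combined with the $O(k^2/n^2)$ intersection probability (Part~2). Your Part~2 bookkeeping via $N\leq k^2 Z$ is marginally coarser than the paper's direct MGF computation for the consecutive-interval overlap distribution, but it lands on the identical bound $\exp\bigl(O(m^2k^2/n^2)(e^{\lambda^2 k^2}-1)\bigr)$ and is arguably cleaner.
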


Theorem~\ref{thm:lower} above shows that our upper bounds in Theorem~\ref{thm:upper} are tight up to poly-log factors. Indeed, item 1 in Theorem~\ref{thm:lower} complements Items 1-2 in Theorem~\ref{thm:upper}, for the $\s{SD}$ problem, while item 2 in Theorem~\ref{thm:lower} complements Item 3 in Theorem~\ref{thm:upper}, for the $\s{CSD}$ problem. In the sequel, we illustrate our results using phase diagrams that show the tradeoff between $k$ and $\lambda$ as a function of $n$. One evident and important observation here is that the statistical limit for the $\s{CSD}$ problem is attained using an efficient test. Thus, there is no statistical computational gap in the detection problem in Definition~\ref{def:gcons}, and accordingly, it is either statistically impossible to solve the detection problem or it can be solved in polynomial time. This is not the case for the $\s{SD}$ problem. Note that both the efficient sum and the exhaustive scan tests are needed to attain the information-theoretic lower bound (up to poly-log factors). As discussed above, however, here the scan test is not efficient. We next give evidence that, based on the low-degree polynomial conjecture, efficient algorithms that run in polynomial-time do not exist in the regime where the scan test succeeds while the sum test fails. 

\paragraph{Computational lower bounds.} Note that the problem in Definition~\ref{def:SD} exhibits a gap in terms of what can be achieved by the proposed polynomial-time algorithm and the computationally expensive scan test algorithm. In particular, it can be seen that in the regime where $\frac{1}{\sqrt{k}}\ll\lambda\ll \frac{n}{mk^2}$, while the problem can be solved by an exhaustive search using the scan test, we do not have a polynomial-time algorithm. Next, we give evidence that, in fact, an efficient algorithm does not exist in this region. To that end, we start with a brief introduction to the method of \emph{low-degree polynomials}. 

The premise of this method is to take low-degree multivariate polynomials in the entries of the observations as a proxy for efficiently-computable functions. The ideas below were first developed in a sequence of works in the sum-of-squares optimization literature \cite{barak2016nearly,Hopkins18,hopkins2017bayesian,hopkins2017power}.

In the following, we follow the notations and definitions of~\cite{Hopkins18,Dmitriy19}. Any distribution $\pr_{\calH_0}$ on $\Omega_n=\mathbb{R}^{n\times n}$ induces an inner product of measurable functions $f,g:\Omega_n\to\mathbb{R}$ given by $\left\langle f,g \right\rangle_{\calH_0} = \bE_{\calH_0}[f(\s{X})g(\s{X})]$, and norm $\norm{f}_{\calH_0} = \left\langle f,f \right\rangle_{\calH_0}^{1/2}$. We Let $L^2(\pr_{\calH_0})$ denote the Hilbert space consisting of functions $f$ for which $\norm{f}_{\calH_0}<\infty$, endowed with the above inner product and norm. In the computationally-unbounded case, the Neyman-Pearson lemma shows that the likelihood ratio test achieves the optimal tradeoff between $\mathsf{Type}$-$\mathsf{I}$ and $\mathsf{Type}$-$\mathsf{II}$ error probabilities. Furthermore, it is well-known that the same test optimally distinguishes $\pr_{\calH_0}$ from $\pr_{\calH_1}$ in the $L^2$ sense. 
Specifically, denoting by $\s{L}_n\triangleq\pr_{\calH_1}/\pr_{\calH_0}$ the likelihood ratio, the second-moment method for contiguity (see, e.g., \cite{Dmitriy19}) shows that if $\norm{\s{L}_n}_{\calH_0}^2$ remains bounded as $n\to\infty$, then $\pr_{\calH_1}$ is contiguous to $\pr_{\calH_0}$. This implies that $\pr_{\calH_1}$ and $\pr_{\calH_0}$ are statistically indistinguishable, i.e., no test can have both $\mathsf{Type}$-$\mathsf{I}$ and $\mathsf{Type}$-$\mathsf{II}$ error probabilities tending to zero. 

We now describe the low-degree method. The idea is to find the low-degree polynomial that best distinguishes $\pr_{\calH_0}$ from $\pr_{\calH_1}$ in the $L^2$ sense. To that end, we let $\calV_{n,\leq\s{D}}\subset L^2(\pr_{\calH_0})$ denote the linear subspace of polynomials $\Omega_n\to\mathbb{R}$ of degree at most~$\s{D}\in\mathbb{N}$. We further define  $\calP_{\leq \s{D}}: L^2(\pr_{\calH_0})\to\calV_{n,\leq\s{D}}$ the orthogonal projection operator. Then, the \emph{$\s{D}$-low-degree likelihood ratio} $\s{L}_{n}^{\leq \s{D}}$ is the projection of a function $\s{L}_{n}$ to the span of coordinate-degree-$\s{D}$ functions, where the projection is orthogonal with respect to the inner product $\left\langle \cdot,\cdot \right\rangle_{\calH_0}$. As discussed above, the likelihood ratio optimally distinguishes $\pr_{\calH_0}$ from $\pr_{\calH_1}$ in the $L^2$ sense. The next lemma shows that over the set of low-degree polynomials, the~$\s{D}$-low-degree likelihood ratio have exhibit the same property.
\begin{lemma}[Optimally of $\s{L}_{n}^{\leq \s{D}}$ {\cite{hopkins2017bayesian,hopkins2017power,Dmitriy19}}]\label{lem:Dmitriy}
Consider the following optimization problem:
\begin{equation}
\begin{aligned}
\mathrm{max}
\;\bE_{\calH_1}f(\s{X})
\quad\mathrm{s.t.}
\quad\bE_{\calH_0}f^2(\s{X}) = 1,\; f\in\calV_{n,\leq\s{D}}.
\end{aligned}\label{eqn:optimizationProblem}
\end{equation}
Then, the unique solution $f^\star$ for \eqref{eqn:optimizationProblem} is the $\s{D}$-low degree likelihood ratio $f^\star = \s{L}_{n}^{\leq \s{D}}/\norm{\s{L}_{n}^{\leq \s{D}}}_{\calH_0}$, and the value of the optimization problem is $\norm{\s{L}_{n}^{\leq \s{D}}}_{\calH_0}$. 
\end{lemma}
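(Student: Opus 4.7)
The plan is to reformulate the objective as an inner product in $L^2(\pr_{\calH_0})$, reduce it via orthogonal projection onto $\calV_{n,\leq\s{D}}$, and then apply Cauchy--Schwarz. Throughout, assume $\pr_{\calH_1}$ is absolutely continuous with respect to $\pr_{\calH_0}$ so that the likelihood ratio $\s{L}_n = \mathrm{d}\pr_{\calH_1}/\mathrm{d}\pr_{\calH_0}$ is well defined as an element of $L^2(\pr_{\calH_0})$ (this is the standing assumption under which the low-degree framework operates).

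First, I would use the change-of-measure identity to rewrite the objective as
\begin{align}
\bE_{\calH_1} f(\s{X}) \;=\; \bE_{\calH_0}\bigl[\s{L}_n(\s{X})\, f(\s{X})\bigr] \;=\; \langle f, \s{L}_n\rangle_{\calH_0}.
\end{align}
Since $f\in\calV_{n,\leq\s{D}}$ and $\calP_{\leq\s{D}}$ is the orthogonal projection onto $\calV_{n,\leq\s{D}}$, the component of $\s{L}_n$ in the orthogonal complement of $\calV_{n,\leq\s{D}}$ contributes nothing to the inner product, giving
\begin{align}
\langle f,\s{L}_n\rangle_{\calH_0} \;=\; \langle f,\calP_{\leq\s{D}}\s{L}_n\rangle_{\calH_0} \;=\; \langle f,\s{L}_n^{\leq\s{D}}\rangle_{\calH_0}.
\end{align}

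Next, I would apply the Cauchy--Schwarz inequality in $L^2(\pr_{\calH_0})$ together with the normalization constraint $\|f\|_{\calH_0}=1$:
\begin{align}
\langle f,\s{L}_n^{\leq\s{D}}\rangle_{\calH_0} \;\le\; \|f\|_{\calH_0}\,\bigl\|\s{L}_n^{\leq\s{D}}\bigr\|_{\calH_0} \;=\; \bigl\|\s{L}_n^{\leq\s{D}}\bigr\|_{\calH_0}.
\end{align}
This shows the value of \eqref{eqn:optimizationProblem} is at most $\|\s{L}_n^{\leq\s{D}}\|_{\calH_0}$. Equality in Cauchy--Schwarz forces $f$ to be a nonnegative scalar multiple of $\s{L}_n^{\leq\s{D}}$, and the unit-norm constraint then pins down $f^\star = \s{L}_n^{\leq\s{D}}/\|\s{L}_n^{\leq\s{D}}\|_{\calH_0}$ as the unique maximizer (note $\s{L}_n^{\leq\s{D}}\in\calV_{n,\leq\s{D}}$, so $f^\star$ is feasible). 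Substituting this choice into the objective gives the value $\|\s{L}_n^{\leq\s{D}}\|_{\calH_0}$, matching the upper bound.

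There is no real obstacle here; the only subtlety is the degenerate case $\|\s{L}_n^{\leq\s{D}}\|_{\calH_0}=0$, in which the objective is identically zero on the feasible set and any unit-norm $f\in\calV_{n,\leq\s{D}}$ is a maximizer (so the ``uniqueness'' claim should be understood modulo this trivial case). Assuming $\|\s{L}_n^{\leq\s{D}}\|_{\calH_0}>0$, the above argument delivers both the optimizer and the optimal value as stated.
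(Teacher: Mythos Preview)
Your proof is correct and is exactly the standard argument for this result. The paper does not actually supply its own proof of this lemma---it is quoted as a known fact with citations to \cite{hopkins2017bayesian,hopkins2017power,Dmitriy19}---but the change-of-measure plus projection plus Cauchy--Schwarz argument you give is precisely the one found in those references.
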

\vspace{-0.2cm}
As was mentioned above, in the computationally-unbounded regime, an important property of the likelihood ratio is that if $\norm{\s{L}_n}_{\calH_0}$ is bounded, then $\pr_{\calH_0}$ and $\pr_{\calH_1}$ are statistically indistinguishable. The following conjecture states that a computational analog of this property holds, with $\s{L}_{n}^{\leq \s{D}}$ playing the role of the likelihood ratio. In fact,  it also postulates that polynomials of degree $\approx\log n$ are a proxy for polynomial-time algorithms. The conjecture below is based on \cite{Hopkins18,hopkins2017bayesian,hopkins2017power}, and~\cite[Conj. 2.2.4]{Hopkins18}. We give an informal statement of this conjecture, which appears in \cite[Conj. 1.16]{Dmitriy19}. For a precise statement, we refer the reader to~\cite[Conj. 2.2.4]{Hopkins18} and \cite[Sec. 4]{Dmitriy19}.

\begin{conjecture}[Low-degree conjecture, informal]\label{conj:1}
Given a sequence of probability measures $\pr_{\calH_0}$ and $\pr_{\calH_1}$, if there exists $\epsilon>0$ and $\s{D} = \s{D}(n)\geq (\log n)^{1+\epsilon}$, such that $\norm{\s{L}_{n}^{\leq \s{D}}}_{\calH_0}$ remains bounded as $n\to\infty$, then there is no polynomial-time algorithm that distinguishes~$\pr_{\calH_0}$ and $\pr_{\calH_1}$.
\end{conjecture}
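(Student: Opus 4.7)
Conjecture~\ref{conj:1} is, as the name indicates, a conjecture rather than a provable statement, so no unconditional proof plan exists in the literature; what I can describe is a heuristic derivation, the nearest thing to a "proof sketch" that the low-degree framework supplies, together with the precise step at which the argument breaks down into an open problem. The plan is to (i) reduce the computational claim to a claim about polynomial distinguishers, (ii) invoke Lemma~\ref{lem:Dmitriy} to turn the bounded $\|\s{L}_n^{\leq \s{D}}\|_{\calH_0}$ hypothesis into a no-go statement for all degree-$\s{D}$ polynomials, and (iii) identify why the reduction in (i) cannot currently be made rigorous in general.

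For step (i), the working principle, which originates in \cite{barak2016nearly,Hopkins18,hopkins2017bayesian,hopkins2017power}, is that any polynomial-time algorithm $\calA_n$ computing a bounded statistic on $\s{X}\in\mathbb{R}^{n\times n}$ can be approximated in $L^2(\pr_{\calH_0})$ by a polynomial of degree $\s{D}=(\log n)^{1+\epsilon}$ with negligible loss in its distinguishing advantage. The rationale is an analogy with the classical Boolean setting, where bounded-depth circuits admit low-degree Fourier approximations (Linial--Mansour--Nisan and Braverman-type results), combined with the empirical fact that on every natural planted problem studied so far, the best known polynomial-time test matches, up to constants, a low-degree polynomial thresholding. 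One would therefore try to produce, from $\calA_n$, a polynomial $f\in\calV_{n,\leq\s{D}}$ with $\bE_{\calH_1}f-\bE_{\calH_0}f\gtrsim\sqrt{\mathrm{Var}_{\calH_0}f}$.

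For step (ii), once such an $f$ is in hand, Lemma~\ref{lem:Dmitriy} and the Cauchy--Schwarz inequality give
\begin{align}
\frac{\bE_{\calH_1}f(\s{X})}{\sqrt{\bE_{\calH_0}f^2(\s{X})}}\leq \|\s{L}_n^{\leq \s{D}}\|_{\calH_0},
\end{align}
so if the right-hand side stays bounded as $n\to\infty$, then no degree-$\s{D}$ polynomial has advantage diverging to infinity; a second-moment/Chebyshev bound then forbids a test built from $f$ from driving both $\s{Type}$-$\s{I}$ and $\s{Type}$-$\s{II}$ errors to zero. The scaling $\s{D}\geq(\log n)^{1+\epsilon}$ is exactly the regime in which evaluating such an $f$ via its monomial expansion takes $n^{O(\s{D})}=n^{O((\log n)^{\epsilon})}$ time, which is super-polynomial, so any genuine polynomial-time $\calA_n$ must lie inside the class of functions captured by $\calV_{n,\leq\s{D}}$.

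The hard part, and the reason this remains a conjecture, is exactly step (i): there is no known general theorem producing, for an arbitrary polynomial-time algorithm operating on continuous Gaussian data, a low-degree $L^2(\pr_{\calH_0})$ approximation preserving the distinguishing advantage. Polynomial-time algorithms may branch on adaptive thresholds and carve out geometrically intricate acceptance regions in $\mathbb{R}^{n\times n}$, and smoothing them into low-degree polynomials without blowing up the $L^2(\pr_{\calH_0})$ norm would essentially amount to an unconditional circuit lower bound. Consequently, any honest attempt at the conjecture must either restrict to a structured class of algorithms (spectral methods, message passing, SoS relaxations, statistical-query algorithms, where analogues have been proved, e.g., \cite{Hopkins18,hopkins2017power}) or assume an additional complexity-theoretic hypothesis; a fully general proof is beyond current techniques, which is exactly why the paper invokes the statement as a conjecture and uses it only to derive conditional computational lower bounds.
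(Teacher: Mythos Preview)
Your assessment is correct: Conjecture~\ref{conj:1} is not proved in the paper, and the paper makes no attempt to prove it. The paper simply states the conjecture with attribution to \cite{Hopkins18,hopkins2017bayesian,hopkins2017power} and \cite[Conj.~2.2.4]{Hopkins18}, and then uses it as a standing assumption to convert the analytic bound in Theorem~\ref{thm:gap} into a conditional computational lower bound (the corollary immediately following Theorem~\ref{thm:gap}).

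Your write-up goes further than the paper does: you supply the heuristic three-step derivation (reduce poly-time to low-degree, invoke Lemma~\ref{lem:Dmitriy}, identify the gap at the reduction step) and correctly pinpoint that step~(i) is the open problem. The paper offers none of this rationale; it only cites the relevant sources and moves on. So there is nothing to compare against on the paper's side, and your explanation of why the conjecture is believed and where it fails to be a theorem is accurate and appropriate.
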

\vspace{-0.2cm}

In the sequel, we will rely on Conjecture~\ref{conj:1} to give evidence for the statistical-computational gap observed for the problem in Definition~\ref{def:SD} in the regime where $\frac{1}{\sqrt{k}}\ll\lambda\ll \frac{n}{mk^2}$. At this point we would like to mention \cite[Hypothesis 2.1.5]{Hopkins18}, which states a more general form of Conjecture~\ref{conj:1} in the sense that it postulates that degree-$\s{D}$ polynomials are a proxy for $n^{O(D)}$-time algorithms. Note that if $\norm{\s{L}_{n}^{\leq \s{D}}}_{\calH_0} = O(1)$, then we expect detection in time $\s{T}(n) = e^{\s{D}(n)}$ to be impossible.

\begin{theorem}[Computational lower bound]\label{thm:gap}
Consider the detection problem in Definition~\ref{def:SD}. Then, if $\lambda$ is such that $\frac{1}{\sqrt{k}}\ll\lambda\ll \frac{n}{mk^2}$, then $\norm{\s{L}_{n}^{\leq \s{D}}}_{\calH_0}\leq O(1)$, for any $\s{D} = \Omega(\log n)$. On the other hand, if $\lambda$ is such that $\lambda\gg \frac{n}{mk^2}$, then $\norm{\s{L}_{n}^{\leq \s{D}}}_{\calH_0}\geq \omega(1)$.
\end{theorem}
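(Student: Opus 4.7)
The plan is to compute the $\s{D}$-low-degree likelihood ratio norm via Hermite expansion and then treat the two regimes of $\lambda$ separately using combinatorial moment bounds. Writing the observation under $\calH_1$ as $\s{X}=\lambda\mathbf{1}_{\s{K}}+\s{Z}$, with $\mathbf{1}_{\s{K}}\in\{0,1\}^{n\times n}$ the indicator matrix of $\s{K}\sim\mathrm{Unif}(\calK_{k,m,n})$ and $\s{Z}\sim\calN(0,1)^{\otimes n\times n}$, I expand each factor $\exp(\lambda x-\lambda^{2}/2)$ of the conditional likelihood ratio in the orthonormal Hermite basis and average over $\s{K}$. Combined with the multinomial identity (see, e.g., \cite[Sec.~2]{Dmitriy19}), this yields the identity
\begin{equation}
\|\s{L}_{n}^{\le\s{D}}\|_{\calH_{0}}^{2}=\sum_{d=0}^{\s{D}}\frac{\lambda^{2d}}{d!}\bE_{\s{K},\s{K}'}\!\bigl[|\s{K}\cap\s{K}'|^{d}\bigr],\label{eqn:p-ldfirst}
\end{equation}
where $\s{K},\s{K}'$ are independent uniform draws from $\calK_{k,m,n}$. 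The problem thus reduces to bounding moments of the overlap $|\s{K}\cap\s{K}'|$, whose mean equals $\mu\triangleq m^{2}k^{4}/n^{2}$.

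For the first (upper-bound) direction $\lambda\ll n/(mk^{2})$, the main estimate to establish is
\begin{equation}
\bE_{\s{K},\s{K}'}\!\bigl[|\s{K}\cap\s{K}'|^{d}\bigr]\lesssim(\s{C}\log n)^{d}\mu^{d},\qquad 1\le d\le\s{D}=(\log n)^{1+o(1)}.\label{eqn:p-moment}
\end{equation}
Substituting \eqref{eqn:p-moment} into \eqref{eqn:p-ldfirst} and summing the exponential series yields $\|\s{L}_{n}^{\le\s{D}}\|_{\calH_{0}}^{2}\le\exp(\s{C}\log n\cdot\lambda^{2}\mu)=\s{O}(1)$ whenever $\lambda^{2}\mu\log n=o(1)$, which holds polynomially under the hypothesis $\lambda\ll n/(mk^{2})$. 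To prove \eqref{eqn:p-moment} I expand $|\s{K}\cap\s{K}'|^{d}$ as a sum over ordered $d$-tuples of cells in $[n]^{2}$, group by the underlying set $\s{S}$ of distinct cells (picking up Stirling-type weights $\stirling{d}{s}s!$ for $s=|\s{S}|$), and reduce to controlling $\sum_{\s{S}:|\s{S}|=s}\pr_{\s{K}}(\s{S}\subseteq\s{K})^{2}$. The key combinatorial lemma to prove is
\begin{equation}
\pr_{\s{K}}(\s{S}\subseteq\s{K})\lesssim m^{|\s{S}|}\bigl(\tfrac{k}{n}\bigr)^{|\pi_{1}(\s{S})|+|\pi_{2}(\s{S})|},\label{eqn:p-probS}
\end{equation}
obtained via a union bound over the $m^{|\s{S}|}$ assignments of cells to blocks, together with the hypergeometric inequality $\pr(R\subseteq \s{S}_{i})\le(k/n)^{|R|}$ applied to the row and column projections of each block's share of $\s{S}$. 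Squaring \eqref{eqn:p-probS} and summing against the at most $\binom{n}{r}\binom{n}{c}\binom{rc}{s}$ sets $\s{S}$ with prescribed $|\pi_{1}(\s{S})|=r$, $|\pi_{2}(\s{S})|=c$, $|\s{S}|=s$, and optimizing over $r,c\in[s]$, will yield \eqref{eqn:p-moment}.

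For the second (lower-bound) direction $\lambda\gg n/(mk^{2})$ it suffices to exhibit one low-degree polynomial with large signal. Take $f(\s{X})=\tfrac{1}{n}\sum_{i,j=1}^{n}\s{X}_{ij}$, a degree-one polynomial with $\bE_{\calH_{0}}[f]=0$, $\bE_{\calH_{0}}[f^{2}]=1$, and $\bE_{\calH_{1}}[f]=\lambda mk^{2}/n$. Lemma~\ref{lem:Dmitriy} together with the monotonicity $\|\s{L}_{n}^{\le\s{D}}\|\ge\|\s{L}_{n}^{\le 1}\|$ (valid for $\s{D}\ge 1$) then gives
\begin{equation*}
\|\s{L}_{n}^{\le\s{D}}\|_{\calH_{0}}^{2}\ge\bigl(\bE_{\calH_{1}}[f]\bigr)^{2}=\frac{\lambda^{2}m^{2}k^{4}}{n^{2}}\to\infty,
\end{equation*}
completing the claim.

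The technical heart of the argument is the combinatorial moment bound \eqref{eqn:p-moment}. The difficulty is that the row-sets $\{\s{S}_{i}\}$ and column-sets $\{\s{T}_{i}\}$ of $\s{K}$ are \emph{not} independent under the uniform measure on $\calK_{k,m,n}$: different $\s{S}_{i}$'s may share rows, and different $\s{T}_{i}$'s may share columns, while only the product blocks $\s{S}_{i}\times\s{T}_{i}$ are constrained to be disjoint in $[n]^{2}$. Turning this joint constraint into the clean independence-style bound \eqref{eqn:p-probS}, and then balancing the assignment factor $m^{|\s{S}|}$ against the placement factor $(k/n)^{|\pi_{1}(\s{S})|+|\pi_{2}(\s{S})|}$ across the degenerate patterns where several of the $d$ cells in the tuple coincide or share rows/columns, is where the mild $(\log n)^{\s{O}(1)}$ slack enters; this slack is precisely what Conjecture~\ref{conj:1} allows to absorb through the choice $\s{D}=(\log n)^{1+o(1)}$.
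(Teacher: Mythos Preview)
Your overall framework coincides with the paper's: both start from the identity $\|\s{L}_{n}^{\leq\s{D}}\|_{\calH_{0}}^{2}=\sum_{d=0}^{\s{D}}\frac{\lambda^{2d}}{d!}\,\bE|\s{K}\cap\s{K}'|^{d}$ (this is Lemma~\ref{lemma:LowD} in the paper), and for the divergence direction both isolate the $d=1$ contribution---equivalently, your linear sum statistic---to obtain $\|\s{L}_{n}^{\leq\s{D}}\|_{\calH_{0}}^{2}\geq\lambda^{2}m^{2}k^{4}/n^{2}\to\infty$ when $\lambda\gg n/(mk^{2})$.

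The genuine gap is your upper-bound moment estimate $\bE|\s{K}\cap\s{K}'|^{d}\lesssim(\s{C}\log n)^{d}\mu^{d}$ with $\mu=m^{2}k^{4}/n^{2}$: it is \emph{false} whenever $\mu\ll 1$, which covers a large part of the hard region (e.g., $m=1$ and $k=o(\sqrt{n})$). Since $|\s{K}\cap\s{K}'|$ is a nonnegative integer, $\bE|\s{K}\cap\s{K}'|^{d}\geq\bE|\s{K}\cap\s{K}'|=\mu$ for every $d\geq 1$; in your own Stirling decomposition this is just the $s=1$ term, $\stirling{d}{1}\sum_{(i,j)}\pr((i,j)\in\s{K})^{2}=\mu$. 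Your target bound, on the other hand, behaves like $(\mu\log n)^{d}$ and is polynomially smaller than $\mu$ once $\mu<1/\log n$ and $d\geq 2$. The phenomenon you miss is that when $\mu<1$ the $d$-th moment of the overlap is governed by the event $\{|\s{K}\cap\s{K}'|\geq 1\}$ rather than by the $d$-th power of the mean, so the dominant configurations in your expansion have $|\s{S}|=1$, not $|\s{S}|=d$. The paper handles this via a different mechanism: it writes $|\s{K}\cap\s{K}'|=\sum_{\ell_{1},\ell_{2}}|\s{S}_{\ell_{1}}\cap\s{S}'_{\ell_{2}}|\cdot|\s{T}_{\ell_{1}}\cap\s{T}'_{\ell_{2}}|$, uses negative association plus stochastic domination by $\s{Binomial}(k,k/n)$, and then invokes a Bell-number bound for Binomial moments to get $\bE|\s{K}\cap\s{K}'|^{d}\leq\s{B}_{d}^{2}\max\{\mu,\mu^{d}\}$. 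The $\max$ is precisely what separates the two regimes your estimate conflates; the cases $\mu<1$ and $\mu\geq 1$ are then treated by separate arguments (the former using also that $\lambda\ll 1$ in the hard regime, cf.\ Table~\ref{tab:results}). Your combinatorial expansion is valid and could in principle be carried through to a bound of this shape, but the stated target inequality is incorrect and the argument as written does not close.
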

Together with Conjecture~\ref{conj:1}, Theorem~\ref{thm:gap} implies that if we take degree-$\log n$ polynomials as a proxy for all efficient algorithms, our calculations predict that an $n^{O(\log n)}$ algorithm does not exist when $\frac{1}{\sqrt{k}}\ll\lambda\ll \frac{n}{mk^2}$. This is summarized in the following corollary.
\begin{corollary}
Consider the detection problem in Definition~\ref{def:SD}, and assume that Conjecture~\ref{conj:1} holds. An $n^{O(\log n)}$ algorithm that achieves strong detection does not exist if $\lambda$ is such that $\frac{1}{\sqrt{k}}\ll\lambda\ll \frac{n}{mk^2}$.
\end{corollary}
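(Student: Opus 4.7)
The strategy is to compute $\|\s{L}_n^{\leq \s{D}}\|_{\calH_0}^2$ via the standard Hermite-expansion machinery and reduce it to a moment problem for the random overlap $Z \triangleq |\s{K}^\star \cap \s{K}^{\star\star}|$ of two independent draws $\s{K}^\star, \s{K}^{\star\star} \sim \s{Unif}(\calK_{k,m,n})$. Writing
\begin{align}
\s{L}_n(\s{X}) = \bE_{\s{K}^\star}\!\left[\prod_{(i,j) \in \s{K}^\star} \exp\!\p{\lambda \s{X}_{ij} - \tfrac{\lambda^2}{2}}\right],
\end{align}
expanding each Gaussian exponential factor in the probabilist's Hermite basis, and using orthogonality of the Hermite polynomials under $\calQ=\calN(0,1)$ together with Lemma~\ref{lem:Dmitriy} (cf.\ the analogous calculations in \cite{Hopkins18,Dmitriy19} for other planted problems), one arrives at the clean identity $\|\s{L}_n^{\leq \s{D}}\|_{\calH_0}^2 = \bE_{\s{K}^\star, \s{K}^{\star\star}}\!\left[\sum_{d=0}^{\s{D}} (\lambda^2 Z)^d/d!\right]$. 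The theorem thereby reduces to two moment statements about $Z$.

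For the divergent direction ($\lambda \gg n/(mk^2)$), I would retain only the $d=1$ term: by symmetry, every cell of $[n]^2$ lies in $\s{K}^\star$ with probability $mk^2/n^2$, so $\bE[Z] = n^2 \cdot (mk^2/n^2)^2 = m^2k^4/n^2$, and hence $\|\s{L}_n^{\leq \s{D}}\|_{\calH_0}^2 \geq 1 + \lambda^2 m^2 k^4/n^2 = \omega(1)$ by hypothesis. For the bounded direction ($\lambda \ll n/(mk^2)$), I would bound the higher moments combinatorially via
\begin{align}
\bE[Z^d] = \sum_{((i_1,j_1),\ldots,(i_d,j_d)) \in ([n]^2)^d} \bP\!\p{\{(i_t,j_t)\}_{t=1}^d \subseteq \s{K}^\star}^2,
\end{align}
grouping the ordered $d$-tuples by their row--column profile: the number $r \leq d$ of distinct rows, the number $c \leq d$ of distinct columns, and the way these rows and columns are distributed across the $m$ blocks of $\s{K}^\star$. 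For each profile the inclusion probability is of order $(mk/n)^{r+c}$ up to hypergeometric corrections, while the number of tuples with that profile is $\lesssim n^{r+c}\cdot d^{O(d)}$. After careful book-keeping the leading $r=c=d$ regime dominates and one would obtain $\lambda^{2d}\bE[Z^d]/d! \lesssim \p{c_1 \lambda^2 m^2 k^4/n^2}^d$ for a universal constant $c_1$, so the full series up to $\s{D}$ is dominated by a convergent geometric sum of total value $O(1)$.

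The main technical obstacle is the heavy tail of $Z$: it can reach $mk^2$ with tiny probability, and consequently the full moment generating function $\bE[\exp(\lambda^2 Z)]$ diverges in the regime of interest, so the truncation at degree $\s{D} = \Omega(\log n)$ is essential. Concretely, ordered tuples concentrating into very few distinct rows or columns carry a large per-tuple inclusion probability that must be balanced against their correspondingly small combinatorial count; this balancing is the delicate step and is where the hypothesis $\lambda \ll n/(mk^2)$ is used sharply, since it prevents any single profile from dominating the geometric decay. The auxiliary hypothesis $\lambda \gg 1/\sqrt{k}$ does not enter the low-degree calculation itself; it only demarcates the non-trivially non-empty hard regime where this computational lower bound provides evidence beyond what the information-theoretic bounds of Theorem~\ref{thm:lower} already give.
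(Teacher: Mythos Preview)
Your reduction to overlap moments and the $d=1$ lower bound are exactly what the paper does: it invokes its Lemma~\ref{lemma:LowD} (equivalently \cite[Thm.~2.6]{Dmitriy19}) to obtain $\norm{\s{L}_n^{\leq \s{D}}}_{\calH_0}^2 = \sum_{d=0}^{\s{D}} \tfrac{\lambda^{2d}}{d!}\,\bE|\s{K}\cap\s{K}'|^d$, and for the divergent direction keeps only $d=1$ with $\bE|\s{K}\cap\s{K}'| = m^2k^4/n^2$.

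The difference is in the upper bound on $\bE[Z^d]$. The paper does \emph{not} enumerate row/column profiles; it writes $Z = \sum_{\ell_1,\ell_2=1}^m |\s{S}_{\ell_1}\cap\s{S}'_{\ell_2}|\cdot|\s{T}_{\ell_1}\cap\s{T}'_{\ell_2}|$, uses that the hypergeometric summands are negatively associated and stochastically dominated by $\s{Binomial}(k,k/n)$, and then quotes a Bell-number moment bound from \cite{Berend10} to get $\bE[Z^d] \leq \s{B}_d^2\max\{m^2k^4/n^2,(m^2k^4/n^2)^d\}$, after which a ratio test on the terms $\tfrac{\lambda^{2d}}{d!}\s{B}_d^2(\cdots)^d$ finishes. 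This is shorter than a profile enumeration and delivers the correct $m$-dependence without any book-keeping.

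Your sketch has a concrete gap in the $m$-dependence. You claim the single-copy inclusion probability for a profile with $r$ rows and $c$ columns is of order $(mk/n)^{r+c}$, but already at $r=c=1$ this gives $m^2k^2/n^2$, whereas by symmetry the true single-cell probability is $|\s{K}^\star|/n^2 = mk^2/n^2$ --- off by a factor of $m$ (indeed, your own $d=1$ computation uses the correct value). Propagating the overcount and squaring for the two independent copies, your scheme yields $\tfrac{\lambda^{2d}}{d!}\bE[Z^d] \lesssim d^{O(d)}(\lambda^2 m^4k^4/n^2)^d$ rather than the claimed $(c_1\lambda^2 m^2k^4/n^2)^d$, so the geometric sum converges only when $\lambda \ll n/(m^2k^2)$, missing the threshold by a factor of $m$ whenever $m$ grows. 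The repair is to assign \emph{cells} (not rows and columns separately) to blocks and to track that cells sharing a coordinate must land in compatible blocks; this recovers the right power of $m$ but is exactly the heavy book-keeping you defer, and the paper's negative-association route bypasses it entirely. Your side remark that the untruncated moment generating function $\bE[e^{\lambda^2 Z}]$ diverges in the hard regime (since there $\lambda^2 \gg 1/k$, violating the condition used in the proof of Theorem~\ref{thm:lower}) is correct and is a useful sanity check that the degree-$\s{D}$ cutoff is doing real work.
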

These predictions agree precisely with the previously established statistical-computational tradeoffs in the previous subsections. A more explicit formula for the computational barrier which exhibits dependency on $\s{D}$ and $\lambda$ can be deduced from the proof of Theorem~\ref{thm:gap}; to keep the exposition simple we opted to present the refined result above. 

We note that numerical and theoretical evidence for the existence of computational-statistical gaps were observed in other statistical models that are also inspired by cryo-EM, including heterogeneous multi-reference alignment~\cite{boumal2018heterogeneous,wein2018statistical} and sparse multi-reference alignment~\cite{bendory2022sparse}.

\paragraph{Phase diagrams.} Using Theorems~\ref{thm:upper}--\ref{thm:gap} we are now in a position to draw the obtained phase diagrams for our detection problems. Specifically, treating $k$ and $\lambda$ as polynomials in~$n$, i.e., $k = \Theta(n^\beta)$ and $\lambda = \Theta(n^{-\alpha})$, for some $\alpha\in(0,1)$ and $\beta\in(0,1)$, we obtain the phase diagrams in Figure~\ref{fig:spcaphasediagram1}, for a fixed number of submatrices $m=O(1)$. Specifically,
\begin{enumerate}
    \item \emph{Computationally easy regime (blue region):} there is a polynomial-time algorithm for the detection task when $\alpha<2\beta-1$.
    \item \emph{Computationally hard regime (red region):} there is an inefficient algorithm for detection when $\alpha<\beta/2$ and $\alpha>2\beta-1$, but the problem is computationally hard (no polynomial-time algorithm exists) in the sense that the class of low-degree polynomials fails in this region.
    \item \emph{Statistically impossible regime:} detection is statistically impossible when $\alpha>\frac{\beta}{2}\vee (2\beta-1)$.
\end{enumerate}
\begin{figure}
     \hspace{1.2cm}\begin{subfigure}[b]{0.3\textwidth}
     \centering
    \begin{tikzpicture}[scale=1.5]
\tikzstyle{every node}=[font=\footnotesize]
\def\xmin{0}
\def\xmax{3.5}
\def\ymin{0}
\def\ymax{3.5}

\draw[->] (\xmin,\ymin) -- (\xmax,\ymin) node[right] {$\beta$};
\draw[->] (\xmin,\ymin) -- (\xmin,\ymax) node[above] {$\alpha$};

\node at (3, 0) [below] {$1$};
\node at (1.5, 0) [below] {$\frac{1}{2}$};
\node at (0, 3) [left] {$1$};
\node at (0, 1) [left] {$\frac{1}{3}$};
\node at (2, 0) [below] {$\frac{2}{3}$};
\node at (0, 0) [left] {$0$};

\filldraw[fill=gray!25, draw=black] (0, 0) -- (0, 3) -- (3, 3) -- (2, 1) -- (0,0);
\filldraw[fill=red!25, draw=black] (0, 0) -- (2, 1) -- (1.5, 0) -- (0, 0);
\filldraw[fill=blue!25, draw=black] (1.5, 0) -- (3, 3) -- (3, 0) -- (1.5, 0);
\node at (0.9, 1.6) {``$\s{\mathbf{Statistically}}$};
\node at (1.1, 1.3) {$\s{\mathbf{Impossible}}$"};
\node at (2.5, 0.8) {``$\s{\mathbf{Easy}}$"};
\node at (0.99, 0.2) {``$\s{\mathbf{Hard}}$"};
\draw [dashed] (0,1) -- (2,1);
\draw [dashed] (2,0) -- (2,1);
\end{tikzpicture}
     \caption{$m=O(1)$}
\label{fig:spcaphasediagram1}
     \end{subfigure}\quad\quad\quad\quad\quad\quad\quad\quad\begin{subfigure}[b]{0.3\textwidth}
     \centering
     \begin{tikzpicture}[scale=1.5]
\tikzstyle{every node}=[font=\footnotesize]
\def\xmin{0}
\def\xmax{3.5}
\def\ymin{0}
\def\ymax{4.25}

\draw[->] (\xmin,\ymin) -- (\xmax,\ymin) node[right] {$\beta$};
\draw[->] (\xmin,\ymin) -- (\xmin,\ymax) node[above] {$\alpha$};

\node at (3, 0) [below] {$1$};
\node at (9/8, 0) [below] {$\frac{3}{8}$};
\node at (0, 3.75) [left] {$\frac{5}{4}$};
\node at (0, 0.75) [left] {$\frac{1}{4}$};
\node at (1.5, 0) [below] {$\frac{1}{2}$};
\node at (0, 0) [left] {$0$};

\filldraw[fill=gray!25, draw=black] (0, 0) -- (0, 3.75) -- (3, 3.75) -- (1.5, 0.75) -- (0,0);
\filldraw[fill=red!25, draw=black] (0, 0) -- (1.5, 0.75) -- (9/8, 0) -- (0, 0);
\filldraw[fill=blue!25, draw=black] (9/8, 0) -- (3, 3.75) -- (3, 0) -- (9/8, 0);
\node at (0.9, 2.2) {``$\s{\mathbf{Statistically}}$};
\node at (1.1, 1.9) {$\s{\mathbf{Impossible}}$"};
\node at (2.3, 0.8) {``$\s{\mathbf{Easy}}$"};
\node[rotate=30] at (0.95, 0.3) {``$\s{\mathbf{Hard}}$"};
\draw [dashed] (0,0.75) -- (1.5,0.75);
\draw [dashed] (1.5,0) -- (1.5,0.75);
\end{tikzpicture}
\caption{$m=\Theta(n^{1/4})$}
\label{fig:spcaphasediagram2}
     \end{subfigure}
     \caption{Phase diagrams for submatrix detection as a function of $k=\Theta(n^{\beta})$, and $\lambda= \Theta(n^{-\alpha})$, for $m=O(1)$ and $m=\Theta(n^{1/4})$.}
        \label{fig:subDetection}
\end{figure}
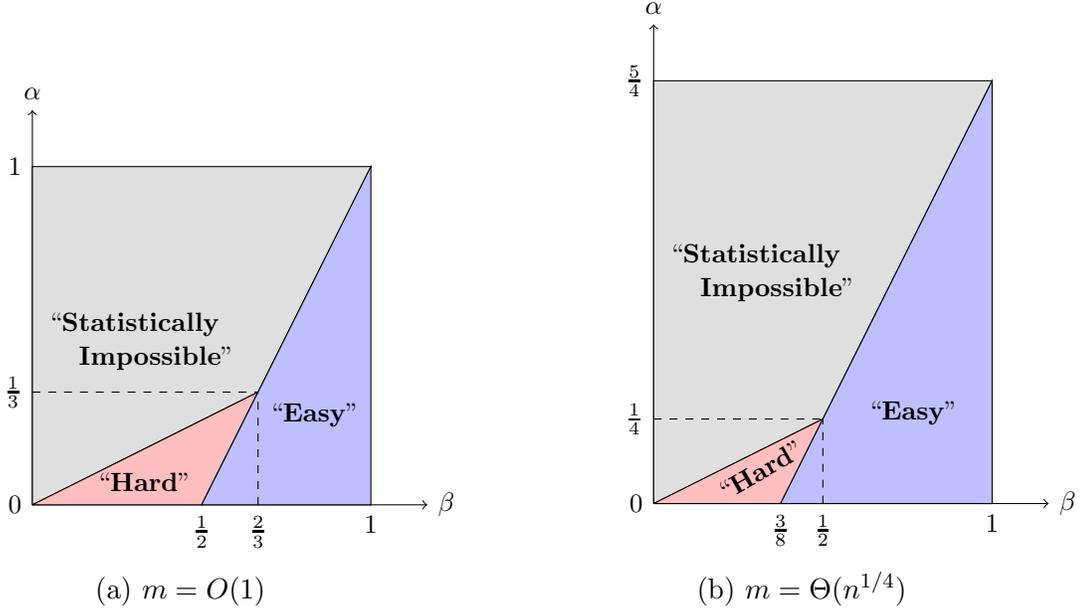

When the number of submatrices grows with $n=\omega(1)$, we get different phase diagrams depending on its value. For example, if $m=\Theta(n^{1/4})$, we get Figure~\ref{fig:spcaphasediagram2}. Specifically,
\begin{enumerate}
    \item \emph{Computationally easy regime (blue region):} there is a polynomial-time algorithm for the detection task when $\alpha<2\beta-\frac{3}{4}$.
    \item \emph{Computationally hard regime (red region):} there is an inefficient algorithm for detection when $\alpha<\beta/2$ and $\alpha>2\beta-\frac{3}{4}$, but the problem is computationally hard (no polynomial-time algorithm exists) in the sense that the class of low-degree polynomials fails in this region.
    \item \emph{Statistically impossible regime:} detection is statistically impossible when $\alpha>\frac{\beta}{2}\vee (2\beta-3/4)$.
\end{enumerate}

Finally, for the consecutive problem, we get the phase diagram in Figure~\ref{fig:spcaphasediagram3}, independently of the value of $m$. Here, there are only two regions where the problem is either statistically impossible or easy to solve. 

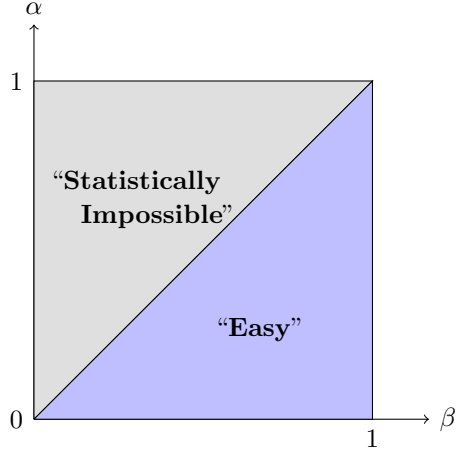
\begin{figure}[t]
\centering

\begin{tikzpicture}[scale=1.5]
\tikzstyle{every node}=[font=\footnotesize]
\def\xmin{0}
\def\xmax{3.5}
\def\ymin{0}
\def\ymax{3.5}

\draw[->] (\xmin,\ymin) -- (\xmax,\ymin) node[right] {$\beta$};
\draw[->] (\xmin,\ymin) -- (\xmin,\ymax) node[above] {$\alpha$};

\node at (3, 0) [below] {$1$};
\node at (0, 3) [left] {$1$};
\node at (0, 0) [left] {$0$};

\filldraw[fill=gray!25, draw=black] (0, 0) -- (0, 3) -- (3, 3) --(0,0);
\filldraw[fill=blue!25, draw=black] (0, 0) -- (3, 0) -- (3, 3) -- (0, 0);
\node at (0.9, 2.1) {``$\s{\mathbf{Statistically}}$};
\node at (1.1, 1.8) {$\s{\mathbf{Impossible}}$"};
\node at (2, 0.8) {``$\s{\mathbf{Easy}}$"};
\end{tikzpicture}

\caption{Phase diagram for consecutive submatrix detection, as a function of $k=\Theta(n^{\beta})$, and $\lambda= \Theta(n^{-\alpha})$, for any $m$.}
\label{fig:spcaphasediagram3}
\end{figure}%

\subsection{The recovery problem}

\paragraph{Upper bounds.} We start by presenting our upper bounds for both exact and correlated types of recovery for the consecutive problem in Definition~\ref{def:gconsrec}. To that end, we propose the following recovery algorithm. It can be shown that the maximum-likelihood (ML) estimator, minimizing the error probability, is given by (see Subsection~\ref{subsec:MLEDerivation} for a complete derivation),
\begin{align}
    \hat{\s{K}}_{\s{ML}}(\s{X}) = \argmax_{\s{K}\in\calK_{k,m,n}^{\s{con}}}\sum_{(i,j)\in\s{K}}\s{X}_{ij}.\label{eqn:MLest0}
\end{align}
The computational complexity of the exhaustive search in \eqref{eqn:MLest0} is of order $n^{2m}$.
Thus, for $m = O(1)$, the ML estimator runs in polynomial time, and thus, is efficient. However, if $m=\omega(1)$ then the exhaustive search is not efficient anymore. Nonetheless, the following straightforward modification of \eqref{eqn:MLest0} provably achieves the same asymptotic performance of the ML estimator above, and at the same time computationally efficient. 

Before we present this algorithm, we make a simplifying technical assumption on the possible set of planted submatrices, and then explain how this assumption can be removed. \emph{We assume that each pair of submatrices in the underlying planted submatrices $\s{K}^\star$ are at least $k$ columns and rows far way}. In other words, there are at least $k$ columns and $k$ rows separating any pair of submatrices in $\s{K}^\star$. Similar assumptions are frequently taken when analyzing statistical models inspired by cryo-EM, see, for example~\cite{bendory2018toward}. We will refer to the above as the \emph{separation assumption}.

Our recovery algorithm works as follows: in the $\ell\in[m]$ step, we find the ML estimate of a single submatrix using,
\begin{align}
    \hat{\s{K}}_{\ell}(\s{X}^{(\ell)}) = \argmax_{\s{K}\in\calK_{k,1,n}^{\s{con}}}\sum_{(i,j)\in\s{K}}\s{X}^{(\ell)}_{ij},\label{eqn:MLPeel}
\end{align}
where $\s{X}^{(\ell)}$ is defined recursively as follows: $\s{X}^{(1)}\triangleq\s{X}$, and for $\ell\geq2$,
\begin{align}
    \s{X}^{(\ell)} = \s{X}^{(\ell-1)}\odot\s{E}(\hat{\s{K}}_{\ell-1}),
\end{align}
where $\s{E}(\hat{\s{K}}_{\ell-1})$ is an $n\times n$ matrix such that $[\s{E}(\hat{\s{K}}_{\ell-1})]_{ij} = -\infty$, for $(i,j)\in\hat{\s{K}}_{\ell-1}$, and $[\s{E}(\hat{\s{K}}_{\ell-1})]_{ij} = 1$, otherwise. To wit, in each step of the algorithm we ``peel" the set of estimated indices (or, estimated submatrices) in previous steps from the search space. This is done by setting the corresponding entries of $\s{X}$ to $-\infty$ so that the sum in \eqref{eqn:MLPeel} will not be maximized by previously chosen sets of indices. We denote by $\hat{\s{K}}_{\s{peel}}(\s{X}) = \{\hat{\s{K}}_{\ell}\}_{\ell=1}^{m}$ the output of the above algorithm.
    \begin{rmk}
Without the assumption above, the fact that the peeling algorithm succeeds is not trivial. If, for example, the chosen planted matrices are such that they include a pair of adjacent matrices, then it could be the case that at some step of the peeling algorithm, the estimated set of indices corresponds to a certain submatrix of the union of those adjacent matrices. However, one can easily modify the peeling algorithm, drop the assumption above, and obtain the same statistical guarantees stated below. Indeed, consider the following modification to the peeling routine in Algorithm~\ref{algo:peel}.

\begin{figure}[ht!]
  \centering
  \begin{minipage}{.7\linewidth}
\begin{algorithm}[H]
  \caption{\texttt{Modified Peeling}}\label{algo:peel}
  \begin{enumerate}[topsep=0pt,itemsep=-1ex,partopsep=1ex,parsep=1ex]
    \item 
    \textbf{Initialize} $\s{flag}\leftarrow0$, $\ell\leftarrow1$, $\mathscr{K}\leftarrow\emptyset$, $\s{A} = \mathbf{0}_{n\times n}$.
    \item
    \textbf{while} $\s{flag}=0$
    \begin{enumerate}[topsep=0pt,itemsep=-1ex,partopsep=1ex,parsep=1ex]
     \item $\hat{\s{K}}_{\ell}(\s{X}) \leftarrow \argmax_{\s{K}\in\calK_{k,1,n}^{\s{con}}\setminus\mathscr{K}}\sum_{(i,j)\in\s{K}}\s{X}_{ij}.$
     \item $\s{A}_{ij}\leftarrow1$, for $(i,j)\in\hat{\s{K}}_{\ell}(\s{X})$, and $\s{A}_{ij}\leftarrow0$, otherwise.
     \item $\mathscr{K}\leftarrow\mathscr{K}\cup\hat{\s{K}}_{\ell}(\s{X}).$
     \item \textbf{if} $\innerP{\Jb,\s{A}}=mk^2$
     
        \vspace{-1ex}
        $\s{flag}\leftarrow1.$
     \item \textbf{else} 

     \vspace{-1ex}
     $\ell\leftarrow \ell+1.$
    \end{enumerate}

  \item \textbf{Output} $\s{A}$. 
  \end{enumerate}
\end{algorithm}
\end{minipage}
\end{figure}

The key idea is as follows. In the first step, we find the $k\times k$ submatrix in $\s{X}$ with the maximum sum of entries. We denote this submatrix by $\hat{\s{K}}_1$. This is exactly the same first step of the peeling algorithm. In the second step, we again search for the $k\times k$ submatrix in~$\s{X}$ with the maximum sum of entries, but of course, remove $\hat{\s{K}}_1$ from the search space. More generally, in the $\ell$-th step, we again search for the $k\times k$ submatrix in $\s{X}$ with maximum sum of entries, but remove $\mathscr{K}=\cup_{i=1}^{\ell-1}\hat{\s{K}}_i$ from the search space. We terminate this process once $\cup_{i=1}^{\ell}\hat{\s{K}}_i\in\calK_{k,m,n}^{\s{con}}$, i.e., the union of the estimated sets of matrices can cast as a proper set of planted submatrices. This can easily be checked by forming the matrix $\s{A}$ in Step 2(b), and checking the conditions in Step 2(d). If the actually planted submatrices are not adjacent, then this will be the case (under the conditions in the theorem below) after $\ell=m$ steps, with high probability. Otherwise, if at least two planted submatrices are adjacent, then while $\ell$ might be larger than $m$ it is bounded by $n^2$, and it is guaranteed that such a union exists. Once we find such a union, it is easy to revert the set of $m$ consecutive $k\times k$ submatrices from $\s{A}$.    
\end{rmk}
We have the following result.
\begin{theorem}[Recovery upper bounds]\label{thm:recupper}
Consider the recovery problem in Definition~\ref{def:gconsrec}, and let $\s{C}$ be a universal constant. Then, we have the following set of bounds:
    \begin{enumerate}
        \item (ML Exact Recovery) Consider the ML estimator in \eqref{eqn:MLest0}. If
        \begin{align}
    \liminf_{n\to\infty}\frac{\lambda}{\sqrt{\s{C}k^{-1}\log n}}>1,
\end{align}
        then exact recovery is possible.
        \item (Peeling Exact Recovery) Consider the peeling estimator in \eqref{eqn:MLPeel}, and assume that the separation assumption holds. Then, if 
        \begin{align}
    \liminf_{n\to\infty}\frac{\lambda}{\sqrt{\s{C}k^{-1}\log n}}>1,
\end{align}
        then exact recovery is possible.

    \item (Peeling Correlated Recovery) Consider the peeling estimator in \eqref{eqn:MLPeel}, and assume that the separation assumption holds. If
    \begin{align}
    \liminf_{n\to\infty}\frac{\lambda}{\sqrt{\s{C}k^{-2}\log n}}>1,
\end{align}
    then correlated recovery is possible.
    \end{enumerate}
\end{theorem}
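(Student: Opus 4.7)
My plan is a union bound over all alternative configurations $\s{K}\in\calK_{k,m,n}^{\s{con}}\setminus\{\s{K}^\star\}$. For any such $\s{K}$, let $r = r(\s{K},\s{K}^\star) = |\s{K}^\star\setminus \s{K}| = |\s{K}\setminus \s{K}^\star|$ be the per-side Hamming distance (equal because both sets have cardinality $mk^2$). Under $\calH_1$, the difference $\sum_{(i,j)\in \s{K}}\s{X}_{ij} - \sum_{(i,j)\in \s{K}^\star}\s{X}_{ij}$ restricts to the symmetric difference and is Gaussian with mean $-r\lambda$ and variance $2r$, so the usual Gaussian tail bound yields
\begin{align*}
    \pr\!\p{\textstyle\sum_{(i,j)\in \s{K}}\s{X}_{ij}\geq \sum_{(i,j)\in \s{K}^\star}\s{X}_{ij}} \leq \exp(-r\lambda^2/4).
\end{align*}
The combinatorial core is counting configurations at each distance $r$. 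The key observation is that shifting a single submatrix by $(a,b)$ with $|a|+|b|\geq 1$ contributes at least $k$ to $r$, while moving it to a non-overlapping position contributes exactly $k^2$. Hence if $\ell$ submatrices of $\s{K}^\star$ are displaced in $\s{K}$, then $\ell\leq r/k$, and the number of such $\s{K}$ is bounded by $\binom{m}{\ell}\cdot (n-k+1)^{2\ell}\cdot 2^r\leq (mn^2)^{r/k}\cdot 2^r$ (the $2^r$ factor counts compositions of $r$ into $\ell$ parts each $\geq k$). Crucially, this bound has no free factor of $n^{2m}$: the exponent scales with $r/k$ rather than $m$, which is why $m$ does not enter the final threshold. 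Plugging in, the union bound becomes $\sum_{r\geq k}\exp\!\p{r(3\log n/k + \log 2 - \lambda^2/4)}$, which is $o(1)$ once $\lambda^2 > \s{C}\log n/k$ for any sufficiently large constant $\s{C}$.

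\textbf{Part 2 (Peeling Exact Recovery).} I proceed by induction on the peeling step $\ell$, showing that conditional on the first $\ell-1$ iterations correctly identifying $\ell-1$ of the true submatrices, the $\ell$-th iteration does the same with the required probability. Here the separation assumption is essential: once a true submatrix has been peeled, the remaining $m-\ell+1$ true submatrices are untouched, and moreover every consecutive $k\times k$ window that overlaps the peeled region contains at least one $-\infty$ entry and is thus automatically eliminated. Consequently the $\ell$-th maximisation reduces to a single-submatrix localisation problem over a restricted search space of consecutive windows disjoint from the peeled region. Applying the same Gaussian-tail plus counting argument as in Part 1, but now with the role of $\s{K}^\star$ played by one of the remaining true submatrices and with the effective number of submatrices equal to one, yields the same sufficient condition $\lambda\gtrsim\sqrt{\log n/k}$. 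A final union bound over the $m\leq n^2$ peeling steps changes only the constant $\s{C}$.

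\textbf{Part 3 (Peeling Correlated Recovery).} The threshold here is $\lambda\gtrsim\sqrt{\log n}/k$, which is the detection threshold of the $\s{CSD}$ scan test in \eqref{eqn:secantestTT2}. The idea is to quantify the overlap of each peeled candidate with $\s{K}^\star = \bigcup_i \s{K}^\star_i$. First, by a standard Gaussian maximum bound over $|\calK_{k,1,n}^{\s{con}}|\leq n^2$ candidates, with probability $1-o(1)$ \emph{every} consecutive $k\times k$ window $\s{K}$ satisfies
\begin{align*}
    \textstyle\sum_{(i,j)\in\s{K}}\s{X}_{ij} \leq s(\s{K})\lambda + \s{C}'k\sqrt{\log n},
\end{align*}
where $s(\s{K})=|\s{K}\cap \s{K}^\star|$. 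On the other hand, at the $\ell$-th peeling step the picked candidate $\hat{\s{K}}_\ell$ attains a value at least as large as that of any remaining true submatrix, which is $\geq k^2\lambda - O(k\sqrt{\log n})$ with high probability. Combining, $s(\hat{\s{K}}_\ell)\geq k^2 - O(k\sqrt{\log n}/\lambda)$, and for $\lambda\geq\s{C}\sqrt{\log n}/k$ with $\s{C}$ large this is at least $(1-\epsilon)k^2$. The separation assumption ensures that each $\hat{\s{K}}_\ell$ overlaps at most one true submatrix, and the peeling step guarantees the $\hat{\s{K}}_\ell$ are mutually disjoint, so the $m$ picks hit $m$ distinct true submatrices. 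Summing gives $\s{overlap}(\s{K}^\star,\hat{\s{K}})\geq (1-\epsilon)mk^2$, which is correlated recovery.

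\textbf{Main obstacle.} The delicate step is the counting in Part 1: a naive $|\calK_{k,m,n}^{\s{con}}|\leq n^{2m}$ union bound produces a spurious factor of $m$ in the threshold. The resolution, that any alternative at Hamming distance $r$ can displace at most $r/k$ submatrices, must be combined carefully with the compositional count of ways to split $r$ across displaced blocks. A secondary subtlety in Part 3 is ensuring that the uniform tail bound $\sum_{(i,j)\in\s{K}}\s{X}_{ij}\leq s(\s{K})\lambda + \s{C}'k\sqrt{\log n}$ holds simultaneously over all windows; this follows from a Borell-TIS style argument on Gaussian suprema, but the constants must be tracked to ensure the resulting overlap fraction is a positive constant rather than going to zero.
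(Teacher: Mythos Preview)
Your overall strategy in all three parts matches the paper's proof closely: the same union bound over alternatives, the same Gaussian tail estimate on the symmetric difference, and the same inductive analysis of the peeling iterations under the separation assumption. Parts~2 and~3 are correct as sketched and essentially coincide with the paper's arguments.

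There is, however, a genuine gap in the counting step of Part~1. Your bound on the number of configurations at Hamming distance $r$ carries a factor $2^r$, which you attribute to ``compositions of $r$ into $\ell$ parts each $\geq k$.'' This factor is both unnecessary and fatal. It is unnecessary because once you have fixed which $\ell\leq r/k$ submatrices are displaced and where each of them lands (at most $\binom{m}{\ell}(n-k+1)^{2\ell}$ choices), the value of $r$ is already determined by those placements; there is no separate composition to enumerate. It is fatal because in the regime $k=n^\beta$ with $\beta\in(0,1)$ one has $\log n/k\to 0$, so the hypothesis $\lambda^2>\s{C}\log n/k$ is compatible with $\lambda\to 0$; then $\lambda^2/4<\log 2$ for all large $n$, and your summand $\exp\p{r(3\log n/k+\log 2-\lambda^2/4)}$ is exponentially \emph{growing} in $r$, so the series diverges. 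The fix is simply to drop the $2^r$: summing $\binom{m}{\ell}(n-k+1)^{2\ell}$ over $\ell\leq r/k$ already gives $N_r\leq (r/k)\cdot(mn^2)^{r/k}\leq n^{O(r/k)}$, which is precisely the counting bound the paper invokes (via an external lemma), and with it your argument goes through verbatim.
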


\paragraph{Lower bounds.} The following result shows that under certain conditions, exact and correlated recoveries are impossible.
\begin{theorem}[Information-theoretic recovery lower bounds]\label{thm:recLower} Consider the recovery problem in Definition~\ref{def:gconsrec}. Then:
\begin{enumerate}
    \item If $\lambda<\s{C}\sqrt{\frac{\log m}{k}}$, exact recovery is impossible, i.e., 
    $$
\inf_{\hat{\s{K}}}\sup_{\s{K}^\star\in\calK_{k,m,n}^{\s{con}}}\pr[\hat{\s{K}}(\s{X})\neq\s{K}^\star]>\frac{1}{2},
    $$
    where the infimum ranges over all measurable functions of the matrix $\s{X}$.
    \item If $\lambda = o(k^{-1})$, correlated recovery is impossible, i.e., $\sup_{\s{K}^\star\in\calK_{k,m,n}^{\s{con}}}\s{overlap}(\s{K}^\star,\hat{\s{K}}) = o(mk^2)$.
\end{enumerate}
\end{theorem}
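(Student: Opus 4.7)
My plan is to apply Fano's inequality to a carefully chosen sub-ensemble of $m{+}1$ nearby configurations in $\calK_{k,m,n}^{\s{con}}$. Fix a baseline $\s{K}_0\in\calK_{k,m,n}^{\s{con}}$ whose $m$ disjoint consecutive $k\times k$ submatrices are placed with at least one row of slack around each; for every $\ell\in[m]$, let $\s{K}_\ell$ be obtained from $\s{K}_0$ by translating only the $\ell$-th submatrix by one row (the slack guarantees $\s{K}_\ell\in\calK_{k,m,n}^{\s{con}}$). Then $|\s{K}_0\triangle\s{K}_\ell|=2k$, and by the Gaussian KL formula,
\begin{align*}
D(P_{\s{K}_\ell}\|P_{\s{K}_0})\ =\ \tfrac{\lambda^2}{2}|\s{K}_0\triangle\s{K}_\ell|\ =\ k\lambda^2.
\end{align*}
Placing the uniform prior on $V\in\{0,1,\dots,m\}$ with $\s{X}\mid V=\ell\sim P_{\s{K}_\ell}$, convexity of KL gives $I(V;\s{X})\leq \frac{1}{m+1}\sum_{\ell=0}^m D(P_{\s{K}_\ell}\|P_{\s{K}_0})\leq k\lambda^2$, so Fano's inequality yields
\begin{align*}
\frac{1}{m+1}\sum_{\ell=0}^m\pr_\ell[\hat{\s{K}}\neq\s{K}_\ell]\ \geq\ 1-\frac{k\lambda^2+\log 2}{\log(m+1)},
\end{align*}
which strictly exceeds $1/2$ as soon as $\lambda<\s{C}\sqrt{\log m/k}$ for a small enough universal constant~$\s{C}$. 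Since $\sup_{\s{K}^\star\in\calK_{k,m,n}^{\s{con}}}\pr[\hat{\s{K}}\neq\s{K}^\star]$ dominates the average over any finite sub-ensemble, the claim follows.

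\textbf{Part 2 (Correlated recovery lower bound).} My plan is a reduction from correlated recovery to the detection problem via a Gaussian splitting device, combined with Theorem~\ref{thm:lower}(2) and a monotonicity-in-SNR argument. Given $\s{X}$ sampled at SNR $\lambda$, introduce an independent $\s{W}\sim\calN(0,1)^{\otimes n\times n}$ and define $\s{X}_1=(\s{X}+\s{W})/\sqrt{2}$, $\s{X}_2=(\s{X}-\s{W})/\sqrt{2}$; a direct covariance computation shows that, conditionally on $\s{K}^\star$, $\s{X}_1$ and $\s{X}_2$ are two independent samples from the model at SNR $\lambda/\sqrt{2}$. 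Suppose, toward a contradiction, that some estimator $\hat{\s{K}}$ achieves correlated recovery at SNR $\lambda/\sqrt{2}$ with constant $\epsilon>0$, so $\bE|\hat{\s{K}}(\s{X}_1)\cap\s{K}^\star|\geq\epsilon mk^2$. Applying Markov's inequality to $mk^2-|\hat{\s{K}}\cap\s{K}^\star|\in[0,mk^2]$ yields $\pr[|\hat{\s{K}}\cap\s{K}^\star|\geq\tfrac{\epsilon}{2}mk^2]\geq\tfrac{\epsilon}{2-\epsilon}$. Form the test statistic $T=\sum_{(i,j)\in\hat{\s{K}}(\s{X}_1)}\s{X}_{2,ij}$; conditionally on $(\hat{\s{K}},\s{K}^\star)$ it is $\calN(\tfrac{\lambda}{\sqrt{2}}|\hat{\s{K}}\cap\s{K}^\star|,mk^2)$ under $\calH_1$ and $\calN(0,mk^2)$ under $\calH_0$. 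Thresholding at $\tau=(\epsilon/4)\lambda mk^2/\sqrt{2}$ and using standard Gaussian tail bounds gives a detection test whose risk is at most $1-\Theta(\epsilon)+o(1)$ whenever $\lambda k\sqrt{m}\to\infty$. However, Theorem~\ref{thm:lower}(2) forces the minimum achievable risk at any SNR $=o(k^{-1})$ to equal $1-o(1)$; this is a contradiction throughout the window $k^{-1}m^{-1/2}\ll\lambda\ll k^{-1}$. The extension to the full regime $\lambda=o(k^{-1})$ follows from monotonicity: from a sample at SNR $\lambda_1$ one can produce a sample at any $\lambda_0<\lambda_1$ via $(\lambda_0/\lambda_1)\s{X}+\s{W}'$ with $\s{W}'\sim\calN(0,1-(\lambda_0/\lambda_1)^2)$, so correlated recovery impossibility at some $\lambda^\star=\Theta(k^{-1})$ propagates to every smaller SNR.

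\textbf{Main obstacle.} The subtle point in Part~2 is that correlated recovery is defined only through the \emph{first moment} of the overlap $|\hat{\s{K}}\cap\s{K}^\star|$, whereas the splitting-based test requires at least constant-probability control of the realized overlap. Markov's inequality applied to the bounded complement $mk^2-|\hat{\s{K}}\cap\s{K}^\star|$ bridges this gap at the cost of supplying only a constant, and thus the reduction produces a detector whose risk is bounded \emph{below} $1$ rather than tending to $0$; this is nonetheless enough because Theorem~\ref{thm:lower}(2) is the sharp statement that the optimal risk is $1-o(1)$, so any $(1-\Theta(\epsilon))$-risk detector already yields a contradiction. Together with the monotonicity reduction, this delivers the full range $\lambda=o(k^{-1})$ of the claim.
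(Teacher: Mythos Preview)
Your Part~1 is correct and follows essentially the same route as the paper: Fano's inequality over a packing of nearby configurations in $\calK_{k,m,n}^{\s{con}}$. Your packing (one-row shifts of each block) is in fact slightly cleaner than the paper's, which introduces an auxiliary integer parameter $\alpha$ and shifts each block by $1,\dots,\alpha$ columns to obtain $\alpha m+1$ hypotheses; both yield the same $\lambda<\s{C}\sqrt{\log m/k}$ barrier.

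Part~2, however, has a genuine gap. Your splitting-based detector requires the signal-to-noise ratio of the test statistic $T$ to diverge, i.e., $\lambda\sqrt{mk^2}\to\infty$, equivalently $\lambda k\sqrt{m}\to\infty$. Combined with the hypothesis $\lambda=o(k^{-1})$ needed to invoke Theorem~\ref{thm:lower}(2), this forces $m\to\infty$. When $m=O(1)$ the contradiction window $k^{-1}m^{-1/2}\ll\lambda\ll k^{-1}$ is empty, and your monotonicity step cannot rescue the argument: you never establish impossibility at any $\lambda^\star\geq\lambda$, so there is nothing to propagate downward. Concretely, for $m=1$ and $\lambda=o(k^{-1})$, the conditional mean of $T$ under $\calH_1$ is at most $\lambda k^2/\sqrt{2}$ while its standard deviation is $k$, so the standardized mean is $O(\lambda k)=o(1)$ and the thresholded test has risk $1-o(1)$---no contradiction.

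The paper avoids this by taking a different route for Part~2: it uses the I-MMSE relation. From the detection lower bound one gets that $d_{\s{KL}}(\pr_{\calH_1}\|\pr_{\calH_0})$ (and hence the mutual information $I(\s{M};\s{X})$) stays bounded along the whole regime $\lambda=o(k^{-1})$; the I-MMSE identity then forces $\frac{1}{mk^2}\bE\|\bE[\s{M}\mid\s{X}]\|_{\s{F}}^2\to0$, after which Cauchy--Schwarz gives $\bE\langle \s{M},\hat{\s{M}}\rangle=o(mk^2)$ for every estimator, uniformly in $m$. Your reduction can be made to work when $m\to\infty$, but to cover bounded $m$ you would need either this MMSE argument or some other mechanism that does not rely on a single linear test over $mk^2$ coordinates.
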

Thus, similarly to the detection problem, the consecutive recovery problem is either statistically impossible or easy to solve. The corresponding phase diagram for exact and correlated types of recoveries is given in Figure~\ref{fig:spcaphasediagram4}. Roughly speaking, exact recovery is possible if $\lambda =\omega(k^{-1/2})$ and impossible if $\lambda =o(k^{-1/2})$. Correlated recovery is possible if $\lambda =\omega(k^{-1})$ and impossible if $\lambda =o(k^{-1})$. 

A few remarks are in order. First, note that there is a gap between detection and exact recovery; the barrier for $\lambda$ for the former is at $k^{-1}$, while for the latter at $k^{-1/2}$. In the context of cryo-EM, this indicates a gap between the ability to detect the existence of particle images in the data set, and the ability to perform successful particle picking (exact recovery). Recently, new computational methods were devised to elucidate molecular structures without  particle picking, thus bypassing the limit of exact recovery,  allowing constructing structures in very low SNR environments, e.g., ~\cite{bendory2018toward,kreymer2022two,kreymer2023stochastic}. This in turn opens the door to  recovering small molecular structures that induce low SNR~\cite{henderson1995potential}. Second, there is no gap between detection and correlated recovery, and these different tasks are asymptotically statistically the same. The same gap exists between correlated and exact recoveries, implying that exact recovery is strictly harder than correlated recovery. 

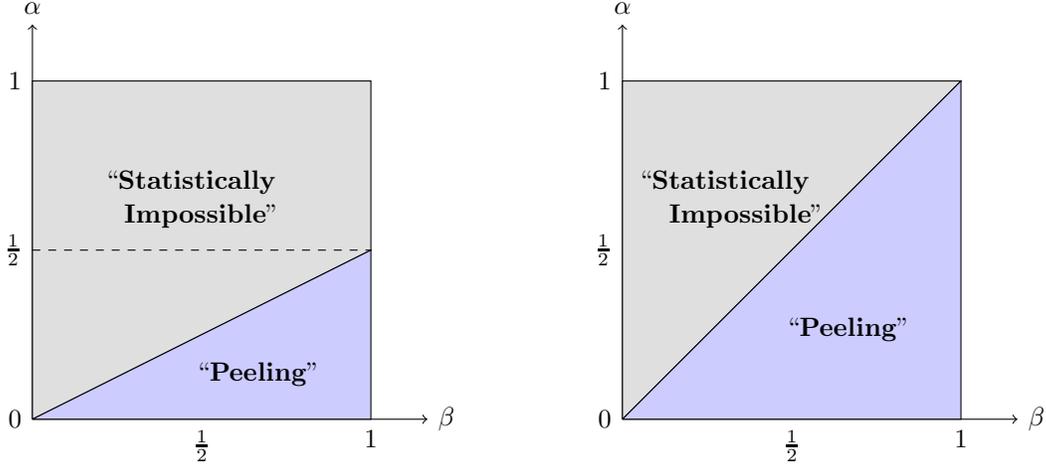
\begin{figure}
     \hspace{1cm}\begin{subfigure}[b]{0.3\textwidth}
     \centering

     \begin{tikzpicture}[scale=1.5]
\tikzstyle{every node}=[font=\footnotesize]
\def\xmin{0}
\def\xmax{3.5}
\def\ymin{0}
\def\ymax{3.5}

\draw[->] (\xmin,\ymin) -- (\xmax,\ymin) node[right] {$\beta$};
\draw[->] (\xmin,\ymin) -- (\xmin,\ymax) node[above] {$\alpha$};

\node at (3, 0) [below] {$1$};
\node at (0, 3) [left] {$1$};
\node at (0, 0) [left] {$0$};
\node at (1.5, 0) [below] {$\frac{1}{2}$};
\node at (0, 1.5) [left] {$\frac{1}{2}$};

\filldraw[fill=gray!25, draw=black] (0, 0) -- (0, 3) -- (3, 3) --(3,1.5)--(0,0);
\filldraw[fill=blue!20, draw=black] (0, 0) -- (3, 1.5) -- (3, 0) -- (0, 0);
\node at (1.4, 2.1) {``$\s{\mathbf{Statistically}}$};
\node at (1.5, 1.8) {$\s{\mathbf{Impossible}}$"};
\node at (2, 0.4) {``$\s{\mathbf{Peeling}}$"};
\draw [dashed] (0,1.5) -- (3,1.5);
\end{tikzpicture}
     \end{subfigure}\quad\quad\quad\quad\quad\quad\quad\begin{subfigure}[b]{0.3\textwidth}
     \centering
\begin{tikzpicture}[scale=1.5]
\tikzstyle{every node}=[font=\footnotesize]
\def\xmin{0}
\def\xmax{3.5}
\def\ymin{0}
\def\ymax{3.5}

\draw[->] (\xmin,\ymin) -- (\xmax,\ymin) node[right] {$\beta$};
\draw[->] (\xmin,\ymin) -- (\xmin,\ymax) node[above] {$\alpha$};

\node at (3, 0) [below] {$1$};
\node at (0, 3) [left] {$1$};
\node at (0, 0) [left] {$0$};
\node at (1.5, 0) [below] {$\frac{1}{2}$};
\node at (0, 1.5) [left] {$\frac{1}{2}$};

\filldraw[fill=gray!25, draw=black] (0, 0) -- (0, 3) -- (3, 3) --(0,0);
\filldraw[fill=blue!20, draw=black] (0, 0) -- (3, 0) -- (3, 3) -- (0, 0);
\node at (0.9, 2.1) {``$\s{\mathbf{Statistically}}$};
\node at (1.1, 1.8) {$\s{\mathbf{Impossible}}$"};
\node at (2, 0.8) {``$\s{\mathbf{Peeling}}$"};
\end{tikzpicture}

     \end{subfigure}
        \caption{Phase diagram for consecutive submatrix exact recovery (left) and correlated recovery (right), as a function of $k=\Theta(n^{\beta})$, and $\lambda= \Theta(n^{-\alpha})$, for any $m$.}
        \label{fig:spcaphasediagram4}
\end{figure}


\section{Proofs} \label{sec:proofs}
\subsection{Proof of Theorem~\ref{thm:upper}}\label{sec:algo}

\subsubsection{Sum test}\label{subsec:sum}

Recall the sum test in \eqref{eqn:sumTest}, and let $\tau \triangleq \frac{mk^2\lambda}{2}$. Let us analyze the corresponding error probability. On the one hand, under $\calH_0$, it is clear that $\s{T}_{\s{sum}}(\s{X})\sim\calN(0,n^2)$. Thus,
\begin{align}
    \pr_{\calH_0}\p{\calA_{\s{sum}}(\s{X})=1} &= \pr_{\calH_0}\p{\s{T}_{\s{sum}}(\s{X})\geq\tau}\nonumber\\
    & = \pr(\calN(0,n^2)\geq\tau)\\
    &\leq \frac{1}{2}\exp\p{-\frac{\tau^2}{2n^2}}.\nonumber
\end{align}
On the other hand, under $\calH_1$, $\s{T}_{\s{sum}}(\s{X})\sim\calN(mk^2\lambda,n^2)$. Thus,
\begin{align}
    \pr_{\calH_1}\p{\calA_{\s{sum}}(\s{X})=0} \nonumber &= \pr_{\calH_1}\p{\s{T}_{\s{sum}}(\s{X})\leq\tau}\\
    & = \pr(\calN(mk^2\lambda,n^2)\leq\tau)\\
    &\leq \frac{1}{2}\exp\p{-\frac{(\tau-mk^2\lambda)^2}{2n^2}}.\nonumber
\end{align}
Substituting $\tau = \frac{mk^2\lambda}{2}$, we obtain that
\begin{align}
    \s{R}\p{\calA_{\s{sum}}}\leq \exp\p{-\frac{m^2k^4\lambda^2}{8n^2}}.
\end{align}
Thus, if $\frac{mk^2\lambda}{n}\to\infty$, then $\s{R}\p{\calA_{\s{sum}}}\to0$, as $n\to\infty$. Note that the analysis above holds true for both detection problems in Definitions~\ref{def:SD} and \ref{def:gcons}.

\subsubsection{Scan test}\label{subsec:scan}

Recall the scan test $\calA_{\s{scan}}^{\s{SD}}(\s{X})$ in \eqref{eqn:secantestTT}, and its consecutive version $\calA_{\s{scan}}^{\s{CSD}}(\s{X})$. Let us start by analyzing the error probability associated with $\calA^{\s{SD}}_{\s{scan}}$. For simplicity of notation, we let $\tau\triangleq\sqrt{(4+\delta)k^2\log\binom{n}{k}}$. On the one hand, under $\calH_0$, we have
\begin{align}
    \pr_{\calH_0}\p{\calA^{\s{SD}}_{\s{scan}}(\s{X})=1} &= \pr_{\calH_0}\p{\s{T}^{\s{SD}}_{\s{scan}}(\s{X})\geq\tau}\nonumber\\
    & \leq \binom{n}{k}^2\pr(\calN(0,k^2)\geq\tau)\\
    &\leq \frac{1}{2}\exp\p{2\log\binom{n}{k}-\frac{\tau^2}{2k^2}}. \nonumber
\end{align}
On the other hand, under $\calH_1$, we have
\begin{align}
    \pr_{\calH_1}\p{\calA^{\s{SD}}_{\s{scan}}(\s{X})=0} &= \pr_{\calH_1}\p{\s{T}^{\s{SD}}_{\s{scan}}(\s{X})\leq\tau}\\
    & \leq \pr(\calN(k^2\lambda,k^2)\leq\tau)\\
    &\leq \frac{1}{2}\exp\p{-\frac{(k^2\lambda-\tau)_+^2}{2k^2}}.
\end{align}
Thus, 
\begin{align}
    \s{R}\p{\calA^{\s{SD}}_{\s{scan}}}\leq \frac{1}{2}\exp\p{2\log\binom{n}{k}-\frac{\tau^2}{2k^2}}+\exp\p{-\frac{(k^2\lambda-\tau)_+^2}{2k^2}}.
\end{align}
Substituting $\tau = \sqrt{(4+\delta)k^2\log\binom{n}{k}}$, we get
\begin{align}
    \s{R}\p{\calA^{\s{SD}}_{\s{scan}}}\leq \frac{1}{2}\exp\p{-\frac{\delta}{2}\cdot \log\binom{n}{k}}+\exp\p{-k^2\frac{\p{\lambda-\frac{\tau}{k^2}}_+^2}{2}},
\end{align}
and thus $\s{R}\p{\calA^{\s{SD}}_{\s{scan}}}\to0$, as $n\to\infty$, provided that $\liminf_{n\to\infty}\frac{\lambda}{\sqrt{4k^{-1}\log\frac{n}{k}}}>1$, as claimed. Next, we analyze $\calA^{\s{CSD}}_{\s{scan}}$. Let $\tau_{\s{c}} \triangleq \sqrt{(4+\delta)k^2\log{n}}$. As above, we have
\begin{align}
    \pr_{\calH_0}\p{\calA^{\s{CSD}}_{\s{scan}}(\s{X})=1} &= \pr_{\calH_0}\p{\s{T}^{\s{CSD}}_{\s{scan}}(\s{X})\geq\tau_{\s{c}}}\\
    & \leq n^{2}\pr(\calN(0,k^2)\geq\tau_{\s{c}})\\
    &\leq \frac{1}{2}\exp\p{2\log{n}-\frac{\tau_{\s{c}}^2}{2k^2}}.
\end{align}
On the other hand, under $\calH_1$, the result remained intact:
\begin{align}
    \pr_{\calH_1}\p{\calA^{\s{CSD}}_{\s{scan}}(\s{X})=0} &= \pr_{\calH_1}\p{\s{T}^{\s{CSD}}_{\s{scan}}(\s{X})\leq\tau_{\s{c}}}\\
    & \leq \pr(\calN(k^2\lambda,k^2)\leq\tau_{\s{c}})\\
    &\leq \frac{1}{2}\exp\p{-\frac{(k^2\lambda-\tau_{\s{c}})_+^2}{2k^2}}.
\end{align}
Thus, 
\begin{align}
    \s{R}\p{\calA^{\s{CSD}}_{\s{scan}}}\leq \frac{1}{2}\exp\p{2\log{n}-\frac{\tau_{\s{c}}^2}{2k^2}}+\exp\p{-\frac{(k^2\lambda-\tau_{\s{c}})_+^2}{2k^2}}.
\end{align}
Substituting $\tau_{\s{c}} = \sqrt{(4+\delta)k^2\log{n}}$, for  $\delta>0$, we get
\begin{align}
    \s{R}\p{\calA^{\s{CSD}}_{\s{scan}}}\leq \frac{1}{2}\exp\p{-\frac{\delta}{2}\cdot \log{n}}+\exp\p{-k^2\frac{\p{\lambda-\frac{\tau_{\s{c}}}{k^2}}_+^2}{2}},
\end{align}
and thus $\s{R}\p{\calA^{\s{CSD}}_{\s{scan}}}\to0$, as $n\to\infty$, provided that $\liminf_{n\to\infty}\frac{\lambda}{4\sqrt{k^{-2}\log\frac{n}{k}}}>1$, as claimed.

\subsection{Proof of Theorem~\ref{thm:lower}}\label{sec:IT}

\subsubsection{Submatrix detection}

Recall that the optimal test $\calA_n^\ast$ that minimizes the risk is the likelihood ratio test defined as follows,
\begin{align}
\calA_n^\ast\left(\s{X}\right) \triangleq \Ind\ppp{\mathsf{L}_n\left(\s{X}\right) \geq 1},
\end{align}
where $\mathsf{L}_n\left(\s{X}\right) \triangleq\frac{\pr_{\calH_1}\left(\s{X}\right)}{\pr_{\calH_0}\left(\s{X}\right)}$. The optimal risk, denoted by $\s{R}^\ast = \s{R}(\calA_n^\ast)$, can be lower bounded using the Cauchy–Schwartz inequality as follows,
\begin{align}
\mathsf{R}^\ast &=1-\frac{1}{2}\mathbb{E}_{\calH_0}\abs{\mathsf{L}_n\left(\s{X}\right)-1}   \\
&\geq 1-\frac{1} {2}\sqrt{\mathbb{E}_{\calH_0}\left[\left(\mathsf{L}_n\left(\s{X}\right)-1\right)^2\right]}   \nonumber \\
&=1-\frac{1} {2}\sqrt{\mathbb{E}_{\calH_0}\left[\left(\mathsf{L}_n\left(\s{X}\right)\right)^2\right]-1}. \nonumber
\end{align}
Thus, in order to lower bound the risk, we need to upper bound $\mathbb{E}_{\calH_0}\left[\left(\mathsf{L}_n\left(\s{X}\right)\right)^2\right]$. Below, we provide a lower bound that holds for any pair of distributions $\calP$ and $\calQ$.
\begin{corollary}\label{cor:1}
The following holds:
\begin{align}\mathbb{E}_{\calH_0}\left[\left(\mathsf{L}_n\left(\s{X}\right)\right)^2\right] =  \mathbb{E}_{\s{K}\indep\s{K}'}\pp{(1+\chi^2(\calP||\calQ))^{\abs{\s{K} \cap \s{K}'}}}\leq \mathbb{E}_{\s{K}\indep\s{K}'}\left[e^{\chi^2(\calP||\calQ)\cdot \abs{\s{K} \cap \s{K}'}}\right],
\end{align}
where $\s{K}$ and $\s{K}'$ are two independent copies drawn uniformly at random from $\calK_{k,m,n}$ (or, $\bar{\calK}_{k,m,n}$), and
\begin{align}
   \chi^2(\calP||\calQ)\triangleq \bE_{X\sim\calQ}\pp{\frac{\calP(X)}{\calQ(X)}}^2-1.
\end{align}
\end{corollary}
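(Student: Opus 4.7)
The plan is to carry out a standard second-moment calculation for planted models, following the recipe: first expose the likelihood ratio as a mixture over the random planted pattern $\s{K}$, then square, swap expectations, and exploit product structure under $\calH_0$.

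First I would write $\pr_{\calH_1}$ as a uniform mixture over $\s{K}\in\calK_{k,m,n}$ (or $\calK_{k,m,n}^{\s{con}}$). Since under $\calH_0$ every entry is i.i.d.\ $\calQ$, we can factor the conditional density under $\calH_1$ as
\[
\frac{\pr_{\calH_1\vert\s{K}}(\s{X}\vert\s{K})}{\pr_{\calH_0}(\s{X})} \;=\; \prod_{(i,j)\in\s{K}}\frac{\dif\calP}{\dif\calQ}(\s{X}_{ij}),
\]
because all entries outside $\s{K}$ share the same density under both hypotheses and thus cancel. Averaging over the uniform prior on $\s{K}$ yields
\[
\mathsf{L}_n(\s{X}) \;=\; \bE_{\s{K}}\Bigl[\prod_{(i,j)\in\s{K}}\tfrac{\dif\calP}{\dif\calQ}(\s{X}_{ij})\Bigr].
\]

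Next I would square this mixture representation, introducing an independent copy $\s{K}'$ drawn from the same prior, and apply Fubini to pull the expectation over $(\s{K},\s{K}')$ outside:
\[
\bE_{\calH_0}\!\left[\mathsf{L}_n(\s{X})^2\right] \;=\; \bE_{\s{K}\indep\s{K}'}\!\left[\bE_{\calH_0}\!\Bigl[\prod_{(i,j)\in\s{K}}\tfrac{\dif\calP}{\dif\calQ}(\s{X}_{ij})\prod_{(i,j)\in\s{K}'}\tfrac{\dif\calP}{\dif\calQ}(\s{X}_{ij})\Bigr]\right].
\]
Because under $\calH_0$ the entries $\s{X}_{ij}$ are independent, the inner expectation factorizes over entries. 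For each $(i,j)\in\s{K}\cap\s{K}'$ the factor is $\bE_{X\sim\calQ}[(\dif\calP/\dif\calQ)^2(X)] = 1+\chi^2(\calP\|\calQ)$; for each $(i,j)\in\s{K}\triangle\s{K}'$ the factor is $\bE_{X\sim\calQ}[\dif\calP/\dif\calQ(X)]=1$; and outside $\s{K}\cup\s{K}'$ it is trivially $1$. Collecting these factors gives exactly
\[
\bE_{\calH_0}\!\left[\mathsf{L}_n(\s{X})^2\right] \;=\; \bE_{\s{K}\indep\s{K}'}\!\Bigl[(1+\chi^2(\calP\|\calQ))^{\lvert\s{K}\cap\s{K}'\rvert}\Bigr],
\]
which is the stated equality. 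The inequality follows immediately from $1+x\le e^x$ for $x\ge 0$ applied pointwise in $(\s{K},\s{K}')$ with $x=\chi^2(\calP\|\calQ)$ and exponent $|\s{K}\cap\s{K}'|$.

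There is no genuine obstacle here: the argument is the textbook ``second-moment via overlap'' identity, and it goes through verbatim for either ensemble $\calK_{k,m,n}$ or $\calK_{k,m,n}^{\s{con}}$ since the only property used is that $\s{K}$ and $\s{K}'$ are i.i.d.\ uniform on a prescribed family of index sets. The only minor care needed is to verify measurability / absolute continuity of $\calP$ with respect to $\calQ$ (automatic in the Gaussian setting of interest, where $\chi^2(\calN(\lambda,1)\|\calN(0,1)) = e^{\lambda^2}-1$) so that the ratio $\dif\calP/\dif\calQ$ is well-defined and its first and second moments under $\calQ$ are finite. The real work of the information-theoretic lower bound in Theorem~\ref{thm:lower} then reduces to controlling the overlap moment generating function $\bE[e^{\chi^2\cdot |\s{K}\cap\s{K}'|}]$, which will be the actual combinatorial heart of the argument in the subsequent subsections.
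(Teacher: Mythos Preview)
Your proposal is correct and follows essentially the same argument as the paper: write the likelihood ratio as a mixture over $\s{K}$, square via an independent copy $\s{K}'$, swap expectations by Fubini, factorize under $\calH_0$, and identify the three types of entries (intersection, symmetric difference, complement) to obtain the overlap formula, then apply $1+x\le e^x$. The paper's proof is line-for-line the same computation, so there is nothing to add.
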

\begin{proof}[Proof of Corollary~\ref{cor:1}]
First, note that the likelihood can be written as follows:
\begin{align}
    \mathsf{L}_n\left(\s{X}\right)  &= \frac{\pr_{\calH_1}(\s{X})}{\pr_{\calH_0}(\s{X})}= \bE_{\s{K}\sim\s{Unif}(\calK_{k,m,n})}\p{\prod_{(i,j)\in\s{K}}\frac{\calP(\s{X}_{ij})}{\calQ(\s{X}_{ij})}}.\label{eqn:averageoverK}
\end{align}
Now, note that the square of the right-hand side of \eqref{eqn:averageoverK} can be rewritten as:
\begin{align}
    \pp{\bE_{\s{K}\sim\s{Unif}(\calK_{k,m,n})}\p{\prod_{(i,j)\in\s{K}}\frac{\calP(\s{X}_{ij})}{\calQ(\s{X}_{ij})}}}^2 = \mathbb{E}_{\s{K}\indep\s{K}'\sim\s{Unif}(\calK_{k,m,n})}\p{\prod_{(i,j)\in\s{K}}\frac{\calP(\s{X}_{ij})}{\calQ(\s{X}_{ij})}\prod_{(i,j)\in\s{K}'}\frac{\calP(\s{X}_{ij})}{\calQ(\s{X}_{ij})}}.
\end{align}
Therefore,
\begin{align}
&\mathbb{E}_{\calH_0}\left[\left(\mathsf{L}_n\left(\s{X}\right)\right)^2\right]  = \mathbb{E}_{\calH_0}\pp{\bE_{\s{K}\sim\s{Unif}(\calK_{k,m,n})}\p{\prod_{(i,j)\in\s{K}}\frac{\calP(\s{X}_{ij})}{\calQ(\s{X}_{ij})}}}^2\\
&\hspace{1cm} = \mathbb{E}_{\calH_0}\pp{\mathbb{E}_{\s{K}\indep\s{K}'\sim\s{Unif}(\calK_{k,m,n})}\p{\prod_{(i,j)\in\s{K}}\frac{\calP(\s{X}_{ij})}{\calQ(\s{X}_{ij})}\prod_{(i,j)\in\s{K}'}\frac{\calP(\s{X}_{ij})}{\calQ(\s{X}_{ij})}}}\\
& \hspace{1cm}= \mathbb{E}_{\s{K}\indep\s{K}'\sim\s{Unif}(\calK_{k,m,n})}\pp{\mathbb{E}_{\calH_0}\p{\prod_{(i,j)\in\s{K}}\frac{\calP(\s{X}_{ij})}{\calQ(\s{X}_{ij})}\prod_{(i,j)\in\s{K}'}\frac{\calP(\s{X}_{ij})}{\calQ(\s{X}_{ij})}}}\\
& \hspace{1cm}= \mathbb{E}_{\s{K}\indep\s{K}'\sim\s{Unif}(\calK_{k,m,n})}\pp{\mathbb{E}_{\calH_0}\p{\prod_{(i,j)\in\s{K \cup K' \setminus K \cap K' }}\frac{\calP(\s{X}_{ij})}{\calQ(\s{X}_{ij})}\prod_{(i,j)\in\s{K\cap K'}}\p{\frac{\calP(\s{X}_{ij})}{\calQ(\s{X}_{ij})}}^2}} \\
& \hspace{1cm}= \mathbb{E}_{\s{K}\indep\s{K}'\sim\s{Unif}(\calK_{k,m,n})}\pp{\prod_{(i,j)\in\s{K \cup K' \setminus K \cap K' }}\mathbb{E}_{\calH_0}\pp{\frac{\calP(\s{X}_{ij})}{\calQ(\s{X}_{ij})}}\prod_{(i,j)\in\s{K\cap K'}}\mathbb{E}_{\calH_0}\pp{\frac{\calP(\s{X}_{ij})}{\calQ(\s{X}_{ij})}}^2}\label{eqn:73} \\
& \hspace{1cm}\stackrel{(a)}{=} \mathbb{E}_{\s{K}\indep\s{K}'\sim\s{Unif}(\calK_{k,m,n})}\pp{\p{\mathbb{E}_{\calH_0}\pp{\frac{\calP(\s{X}_{ij})}{\calQ(\s{X}_{ij})}}^2}^{\abs{\s{K} \cap \s{K}'}}}\label{eqn:74}\\
& \hspace{1cm}= \mathbb{E}_{\s{K}\indep\s{K}'\sim\s{Unif}(\calK_{k,m,n})}\pp{\p{1+\chi^2(\calP||\calQ)}^{\abs{\s{K} \cap \s{K}'}}}\\
&\hspace{1cm}\stackrel{(b)}{\leq}\mathbb{E}_{\s{K}\indep\s{K}'}\left[e^{\chi^2(\calP||\calQ)\cdot \abs{\s{K} \cap \s{K}'}}\right],
\end{align}
where $(a)$ is because $\bE_{\calQ}\frac{\calP(\s{X}_{ij})}{\calQ(\s{X}_{ij})}=1$, and $(b)$ is because $1+x\leq\exp(x)$, for any $x\in\mathbb{R}$.
\end{proof}

Based on Corollary~\ref{cor:1}, it suffices to upper bound $\mathbb{E}_{\s{K}\indep\s{K}'}\left[e^{\chi^2(\calP||\calQ)\cdot \abs{\s{K} \cap \s{K}'}}\right]$. Recall that $\s{K}$ and $\s{K}'$ are decomposed as $\s{K} = \bigcup_{\ell=1}^m\s{S}_\ell\times\s{T}_\ell$ and $\s{K}' = \bigcup_{\ell=1}^m\s{S}'_\ell\times\s{T}'_\ell$. Thus, we note that the intersection of $\s{K}$ and $\s{K}'$ can be rewritten as
\begin{align}
    \abs{\s{K} \cap \s{K}'} &= \sum_{\ell_1=1}^m\sum_{\ell_2=1}^m|(\s{S}_{\ell_1}\cap\s{S}'_{\ell_2})\times(\s{T}_{\ell_1}\cap\s{T}'_{\ell_2})| \\
    &= \sum_{\ell_1=1}^m\sum_{\ell_2=1}^m|(\s{S}_{\ell_1}\cap\s{S}'_{\ell_2})|\cdot|(\s{T}_{\ell_1}\cap\s{T}'_{\ell_2})|.
\end{align}
For each $\ell_1,\ell_2\in[m]$, define $\s{Z}_{\ell_1,\ell_2}\triangleq|(\s{S}_{\ell_1}\cap\s{S}'_{\ell_2})|$ and $\s{R}_{\ell_1,\ell_2}\triangleq|(\s{T}_{\ell_1}\cap\s{T}'_{\ell_2})|$. Note that the sequence of random variables $\{\s{Z}_{\ell_1,\ell_2}\}_{\ell_1,\ell_2}$ are statistically independent of the sequence $\{\s{R}_{\ell_1,\ell_2}\}_{\ell_1,\ell_2}$. Next, it is easy to show that $\s{Z}_{\ell_1,\ell_2}\sim\s{Hypergeometric}(n,k,k)$ and $\s{R}_{\ell_1,\ell_2}\sim\s{Hypergeometric}(n,k,k)$, for each $\ell_1,\ell_2\in[m]$, for any $\ell_1,\ell_2\in[m]$. Indeed, if we have an urn of $n$ balls among which $k$ balls are red, the random variable $\s{Z}_{\ell_1,\ell_2}$ (and $\s{R}_{\ell_1,\ell_2}$) is exactly the number of red balls if we draw $k$ balls from the urn uniformly at random without replacement, which is the definition of a Hypergeometric random variable. While the random variables $\{\s{Z}_{\ell_1,\ell_2}\}_{\ell_1,\ell_2}$ (and similarly $\{\s{R}_{\ell_1,\ell_2}\}_{\ell_1,\ell_2}$) are not independent, they are negatively associated. Thus,
\begin{align}
    \mathbb{E}_{\s{K}\indep\s{K}'}\left[e^{\chi^2(\calP||\calQ)\cdot \abs{\s{K} \cap \s{K}'}}\right]\leq \prod_{\ell_1=1}^m\prod_{\ell_2=1}^m\mathbb{E}\pp{e^{\chi^2(\calP||\calQ)\cdot \s{Z}_{\ell_1,\ell_2}\s{R}_{\ell_1,\ell_2}}} = \pp{\mathbb{E}\p{e^{\chi^2(\calP||\calQ)\cdot \s{Z}_{1,1}\s{R}_{1,1}}}}^{m^2}.
\end{align}
Next, it is well-known that $\s{Z}_{1,1} = \s{Hypergeometric}(n,k,k)$ (and similarly $\s{R}_{1,1} = \s{Hypergeometric}(n,k,k)$) is stochastically dominated by $
\s{B}\sim\s{Binomial}(k,k/n) = \sum_{i=1}^k\s{Bern}(k/n)$. Thus,
\begin{align}
    \mathbb{E}\p{e^{\chi^2(\calP||\calQ)\cdot \s{Z}_{1,1}\s{R}_{1,1}}}
    \leq \mathbb{E}\p{e^{\chi^2(\calP||\calQ)\cdot \s{B}\s{B}'}},
\end{align}
where $\s{B}'$ be an independent copy of $\s{B}$. 
Thus,
\begin{align}
   \mathbb{E}_{\s{K}\indep\s{K}'}\left[e^{\chi^2(\calP||\calQ)\cdot \abs{\s{K} \cap \s{K}'}}\right]\leq  \pp{\mathbb{E}\p{e^{\chi^2(\calP||\calQ)\cdot \s{B}\s{B}'}}}^{m^2}.\label{eqn:bound4}
\end{align}
We show that, if $\chi^2(\calP||\calQ)$ satisfies the condition of Theorem~\ref{thm:lower}, the  term on the right-hand side of \eqref{eqn:bound4} is at most $1+\delta$, for any $\delta>0$. 
We have
\begin{align}
    \pp{\mathbb{E}\p{e^{\chi^2(\calP||\calQ)\cdot \s{B}\s{B}'}}}^{m^2} = \pp{\bE\p{1+\frac{k}{n}\p{e^{\chi^2(\calP||\calQ)\s{B}}-1}}^k}^{m^2}.
\end{align}
Next, note that $\s{B}\leq k$ and we also assume the following, for reasons that will become clear,
\begin{align}
    \chi^2(\calP||\calQ)\leq \frac{1}{k}.
\end{align}
Therefore, using the inequality $e^x-1\leq x+ x^2$, for $x<1$, the following holds
\begin{align}
    \pp{\mathbb{E}\p{e^{\chi^2(\calP||\calQ)\cdot \s{B}\s{B}'}}}^{m^2} &\leq \pp{\bE\p{1+\frac{k}{n}\p{\chi^2(\calP||\calQ)\s{B}+\chi^4(\calP||\calQ)\s{B}^2}}^k}^{m^2}\\
    &\leq \pp{\bE\p{1+2\frac{k}{n}\chi^2(\calP||\calQ)\s{B}}^k}^{m^2}\\
    &\leq \pp{\bE\p{e^{2\frac{k^2}{n}\chi^2(\calP||\calQ)\s{B}}}}^{m^2}\\
    & = \pp{1+\frac{k}{n}\p{e^{2\frac{k^2}{n}\chi^2(\calP||\calQ)}-1}}^{km^2}.
\end{align}
This is at most $1+\delta$ if
\begin{align}
    \frac{k}{n}\p{e^{2\frac{k^2}{n}\chi^2(\calP||\calQ)}-1}\leq (1+\delta)^{\frac{1}{km^2}}-1.
\end{align}
Since $(1+\delta)^{\frac{1}{km^2}}-1\geq \log (1+\delta)/(km^2)$, this is implied by
\begin{align}
    \chi^2(\calP||\calQ)\leq \frac{n}{2k^2}\log\p{1+\frac{n\log (1+\delta)}{m^2k^2}}.
\end{align}
Putting altogether, we obtained that $\mathbb{E}_{\s{K}\indep\s{K}'}\left[e^{\chi^2(\calP||\calQ)\cdot \abs{\s{K} \cap \s{K}'}}\right]\leq 1+\delta$, if
\begin{align}
    \chi^2(\calP||\calQ)&\leq \min\ppp{\frac{1}{k},\frac{n}{2k^2}\log\p{1+\frac{n\log (1+\delta)}{m^2k^2}}}\\
    & = \min\ppp{\frac{1}{k},\frac{n^2\log (1+\delta)}{2m^2k^4}}.
\end{align}
Finally, note that in the Gaussian case, $\chi^2(\calN(\lambda,1)||\calN(0,1)) = \frac{1}{2}\pp{\exp\p{\lambda^2}-1}$. Thus, for $\lambda=o(1)$, we have $\chi^2(\calN(\lambda,1)||\calN(0,1))\to \frac{\lambda^2}{2}$, which concludes the proof.

\subsubsection{Consecutive submatrix detection}

For the consecutive case, we notice that by using the steps as in the previous subsection, we have
\begin{align}\mathbb{E}_{\calH_0}\left[\left(\mathsf{L}_n\left(\s{X}\right)\right)^2\right] \leq \mathbb{E}_{\s{K}\indep\s{K}'}\left[e^{\chi^2(\calP||\calQ)\cdot \abs{\s{K} \cap \s{K}'}}\right],
\end{align}
where $\s{K}$ and $\s{K}'$ are two independent copies drawn uniformly at random from $\calK_{k,m,n}^{\s{con}}$.
The key distinction from the previous case lies in the distribution of $|\s{K}\cap\s{K}'|$. Recall that $\s{K}$ and $\s{K}'$ are decomposed as $\s{K} = \bigcup_{\ell=1}^m\s{S}_\ell\times\s{T}_\ell$ and $\s{K}' = \bigcup_{\ell=1}^m\s{S}'_\ell\times\s{T}'_\ell$. Thus, we note that the intersection of $\s{K}$ and $\s{K}'$ can be rewritten as
\begin{align}
    \abs{\s{K} \cap \s{K}'} &= \sum_{\ell_1=1}^m\sum_{\ell_2=1}^m|(\s{S}_{\ell_1}\cap\s{S}'_{\ell_2})\times(\s{T}_{\ell_1}\cap\s{T}'_{\ell_2})| \\
    &= \sum_{\ell_1=1}^m\sum_{\ell_2=1}^m|(\s{S}_{\ell_1}\cap\s{S}'_{\ell_2})|\cdot|(\s{T}_{\ell_1}\cap\s{T}'_{\ell_2})|\\
    &\triangleq \sum_{\ell_1=1}^m\sum_{\ell_2=1}^m \s{Z}_{\ell_1,\ell_2}.
\end{align}
Note that for a given pair $(\ell_1,\ell_2)$, we have 
\begin{align}
    \pr(|(\s{S}_{\ell_1}\cap\s{S}'_{\ell_2})| = z) = \begin{cases}
        \frac{n-2k+1}{n},\ & \mathrm{for}\ z = 0\\
        \frac{2}{n},\ & \mathrm{for}\ z = 1,2,...,k-1\\
        \frac{1}{n},\ & \mathrm{for}\  z = k,
    \end{cases}\label{eqn:distZ}
\end{align}
and the exact same distribution for $|(\s{T}_{\ell_1}\cap\s{T}'_{\ell_2})|$. Thus, we may write $\s{Z}_{\ell_1,\ell_2}\stackrel{(d)}{=}\s{H}\cdot \s{H}'$, where $\s{H}$ and $\s{H}'$ are statistically independent and follow the distribution given in \eqref{eqn:distZ}. Thus, using the fact that the random variables $\ppp{\s{Z}_{\ell_1,\ell_2}}_{\ell_1,\ell_2}$ are negatively associated, we get,
\begin{align}
\mathbb{E}_{\s{K}\indep\s{K}'}\left[e^{\chi^2(\calP||\calQ)\cdot \abs{\s{K} \cap \s{K}'}}\right]\leq \prod_{\ell_1=1}^m\prod_{\ell_2=1}^m\mathbb{E}\pp{e^{\chi^2(\calP||\calQ)\cdot \s{Z}_{\ell_1,\ell_2}}} = \pp{\mathbb{E}\p{e^{\chi^2(\calP||\calQ)\cdot \s{H}\cdot \s{H}'}}}^{m^2}.\label{eqn:upperboundMomentsTildeK1}
\end{align}
Now,
\begin{align}
    \mathbb{E}\p{e^{\chi^2(\calP||\calQ)\cdot \s{H}\cdot \s{H}'}} & =\mathbb{E}\p{\frac{n-2k+1}{n} + \frac{2}{n}\sum_{i=1}^{k-1} e^{\chi^2(\calP||\calQ) \cdot i\s{H}'} + \frac{e^{\chi^2(\calP||\calQ) \cdot k\s{H}'}}{n}} \\
    & \leq \mathbb{E}\p{\frac{n-2k}{n} + \frac{2k}{n}e^{\chi^2(\calP||\calQ) \cdot k\s{H}'}} \\
    & = \frac{n-2k}{n}+\frac{2k}{n}\p{\frac{n-2k+1}{n} + \frac{2}{n}\sum_{i=1}^{k-1} e^{\chi^2(\calP||\calQ) \cdot ik} + \frac{e^{\chi^2(\calP||\calQ) \cdot k^2}}{n}}\\
    &\leq \frac{n-2k}{n}+\frac{2k}{n}\p{\frac{n-2k}{n} + \frac{2k}{n}e^{\chi^2(\calP||\calQ) \cdot k^2}}\\
    &= 1 + \frac{4k^2}{n^2} \p{e^{\chi^2(\calP||\calQ) \cdot k^2}-1}.
\end{align}
Therefore,
\begin{align}
\mathbb{E}_{\s{K}\indep\s{K}'}\left[e^{\chi^2(\calP||\calQ)\cdot \abs{\s{K} \cap \s{K}'}}\right]\leq  \pp{1 + \frac{4k^2}{n^2} \p{e^{\chi^2(\calP||\calQ) \cdot k^2}-1}}^{m^2}.\label{eqn:upperboundMomentsTildeK2}
\end{align}
This is at most $1+\delta$ if,
\begin{align}
    \frac{4k^2}{n^2}\p{e^{\chi^2(\calP||\calQ)k^2}-1}\leq (1+\delta)^{\frac{1}{m^2}}-1.
\end{align}
Since $(1+\delta)^{\frac{1}{m^2}}-1\geq \log (1+\delta)/(m^2)$, this is implied by
\begin{align}
    \chi^2(\calP||\calQ)\leq \frac{1}{k^2}\log\p{1+\frac{n^2\log (1+\delta)}{4k^2m^2}}.\label{eqn:GCDconc}
\end{align}
Finally, note that since $km\leq n$, the logarithmic factor in \eqref{eqn:GCDconc} can be lower bounded by $\log(1+\log(1+\delta)/4)$, which concludes the proof.

\subsection{Proof of Theorem~\ref{thm:gap}}\label{sec:StatComp}
In order to prove Theorem~\ref{thm:gap}, we use the following result \cite[Theorem 2.6]{Dmitriy19}.
\begin{lemma}\label{lemma:LowD}
    Let $\mathbf{S}$ be an $n$ dimensional random vector drawn from some distribution $\calD_n$, and let $\mathbf{Z}$ be an i.i.d. $n$ dimensional random vector with standard normal entries. Consider the detection problem:
\begin{align}
    \calH_0: \mathbf{Y} = \mathbf{Z} \quad\quad\s{vs.}\quad\quad\calH_1: \mathbf{Y} = \mathbf{S}+\mathbf{Z}.
\end{align}    
Then, 
\begin{align}
    \norm{\s{L}_n^{\leq \s{D}}}^2_{\calH_0} = \bE_{\mathbf{S}\indep\mathbf{S}'}\pp{\sum_{d=0}^{\s{D}}\frac{1}{d!}\innerP{\mathbf{S},\mathbf{S}'}^d},
\end{align}
where $\mathbf{S}$ and $\mathbf{S}'$ are drawn from $\calD_n$, and $\s{L}_n^{\leq D}$ is the $\s{D}$-low-degree likelihood ratio.
\end{lemma}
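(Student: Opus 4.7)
The plan is to compute the projection of the likelihood ratio $\s{L}_n=\pr_{\calH_1}/\pr_{\calH_0}$ onto the subspace $\calV_{n,\leq \s{D}}$ of polynomials of degree at most $\s{D}$ by expanding in an orthonormal basis of $L^2(\gamma_n)$, where $\gamma_n$ denotes the standard Gaussian on $\mathbb{R}^n$. First, I would write the likelihood ratio explicitly: since $\mathbf{Y}=\mathbf{S}+\mathbf{Z}$ under $\calH_1$ with $\mathbf{Z}\sim\calN(0,\Ib_n)$,
\[
\s{L}_n(\mathbf{y}) = \bE_{\mathbf{S}\sim\calD_n}\exp\p{\innerP{\mathbf{S},\mathbf{y}}-\tfrac{1}{2}\|\mathbf{S}\|^2}.
\]

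Next, I would introduce the multivariate Hermite polynomials. Let $h_k$ denote the probabilists' Hermite polynomials normalized so that $\bE_{Z\sim\calN(0,1)}[h_k(Z)h_\ell(Z)]=\delta_{k\ell}$, and for $\alpha\in\mathbb{N}^n$ set $h_\alpha(\mathbf{y})=\prod_{i=1}^n h_{\alpha_i}(y_i)$. Then $\{h_\alpha\}_{\alpha\in\mathbb{N}^n}$ is an orthonormal basis of $L^2(\gamma_n)$, and $\{h_\alpha:|\alpha|\leq\s{D}\}$ is an orthonormal basis of $\calV_{n,\leq\s{D}}$. The classical univariate generating identity $\exp(ty-\tfrac12 t^2)=\sum_{k\geq 0}\frac{t^k}{\sqrt{k!}}h_k(y)$ tensorizes to
\[
\exp\p{\innerP{\mathbf{s},\mathbf{y}}-\tfrac12\|\mathbf{s}\|^2}=\sum_{\alpha\in\mathbb{N}^n}\frac{\mathbf{s}^{\alpha}}{\sqrt{\alpha!}}h_\alpha(\mathbf{y}),
\]
with $\mathbf{s}^\alpha=\prod_i s_i^{\alpha_i}$ and $\alpha!=\prod_i \alpha_i!$.

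Taking expectation over $\mathbf{S}\sim\calD_n$ and truncating at total degree $\s{D}$, I obtain the Hermite expansion
\[
\s{L}_n^{\leq\s{D}}(\mathbf{y}) = \calP_{\leq\s{D}}\s{L}_n(\mathbf{y})=\sum_{|\alpha|\leq\s{D}}\frac{\bE[\mathbf{S}^{\alpha}]}{\sqrt{\alpha!}}h_\alpha(\mathbf{y}).
\]
Orthonormality of $\{h_\alpha\}$ with respect to $\pr_{\calH_0}=\gamma_n$ then yields
\[
\norm{\s{L}_n^{\leq\s{D}}}^2_{\calH_0}=\sum_{|\alpha|\leq\s{D}}\frac{(\bE[\mathbf{S}^\alpha])^2}{\alpha!}=\sum_{|\alpha|\leq\s{D}}\frac{\bE_{\mathbf{S}\indep\mathbf{S}'}[\mathbf{S}^\alpha(\mathbf{S}')^\alpha]}{\alpha!},
\]
where in the second equality $\mathbf{S}'$ is an independent copy of $\mathbf{S}$.

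Finally, I would collect terms by total degree and invoke the multinomial identity $\sum_{|\alpha|=d}\mathbf{x}^\alpha/\alpha!=\tfrac{1}{d!}(\sum_i x_i)^d$ with $x_i=S_iS'_i$, then interchange the (finite) sum with the expectation to conclude
\[
\norm{\s{L}_n^{\leq\s{D}}}^2_{\calH_0}=\bE_{\mathbf{S}\indep\mathbf{S}'}\pp{\sum_{d=0}^{\s{D}}\frac{\innerP{\mathbf{S},\mathbf{S}'}^d}{d!}},
\]
which is the claimed identity. The main obstacle is essentially bookkeeping: one must verify that $\s{L}_n\in L^2(\pr_{\calH_0})$ so that the Hermite series converges in $L^2$ and the orthogonal projection operator commutes with $\bE_{\mathbf{S}}$ termwise (which holds by linearity and Fubini for each finite partial sum). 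The identity itself is a clean corollary of the Hermite generating function, which is why it is a workhorse of the low-degree polynomials literature and precisely what Lemma~\ref{lemma:LowD} records for later use in the proof of Theorem~\ref{thm:gap}.
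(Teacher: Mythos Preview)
Your proof is correct and is essentially the standard argument via the Hermite generating function. Note, however, that the paper does not actually prove Lemma~\ref{lemma:LowD}; it is quoted from \cite[Theorem~2.6]{Dmitriy19} and used as a black box in the proof of Theorem~\ref{thm:gap}. Your derivation matches the proof given in that reference: write $\s{L}_n(\mathbf{y})=\bE_{\mathbf{S}}\exp(\innerP{\mathbf{S},\mathbf{y}}-\tfrac12\|\mathbf{S}\|^2)$, expand in tensorized Hermite polynomials via the generating identity, project by truncating at total degree $\s{D}$, compute the $L^2(\gamma_n)$-norm by orthonormality, and collapse the sum over multi-indices of fixed degree using the multinomial theorem. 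The only caveat you flag---that $\s{L}_n\in L^2(\pr_{\calH_0})$---is not needed for the projection step itself (the Hermite coefficients $\bE[\mathbf{S}^\alpha]/\sqrt{\alpha!}$ are well-defined regardless, and $\s{L}_n^{\leq\s{D}}$ is by definition the degree-$\s{D}$ polynomial with those coefficients), so the argument goes through as written.
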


Our $\s{SD}$ problem falls under the setting of Lemma~\ref{lemma:LowD}. Specifically, let $\s{K}\sim\s{Unif}\pp{\calK_{k,m,n}}$, and define $\tilde{\mathbf{S}}$ to be an $n\times n$ matrix such that $[\tilde{\mathbf{S}}]_{ij} = \lambda$, if $i,j\in\s{K}$, and  $[\tilde{\mathbf{S}}]_{ij}=0$, otherwise. Also, we define $\mathbf{S}$ as the vectorized version of $\tilde{\mathbf{S}}$. Then, it is clear that our $\s{SD}$ problem cast as the detection problem in Lemma~\ref{lemma:LowD}, and thus,
\begin{align}
    \norm{\s{L}_n^{\leq \s{D}}}^2 &= \bE_{\mathbf{S}\indep\mathbf{S}'}\pp{\sum_{d=0}^{\s{D}}\frac{1}{d!}\innerP{\mathbf{S},\mathbf{S}'}^d}\\
    & = \sum_{d=0}^{\s{D}}\frac{\lambda^{2d}}{d!}\bE\abs{\s{K} \cap \s{K}'}^{d},
\end{align}
where we have used the fact that $\innerP{\mathbf{S},\mathbf{S}'} = \mathbf{S}^T\mathbf{S}' = \norm{\mathbf{S}\odot\mathbf{S}}_1 = \lambda^2\abs{\s{K} \cap \s{K}'}$, and $\s{K}'$ is an independent copy of $\s{K}$. Now, recall that $\s{K}$ and $\s{K}'$ are decomposed as $\s{K} = \bigcup_{\ell=1}^m\s{S}_\ell\times\s{T}_\ell$ and $\s{K}' = \bigcup_{\ell=1}^m\s{S}'_\ell\times\s{T}'_\ell$. Thus, we note that the intersection of $\s{K}$ and $\s{K}'$ can be rewritten as
\begin{align}
    \abs{\s{K} \cap \s{K}'} &= \sum_{\ell_1=1}^m\sum_{\ell_2=1}^m|(\s{S}_{\ell_1}\cap\s{S}'_{\ell_2})\times(\s{T}_{\ell_1}\cap\s{T}'_{\ell_2})| \\
    &= \sum_{\ell_1=1}^m\sum_{\ell_2=1}^m|(\s{S}_{\ell_1}\cap\s{S}'_{\ell_2})|\cdot|(\s{T}_{\ell_1}\cap\s{T}'_{\ell_2})|.
\end{align}
For each $\ell_1,\ell_2\in[m]$, define $\s{Z}_{\ell_1,\ell_2}\triangleq|(\s{S}_{\ell_1}\cap\s{S}'_{\ell_2})|$ and $\s{R}_{\ell_1,\ell_2}\triangleq|(\s{T}_{\ell_1}\cap\s{T}'_{\ell_2})|$. Recall from the previous subsection that the sequence of random variables $\{\s{Z}_{\ell_1,\ell_2}\}_{\ell_1,\ell_2}$ are statistically independent of the sequence $\{\s{R}_{\ell_1,\ell_2}\}_{\ell_1,\ell_2}$, and that $\s{Z}_{\ell_1,\ell_2}\sim\s{Hypergeometric}(n,k,k)$ and $\s{R}_{\ell_1,\ell_2}\sim\s{Hypergeometric}(n,k,k)$, for each $\ell_1,\ell_2\in[m]$, for any $\ell_1,\ell_2\in[m]$. Furthermore, $\{\s{Z}_{\ell_1,\ell_2}\}_{\ell_1,\ell_2}$ (and similarly $\{\s{R}_{\ell_1,\ell_2}\}_{\ell_1,\ell_2}$) are  negatively associated. Finally, recall that both $\s{Z}_{\ell_1,\ell_2}$ and $\s{R}_{\ell_1,\ell_2}$ are stochastically dominated by $\s{Binomial}(k,k/n)$. Thus, using \cite{Berend10} (see also \cite[Theorem 1]{Thomas22}), we have
\begin{align}
    \bE\abs{\s{K} \cap \s{K}'}^{d}\leq \s{B}_d^2\max\ppp{\frac{m^2k^4}{n^2},\p{\frac{m^2k^4}{n^2}}^d},
\end{align}
where $\s{B}_d$ is the $d$th Bell number. 
Thus, 
\begin{align}
    \norm{\s{L}_n^{\leq \s{D}}}^2 &\leq 1+ \sum_{d=1}^{\s{D}}\frac{\lambda^{2d}}{d!}\s{B}_d^2\max\ppp{\frac{m^2k^4}{n^2},\p{\frac{m^2k^4}{n^2}}^d} \triangleq 1+\sum_{d=1}^{\s{D}}\s{T}_d.
\end{align}
If $\frac{m^2k^4}{n^2}<1$, then it is clear that for $\sum_{d=1}^{\s{D}}\s{T}_d = O(1)$, it suffices that $\lambda<1$. On the other hand, if $\frac{m^2k^4}{n^2}>1$, then consider the ratio between successive terms:
\begin{align}
    \frac{\s{T}_{d+1}}{\s{T}_d} = \frac{\s{B}_{d+1}^2}{(d+1)\s{B}_{d}^2}\lambda^2m^2\frac{k^4}{n^2}.
\end{align}
Thus if $\lambda$ is small enough, namely if
\begin{align}
    \frac{mk^2\lambda}{n}\leq\frac{\sqrt{d+1}}{\sqrt{2}}\frac{\s{B}_d}{\s{B}_{d+1}},
\end{align}
then $\frac{\s{T}_{d+1}}{\s{T}_d}\leq1/2$, for all $1\leq d\leq \s{D}$. In this case, by comparing with a geometric sum, we may bound $\norm{\s{L}_n^{\leq \s{D}}}^2\leq O(1)$. This concludes the proof.

To show that the analysis above is tight, note that
\begin{align}
    \norm{\s{L}_n^{\leq \s{D}}}^2 &\sum_{d=0}^{\s{D}}\frac{\lambda^{2d}}{d!}\bE\abs{\s{K} \cap \s{K}'}^{d}\\
    &\geq \lambda^2\bE\abs{\s{K} \cap \s{K}'}\\
    & = \lambda^2m^2\frac{k^4}{n^2}.
\end{align}
Thus, if $\lambda$ is large enough, namely if $\lambda = \omega(n/(mk^2))$, then $\norm{\s{L}_n^{\leq \s{D}}}^2 = \omega(1)$.

\subsection{ML Estimator Derivation}\label{subsec:MLEDerivation}

The derivation below applies for both the case where $\s{K}\in\calK_{k,m,n}$ and the consecutive case where $\s{K}\in\calK_{k,m,n}^{\s{con}}$. Let $\pr_{\calH_1\vert\s{K}}(\s{X}\vert\s{K})$ denote the conditional distribution of $\s{X}$ given $\s{K}$. Recall that the ML estimate of $\s{K}$ is given by
\begin{align}
    \hat{\s{K}}_{\s{ML}}(\s{X}) = \argmax_{\s{K}\in\calK_{k,m,n}}\log\pr_{\calH_1\vert\s{K}}(\s{X}\vert\s{K}).\label{eqn:MLgeneral}
\end{align}
Given $\s{K}$, the distribution of $\s{X}$ under $\calH_1$ is given by, 
\begin{align}
    \log\pr_{\calH_1\vert\s{K}}(\s{X}\vert\s{K}) &= -\frac{n^2}{2}\log(2\pi e)-\frac{1}{2}\sum_{(i,j)\in\s{K}}(\s{X}_{ij}-\lambda)^2-\frac{1}{2}\sum_{(i,j)\not\in\s{K}}\s{X}^2_{ij}\\
    & = -\frac{n^2}{2}\log(2\pi e)+\lambda^2mk^2-\frac{1}{2}\sum_{(i,j)\in[n]^2}\s{X}^2_{ij}+\frac{\lambda}{2}\sum_{(i,j)\in\s{K}}\s{X}_{ij}.\label{eqn:MLderiva}
\end{align}
Noticing that only the last term at the r.h.s. of \eqref{eqn:MLderiva} depends on $\s{K}$, the ML estimator in \eqref{eqn:MLgeneral} boils down to
\begin{align}
    \hat{\s{K}}_{\s{ML}}(\s{X}) = \argmax_{\s{K}\in\calK_{k,m,n}}\sum_{(i,j)\in\s{K}}\s{X}_{ij}.\label{eqn:MLgeneral2}
\end{align}
For the consecutive model, the ML estimator is given by \eqref{eqn:MLgeneral2}, but with $\calK_{k,m,n}$ replaced by $\calK_{k,m,n}^{\s{con}}$. This problem maximizes the sum of entries among all $m$ principal submatrices of size $k\times k$ of $\s{X}$. 

\subsection{Proof of Theorem~\ref{thm:recupper}}

\subsubsection{Exact recovery using the ML estimator}\label{subsec:MLE}

In this subsection, we analyze the ML estimator. Recall that,  
\begin{align}
    \hat{\s{K}}_{\s{ML}}(\s{X}) = \argmax_{\s{K}\in\calK_{k,m,n}}\calS(\s{K}),
    \label{eq:MLProof}
\end{align}
where $\calS(\s{K})\triangleq\sum_{(i,j)\in\s{K}}\s{X}_{ij}$. 
We next prove the conditions for which $\hat{\s{K}}_{\s{ML}} = \s{K}^\star$, with high probability, where $\s{K}^\star$ are the $m$ planted submatrices. To prove the theorem, it suffices to show that $\calS(\s{K}^\star)>\calS(\s{K})$, for all feasible $\s{K}$ with $\s{K}\neq \s{K}^\star$. Let $\s{D}(\s{K})\triangleq\calS(\s{K}^\star)-\calS(\s{K})$. 
Note that
\begin{align}
    \s{D}(\s{K}) &= \sum_{(i,j)\in\s{K}^\star}\s{X}_{ij}-\sum_{(i,j)\in\s{K}}\s{X}_{ij}\\
    & = \sum_{(i,j)\in\s{K}^\star}\bE\s{X}_{ij}-\sum_{(i,j)\in\s{K}}\bE\s{X}_{ij}+\sum_{(i,j)\in\s{K}^\star}[\s{X}_{ij}-\bE\s{X}_{ij}]-\sum_{(i,j)\in\s{K}}[\s{X}_{ij}-\bE\s{X}_{ij}]\\
    & = \lambda\cdot(mk^2-\abs{\s{K}^\star\cap\s{K}})+\sum_{(i,j)\in\s{K}^\star\setminus\s{K}}[\s{X}_{ij}-\lambda]-\sum_{(i,j)\in\s{K}\setminus\s{K}^\star}\s{X}_{ij}\\
    & = \lambda\cdot(mk^2-\abs{\s{K}^\star\cap\s{K}})+\s{W}_1(\s{K})+\s{W}_2(\s{K}),\label{eqn:DKexpansionMLE}
\end{align}
where $\s{W}_1(\s{K})\triangleq\sum_{(i,j)\in\s{K}^\star\setminus\s{K}}[\s{X}_{ij}-\lambda]$ and $\s{W}_2(\s{K})\triangleq-\sum_{(i,j)\in\s{K}\setminus\s{K}^\star}\s{X}_{ij}$.
Because $|\s{K}| = |\s{K}^\star|=mk^2$, we have $|\s{K}^\star\setminus\s{K}|=|\s{K}\setminus\s{K}^\star| = mk^2-\abs{\s{K}^\star\cap\s{K}}$.
Thus, both $\s{W}_1(\s{K})$ and $\s{W}_2(\s{K})$ are composed of the sum of $mk^2-\abs{\s{K}^\star\cap\s{K}}$ i.i.d. centered Gaussian random variables with unit variance. Accordingly, for $i=1,2$, and each fixed $\s{K}$,
\begin{align}
    \pr \p{\s{W}_i(\s{K}) \leq -\frac{\lambda}{2}(mk^2-|\s{K}^\star\cap\s{K}|)} \leq \frac{1}{2}\exp\pp{-\frac{1}{2}\lambda^2(mk^2-|\s{K}^\star\cap\s{K}|)},
\end{align}
and therefore, by the union bound and \eqref{eqn:DKexpansionMLE},
\begin{align}
    \pr\p{\s{D}(\s{K})\leq0}\leq \exp\pp{-\frac{1}{2}\lambda^2(mk^2-|\s{K}^\star\cap\s{K}|)}.\label{eqn:COncenDMLE}
\end{align}
Using \eqref{eqn:COncenDMLE} and the union bound once again, we get
\begin{align}
    \pr\p{\hat{\s{K}}_{\s{ML}}(\s{X})\neq\s{K}^\star} &= \pr\pp{\bigcup_{\s{K}\neq\s{K}^\star}\s{D}(\s{K})\leq0}\\
    &\leq \sum_{\s{K}\neq\s{K}^\star}\pr\p{\s{D}(\s{K})\leq0}\\
    &\leq \sum_{\s{K}\neq\s{K}^\star}\exp\pp{-\frac{1}{2}\lambda^2(mk^2-|\s{K}^\star\cap\s{K}|)}\\
    &= \sum_{\ell=0}^{mk^2-k}\abs{\s{K}\in\calK_{k,m, n}^{\s{con}}:|\s{K}^\star\cap\s{K}|=\ell}e^{-\frac{1}{2}\lambda^2(mk^2-\ell)},
\end{align}
where the last equality follows from the fact that since $\s{K}^\star,\s{K}\in\calK_{k,m, n}^{\s{con}}$ and $\s{K}^\star\cap\s{K}\neq\emptyset$, we must have that $|\s{\s{K}^\star}\cap\s{K}|\leq mk^2-k$. It can be shown that $\abs{\s{K}\in\calK_{k,m, n}^{\s{con}}:|\s{K}^\star\cap\s{K}|=\ell}\leq C\frac{(mk^2-\ell)^2}{k^2}n^{\frac{C'(mk^2-\ell)}{k}}$, from some $C,C'>0$, see, e.g., \cite[Lemma 7]{chen2016statistical}. Then,
\begin{align}
    \pr\p{\hat{\s{K}}_{\s{ML}}(\s{X})\neq\s{K}^\star}&\leq C\sum_{\ell=0}^{mk^2-k}\frac{(mk^2-\ell)^2}{k^2}n^{\frac{C'(mk^2-\ell)}{k}}e^{-\frac{1}{2}\lambda^2(mk^2-\ell)}\\
    & = C\sum_{\ell=k}^{mk^2}\frac{\ell^2}{k^2}n^{\frac{C'\ell}{k}}e^{-\frac{1}{2}\lambda^2\ell}\\
    &\leq C\sum_{\ell=k}^{mk^2}n^4n^{\frac{C'\ell}{k}}e^{-\frac{1}{2}\lambda^2\ell}\\
    & = Cn^4\sum_{\ell=k}^{mk^2}e^{\frac{C'\ell}{k}\log n-\frac{1}{2}\lambda^2\ell}\\
    & = Cn^4\sum_{\ell=k}^{mk^2}e^{-\ell\cdot\p{\frac{1}{2}\lambda^2-\frac{C'}{k}\log n}}\\
    &\stackrel{(a)}{\leq} Cn^4\sum_{\ell=k}^{mk^2}e^{-\frac{8\ell}{k}\log n}\\
    &\leq \frac{Cn^4mk^2}{n^8} = C\frac{mk^2}{n^4},
    \end{align}
where in (a) we have used the fact that $\lambda^2>\frac{(2C'+16)\log n}{k}$. Thus, we get that $\pr(\hat{\s{K}}_{\s{ML}}(\s{X})\neq\s{K}^\star)$ converges to zero, as $n\to\infty$.

\subsubsection{Exact recovery using the peeling estimator}\label{subsec:ML0}

We analyze the first step of the peeling algorithm (which boils down to the ML estimator for a single planted submatrix), and the strategy to bound each of the other sequential steps is exactly the same. Recall that, 
\begin{align}
    \hat{\s{K}}_{1}(\s{X}) = \argmax _{\s{K}\in\calK_{k,1, n}^{\s{con}}} \calS(\s{K}),
    \label{eq:exhaustive}
\end{align}
where $\calS(\s{K})\triangleq\sum_{(i,j)\in\s{K}}\s{X}_{ij}$. We next prove the conditions for which $\hat{\s{K}}_{1}(\s{X}) = \s{K}^\star_\ell$, with high probability, for some $\ell\in[m]$, where $\s{K}^\star = \cup_{\ell=1}^m\s{K}_\ell^\star$ are the $m$ planted submatrices. To prove the theorem it suffices to show that $\calS(\s{K})>\max_{\ell\in[m]}\calS(\s{K}^\star_\ell)$ is asymptotically small, for all feasible $\s{K}$ with $\s{K} \neq \s{K}_\ell^\star$, for $\ell\in[m]$. Let $\s{D}_\ell(\s{K})\triangleq\calS(\s{K}^\star_\ell)-\calS(\s{K})$.  
Note that
\begin{align}
    \s{D}_\ell(\s{K}) &= \sum_{(i,j)\in\s{K}_\ell^\star}\s{X}_{ij}-\sum_{(i,j)\in\s{K}}\s{X}_{ij}\\
    & = \sum_{(i,j)\in\s{K}_\ell^\star}\bE\s{X}_{ij}-\sum_{(i,j)\in\s{K}}\bE\s{X}_{ij}+\sum_{(i,j)\in\s{K}_\ell^\star}[\s{X}_{ij}-\bE\s{X}_{ij}]-\sum_{(i,j)\in\s{K}}[\s{X}_{ij}-\bE\s{X}_{ij}]\\
    & = \lambda\cdot(k^2-\abs{\s{K}^\star\cap\s{K}})+\sum_{(i,j)\in\s{K}_\ell^\star\setminus\s{K}}[\s{X}_{ij}-\lambda]-\sum_{(i,j)\in\s{K}\setminus\s{K}_\ell^\star}[\s{X}_{ij}-\bE\s{X}_{ij}]\\
    & = \lambda\cdot(k^2-\abs{\s{K}^\star\cap\s{K}})+\s{W}_1(\s{K})+\s{W}_2(\s{K}),\label{eqn:DKexpansion}
\end{align}
where $\s{W}_1(\s{K})\triangleq\sum_{(i,j)\in\s{K}_\ell^\star\setminus\s{K}}[\s{X}_{ij}-\lambda]$ and $\s{W}_2(\s{K})\triangleq-\sum_{(i,j)\in\s{K}\setminus\s{K}_\ell^\star}[\s{X}_{ij}-\bE\s{X}_{ij}]$.
Because $|\s{K}| = |\s{K}_\ell^\star|=k^2$, we have $|\s{K}_\ell^\star\setminus\s{K}|=|\s{K}\setminus\s{K}_\ell^\star| = k^2-\abs{\s{K}_\ell^\star\cap\s{K}}$.
Thus, both $\s{W}_1(\s{K})$ and $\s{W}_2(\s{K})$ are composed of sum of $k^2-\abs{\s{K}_\ell^\star\cap\s{K}}$ i.i.d. centered Gaussian random variables with unit variance. Accordingly, for $i=1,2$, and each fixed $\s{K}$,
\begin{align}
    \pr \p{\s{W}_i(\s{K}) \leq -\frac{\lambda}{2}(k^2-|\s{K}^\star\cap\s{K}|)} &\leq \frac{1}{2}\exp\pp{-\frac{\lambda^2}{8}\frac{(k^2-|\s{K}^\star\cap\s{K}|)^2}{k^2-\abs{\s{K}_\ell^\star\cap\s{K}}}}.
\end{align}
Therefore, by the union bound and \eqref{eqn:DKexpansion},
\begin{align}
    \pr\p{\s{D}_\ell(\s{K})\leq0}\leq \exp\pp{-\frac{\lambda^2}{8}\frac{(k^2-|\s{K}^\star\cap\s{K}|)^2}{k^2-\abs{\s{K}_\ell^\star\cap\s{K}}}}.\label{eqn:COncenD}
\end{align}
Note that due to the separation assumption, it must be the case that either $|\s{K}^\star\cap\s{K}| = |\s{K}_j^\star\cap\s{K}|\neq0$, for some $j\in[m]$, or $|\s{K}^\star\cap\s{K}|=0$. In the later case, we have
\begin{align}
    \pr\p{\s{D}_\ell(\s{K})\leq0}\leq \exp\pp{-\frac{\lambda^2k^2}{8}},
\end{align}
while in the former the exists a unique $j\in[m]$, such that,
\begin{align}
     \min_{\ell\in[m]}\pr\p{\s{D}_\ell(\s{K})\leq0}&\leq \min_{\ell\in[m]}\exp\pp{-\frac{\lambda^2}{8}\frac{(k^2-|\s{K}_j^\star\cap\s{K}|)^2}{k^2-\abs{\s{K}_\ell^\star\cap\s{K}}}}\\
     &\leq \exp\pp{-\frac{\lambda^2}{8}(k^2-|\s{K}_j^\star\cap\s{K}|)}\\
     &\leq \exp\pp{-\frac{\lambda^2k}{8}},
\end{align}
where the third inequity is since $\s{K}_j^\star,\s{K}\in\calK_{k,1, n}^{\s{con}}$ and $\s{K}_j^\star\cap\s{K}\neq\emptyset$, we must have that $|\s{\s{K}_j^\star}\cap\s{K}|\leq k^2-k$. Accordingly, using \eqref{eqn:COncenD} and the union bound once again, we get
\begin{align}
    \pr\p{\hat{\s{K}}_{1}(\s{X})\neq\s{K}_\ell^\star\;\text{for some }\ell\in[m]} &= \pr\pp{\bigcup_{\s{K}\neq(\s{K}_1^\star,\ldots,\s{K}_m^\star)}\ppp{\calS(\s{K})>\max_{\ell\in[m]}\calS(\s{K}^\star_\ell)}}\\
    & =\pr\pp{\bigcup_{\s{K}\neq(\s{K}_1^\star,\ldots,\s{K}_m^\star)}\ppp{\s{D}_1(\s{K})\leq0,\ldots,\s{D}_m(\s{K})\leq0}}\\
    &\leq \sum_{\s{K}\neq(\s{K}_1^\star,\ldots,\s{K}_m^\star)}\min_{\ell\in[m]}\pr\p{\s{D}_\ell(\s{K})\leq0}\\
    &\leq \sum_{\s{K}\neq(\s{K}_1^\star,\ldots,\s{K}_m^\star)}\exp\pp{-\frac{\lambda^2k}{8}}\\
    \leq n^2e^{-\frac{1}{8}\lambda^2k},
\end{align}
where the last inequality is because $|\calK_{k,1, n}^{\s{con}}|\leq n^2$. Thus, we see that if $\lambda^2>\frac{(24+\epsilon)\log n}{k}$, then $\pr\p{\hat{\s{K}}_{1}(\s{X})\neq\s{K}_\ell^\star\;\text{for some }\ell\in[m]}\leq n^{-(1+\epsilon/8)}$. Using the same steps above, it is clear that, $\pr(\hat{\s{K}}_{\ell}(\s{X})\neq\s{K}_\ell^\star)\leq n^{-(1+\epsilon/8)}$, for any $2\leq\ell\leq m$, provided that $\lambda^2>\frac{(24+\epsilon)\log n}{k}$. Thus, 
\begin{align}
    \pr\pp{\hat{\s{K}}_{\s{peel}}\neq\s{K}^\star} = \pr\pp{\bigcup_{\ell=1}^m\hat{\s{K}}_{\ell}\neq\s{K}_\ell^\star}\leq \frac{m}{n^{(1+\epsilon/8)}} = n^{-\epsilon/8},
\end{align}
which converges to zero as $n\to\infty$.

\subsubsection{Correlated recovery using the peeling estimator}

Our analysis for correlated recovery relies on standard arguments as in \cite{banks2018information,wu2018statistical}. Recall the peeling estimator in \eqref{eqn:MLPeel}. Denote the planted submatrices by $\s{K}^\star = \bigcup_{i=1}^m\s{T}^\star_i\times\s{S}_i^\star\in\calK^{\s{con}}_{k,m,n}$. We let $\s{K}_i^\star\triangleq \s{T}^\star_i\times\s{S}_i^\star$, for $i\in[m]$, and fix $\epsilon>0$. 
Let us analyze the first step of the algorithm, i.e., $\hat{\s{K}}_{1}(\s{X}^{(1)})=\hat{\s{K}}_{1}(\s{X})$. Recall that
\begin{align}
   \hat{\s{K}}_{1}(\s{X}^{(1)}) = \argmax_{\s{K}\in\calK_{k,1,n}^{\s{con}}}\calS_1(\s{K}),
\end{align}
where we define $\calS_\ell(\s{K})\triangleq\sum_{(i,j)\in\s{K}}\s{X}_{ij}^{(\ell)}$, for $\s{K}\in\calK_{k,1,n}^{\s{con}}$ and $\ell\in[m]$. Under the planted model, $\calS_1(\s{K})\sim\calN(\lambda\langle{\s{K},\s{K}^\star}\rangle,k^2)$. Hence, the distribution of $\calS_1(\s{K})$ depends on the size of the overlap of $\s{K}$ with $\s{K}^\star$. To prove that reconstruction is possible, we compute in the planted model the probability that $\calS_1(\s{K})>\max_{\ell\in[m]}\calS_1(\s{K}_\ell^\star)$, given that $\s{K}$ has overlap $\langle{\s{K},\s{K}^\star}\rangle=\omega$ with the planted partition, and argue that this probability tends to zero whenever the overlap is small enough. For each $\ell\in[m]$, note that $\calS_1(\s{K}_\ell^\star)\sim\calN(\lambda k^2,k^2)$, and thus Hoeffding's inequality implies that $\calS_1(\s{K}_\ell^\star)> \lambda k^2-\sqrt{2k^2\log n}$, with probability at least $1-O(n^{-1})$. Taking the union bound over every $\s{K}$ with overlap at most $\omega$ gives
\begin{align}
&\pr\p{\max_{\langle{\s{K},\s{K}^\star}\rangle\leq\omega}\calS_1(\s{K})>\lambda k^2-\sqrt{2k^2\log n}}\leq \\
& \hspace{0.8cm}= n^2\cdot \max_{\langle{\s{K},\s{K}^\star}\rangle\leq\omega}\exp\p{-\frac{\pp{\lambda (k^2-\langle{\s{K},\s{K}^\star}\rangle)-\sqrt{2k^2\log n}}^2}{2k^2}}\\
    &\hspace{0.8cm} = n^2\cdot\max_{\langle{\s{K},\s{K}^\star}\rangle\leq\omega}\exp\p{-\pp{\frac{\lambda}{\sqrt{2k^2}} (k^2-\langle{\s{K},\s{K}^\star}\rangle)-\sqrt{\log n}}^2}\\
    &\hspace{0.8cm}\leq\exp\p{2\log n-\pp{\frac{\lambda}{\sqrt{2k^2}} (k^2-\omega)-\sqrt{\log n}}^2}.
\end{align}
By the assumption that $\lambda>\frac{\s{C}\sqrt{\log n}}{k}$, with $\s{C}>2+\sqrt{2}$, it follows that there exists a fixed constant $\epsilon>0$ such that $(1-\epsilon)\lambda>\frac{\s{C}\sqrt{\log n}}{k}$. Hence, setting $\omega = k^2\epsilon$ in the last displayed equation, we get
\begin{align}
&\pr\p{\max_{\langle{\s{K},\s{K}^\star}\rangle\leq\omega}\calS_1(\s{K})>\lambda k^2-\sqrt{2k^2\log n}}\nonumber\\
    &\hspace{0.8cm}\leq \exp\p{2\log n-\pp{\frac{\lambda\sqrt{k^2}}{\sqrt{2}}(1-\epsilon)-\sqrt{\log n}}^2} = e^{-\Omega(n)},
\end{align}
and thus, with probability at least $1-e^{-\Omega(n)}$,
\begin{align}
    \max_{\langle{\s{K},\s{K}^\star}\rangle\leq k^2\epsilon}\calS_1(\s{K})<\lambda k^2-\sqrt{2k^2\log n}.
\end{align}
Consequently, we get that the maximum likelihood estimator $\hat{\s{K}}_{1}(\s{X}^{(1)})$ in \eqref{eqn:MLPeel} satisfies $\langle{\hat{\s{K}}_{1},\s{K}^\star}\rangle\geq k^2\epsilon$ with high probability. Finally, the separation assumption implies that there exist a unique $j\in[m]$ such that $\langle{\hat{\s{K}}_{1},\s{K}^\star}\rangle = \langle{\hat{\s{K}}_{1},\s{K}_j^\star}\rangle\geq k^2\epsilon$, and $ \langle{\hat{\s{K}}_{1},\s{K}_\ell^\star}\rangle=0$, for $\ell\neq j$. Then, in the second step of the peeling algorithm, we first compute $\s{X}_{ij}^{(2)}$, by setting $[\s{X}_{ij}^{(2)}]_{ij}=-\infty$, for any $(i,j)\in\hat{\s{K}}_{1}$, and $[\s{X}_{ij}^{(2)}]_{ij}=0$, otherwise. Thus, it is clear that in the second step,  $\hat{\s{K}}_{2}$ cannot be attained by any set that is $k$-closed to $\hat{\s{K}}_{1}$; indeed, $\calS_2(\s{K})=-\infty$, for any set $\s{K}$ that is $k$-closed to $\hat{\s{K}}_{1}$. Therefore, for the relevant sets in the maximization in $\hat{\s{K}}_{2}$, we again have $\calS_2(\s{K})\sim\calN(\lambda\langle{\s{K},\s{K}^\star}\rangle,k^2)$. Accordingly, repeating the exact same arguments above, we obtain that $\langle{\hat{\s{K}}_{2},\s{K}_j^\star}\rangle\geq k^2\epsilon$ with high probability, for some $j\in[m]$. In the same way, we get that $\langle{\hat{\s{K}}_{\ell},\s{K}_\ell^\star}\rangle\geq k^2\epsilon$ with high probability, for any $\ell\in[m]$. The union bound, then implies that
\begin{align}  \pr\pp{\bigcup_{\ell=1}^m\ppp{\langle{\hat{\s{K}}_{\ell},\s{K}_\ell^\star}\rangle< k^2\epsilon}}\leq m\cdot\pr\pp{\langle{\hat{\s{K}}_{\ell},\s{K}_\ell^\star}\rangle< k^2\epsilon}\leq me^{-\Omega(n)} = o(1).
\end{align}
Thus, $\langle{\hat{\s{K}}_{\s{peel}},\s{K}^\star}\rangle\geq mk^2\epsilon$ with high probability, namely, $\hat{\s{K}}_{\s{peel}}$ achieves correlated recovery.

\subsection{Proof of Theorem~\ref{thm:recLower}}\label{eqn:RecoveryLower}

\subsubsection{Exact recovery}

We use an information theoretical argument via Fano's inequality. Recall that $\calK_{k,m,n}^{\s{con}}$ is the set of possible planted submatrices. Let $\bar{\calK}_{k,m,n}$ be a subset of $\calK_{k,m,n}^{\s{con}}$, which will be specified
later on. Let $\bar{\pr}_{\s{X},\s{K}^\star}$ denote the joint distribution of the underlying location of the planted submatrices $\s{K}^\star$ and $\s{X}$, when $\s{K}^\star$ is drawn uniformly at random from $\bar{\calK}_{k,m,n}$, and $\s{X}$ is generated according to Definition~\ref{def:gconsrec}. Let $I(\s{X};\s{K}^\star)$ denote the mutual information between $\s{X}$ and $\s{K}^\star$. Then, Fano's inequality implies that,
\begin{align}
    \inf_{\hat{\s{K}}} \sup_{\s{K}^\star\in\bar{\calK}_{k,m,n}} \pr \pp{\hat{\s{K}} \neq \s{K}^\star} \geq \inf_{\hat{\s{K}}} \bar{\pr}\pp{\hat{\s{K}} \neq \s{K}^\star} \geq 1 - \frac{I(\s{X};\s{K}^\star)+1}{\log|\bar{\calK}_{k,m,n}|}.\label{eqn:Fano}
\end{align}
We construct $\bar{\calK}_{k,m,n}$ as follows. Let $\s{M}\triangleq \alpha\cdot m$, where $\alpha\in\mathbb{N}$ will be specified later on, and $\bar{\calK}_{k,m,n} = \ppp{\s{K}_\ell}_{\ell=0}^{\s{M}}$, where:
\begin{enumerate}
    \item The \emph{base} submatrix $\s{K}_0$ is defined as $\s{K}_0 = \bigcup_{\ell=1}^m\s{S}_\ell^0 \times \s{T}_\ell^0$, with $\s{S}_\ell^0=\ppp{(\ell-1)\cdot (k+\alpha)+1,\ldots,(\ell-1)\cdot (k+\alpha)+k}$ and $\s{T}_\ell^0=[k]$, for $\ell\in[m]$. Namely, every pair of consecutive matrices among the $m$ matrices in $\s{K}_0$ are $\alpha$ columns far apart.
    \item We let $\s{K}_{(j-1)\alpha+i}$, for $j=1,2,\ldots,m$ and $i = 1,2,\ldots,\alpha$, to be defined the same as $\s{K}_{0}$ but with $\s{S}^{j-1}$ shifted $i$ columns to the right.
\end{enumerate}
Let $\bar{\pr}_i$ denote the conditional distribution of $\s{X}$ given $\s{K}^\star = \s{K}_i$. Note that,
\begin{align}
    I(\s{X};\s{K}^\star) &= d_{\s{KL}}(\bar{\pr}_{\s{X},\s{K}^\star}||\bar{\pr}_{\s{X}}\bar{\pr}_{\s{K}^\star})\\
    & = \bE_{\s{K}^\star}\pp{d_{\s{KL}}(\bar{\pr}_{\s{X}\vert\s{K}^\star}||\bar{\pr}_{\s{X}})}\\
    & = \frac{1}{\s{M+1}}\sum_{i=0}^{\s{M}}d_{\s{KL}}(\bar{\pr}_i||\bar{\pr}_{\s{X}})\\
    &\leq\frac{1}{(\s{M+1})^2}\sum_{i,j=0}^{\s{M}}d_{\s{KL}}(\bar{\pr}_i||\bar{\pr}_j),
\end{align}
where the inequality follows from the fact that $\bar{\pr}_{\s{X}} = \frac{1}{\s{M+1}}\sum_{j=0}^{\s{M}}\bar{\pr}_j$, and the convexity of KL divergence. Now, since each $\bar{\pr}_i$ is a product of $n^2$ Gaussian distributions, we get
\begin{align}
    I(\s{X};\s{K}^\star) &\leq \frac{1}{(\s{M}+1)^2}\sum_{i, j=0}^{\s{M}} d_{\s{KL}}(\bar{\pr}_i||\bar{\pr}_j) \\ 
    &=\frac{1}{(\s{M}+1)}\sum_{j=0}^{\s{M}} d_{\s{KL}}(\bar{\pr}_0||\bar{\pr}_j) \\ 
    & =\frac{2km}{(\s{M}+1)} \pp{d_{\s{KL}}\p{\cN(\lambda,1)||\cN(0,1)}+ d_{\s{KL}}\p{\cN(0,1)||\cN(\lambda,1)}} \sum_{j=0}^{\alpha} j \\
    & = \frac{2km}{(\s{M}+1)} \frac{\alpha(1+\alpha)}{2} \pp{d_{\s{KL}}\p{\cN(\lambda,1)||\cN(0,1)}+ d_{\s{KL}}\p{\cN(0,1)||\cN(\lambda,1)}}\\
    &\leq 2k\frac{(1+\alpha)}{2} \pp{d_{\s{KL}}\p{\cN(\lambda,1)||\cN(0,1)}+ d_{\s{KL}}\p{\cN(0,1)||\cN(\lambda,1)}}\\
    &= (1+\alpha)k\lambda^2.
\end{align}
Thus, substituting the last inequality in \eqref{eqn:Fano}, and using the fact that $|\bar{\calK}_{k,m,n}| = 1+\s{M}$, we get that $\inf_{\hat{\s{K}}} \sup_{\s{K}^\star\in\bar{\calK}_{k,m,n}} \pr \pp{\hat{\s{K}} \neq \s{K}^\star}>1/2$, if
\begin{align}
    \lambda^2<\frac{\frac{1}{2}\log(1+\s{M})-1}{(1+\alpha)k} = \frac{\frac{1}{2}\log{(1+\alpha m)}-1}{(1+\alpha)k}.
\end{align}
Finally, it is clear that there exists $\alpha_0\in\mathbb{N}$, such that for any $\alpha>\alpha_0$, the minimax error probability is at least half, if $\lambda^2<C/k$, for some constant $C>0$, which concludes the proof.

\subsubsection{Correlated recovery}

The correlated recovery lower bound follows almost directly from the same arguments as in, e.g., \cite[Subsection 3.1.3]{wu2018statistical}. For completeness, we present here the main ideas in the proof. Note that the observations can be written as $\s{X} = \lambda\s{M}+\s{W}$, $\s{W}$ is an $n\times n$ i.i.d. matrix with zero mean and unit variance Gaussian entries, and $\s{M}$ is an $n\times n$ binary matrix such that $\s{M}_{ij}=1$ if $(i,j)\in\s{K}$, and $\s{M}_{ij}=0$, otherwise, and $\s{K}$ is the planted set. Define $\s{A} = \beta \lambda\s{M}+\s{W}$, where $\beta\in[0,1]$. The minimum mean-squared error estimator (MMSE) of $\s{M}$ given $\s{A}$ is $\hat{\s{M}}_{\s{MMSE}} = \bE\pp{\s{M}\vert\s{A}}$, and the rescaled minimum mean-squared error is $\s{MMSE}(\beta) = \frac{1}{mk^2}\bE\norm{\s{M}-\bE\pp{\s{M}\vert\s{A}}}^2_{\s{F}}$. Note that under the conditions of Theorem~\ref{thm:recLower}, we proved in Theorem~\ref{thm:lower} that $\chi^2(\calP||\calQ)<C$, from some constant $C>0$. Jensen's inequality implies that the KL divergence between $\calP$ and $\calQ$ is also bounded, indeed,
\begin{align}
    d_{\s{KL}}(\calP||\calQ)\leq\log \bE_{\calP}\frac{\calP}{\calQ}\leq \log C.
\end{align}
The main idea in the proof is to show that bounded KL divergence implies that for all $\beta\in[0,1]$, the MMSE tends to that of the trivial estimator $\hat{\s{M}}=0$, i.e.,
\begin{align}
    \lim_{n\to\infty}\s{MMSE}(\beta) = \lim_{n\to\infty}\frac{1}{mk^2}\bE\norm{\s{M}}^2_{\s{F}} = \lambda^2.\label{eqn:holdMMSE}
\end{align}
Expanding the MMSE in the left-hand-side of \eqref{eqn:holdMMSE}, we get
\begin{align}
    \lim_{n\to\infty}\frac{1}{mk^2}\bE\pp{-2\innerP{\s{M},\bE\pp{\s{M}\vert\s{A}}}+\norm{\bE\pp{\s{M}\vert\s{A}}}_{\s{F}}^2} =0,
\end{align}
which by the tower property of conditional expectation implies that
\begin{align}
   \lim_{n\to\infty}\frac{1}{mk^2}\bE\norm{\bE\pp{\s{M}\vert\s{A}}}_{\s{F}}^2 =0.\label{eqn:littlemk2}
\end{align}
Thus, the optimal estimator converges to the trivial one. To prove \eqref{eqn:holdMMSE}, a straightforward calculation shows that the mutual information between $\s{A}$ and $\s{M}$ is given by $I(\beta) = I(\s{M};\s{A}) = -d_{\s{KL}}(\calP||\calQ)+\frac{\beta}{4}\bE\norm{\s{M}}^2_{\s{F}}$. Thus, under the conditions of Theorem~\ref{thm:recLower},
\begin{align}
    \lim_{n\to\infty}\frac{1}{mk^2}I(\beta) = \frac{\beta\lambda^2}{4}.
\end{align}
Then, using the above and the I-MMSE formula \cite{GuoShamaiVerdu} it can be shown that \eqref{eqn:holdMMSE} holds true (see, \cite[eqns. (13)--(15)]{wu2018statistical}). Next, for any estimator $\hat{\s{K}}$ of the planted set, we can define an estimator for $\s{M}$ by $\hat{\s{M}}_{ij} = 1$ if $(i,j)\in\hat{\s{K}}$, and $\hat{\s{M}}_{ij} = 0$, otherwise. Then, using the Cauchy-Schwarz inequality, we have
\begin{align}
    \bE\langle{\s{M},\hat{\s{M}}}\rangle &= \bE\langle{\bE\pp{\s{M}\vert\s{A}},\hat{\s{M}}}\rangle\\
    &\leq \bE\pp{\norm{\bE\pp{\s{M}\vert\s{A}}}_{\s{F}}||\hat{\s{M}}||_{\s{F}}}\\
    &\leq \sqrt{\bE\norm{\bE\pp{\s{M}\vert\s{A}}}_{\s{F}}^2}\lambda\sqrt{mk^2} = o(mk^2),\label{eqn:smallinner}
\end{align}
where the last transition follows from \eqref{eqn:littlemk2}. Thus, \eqref{eqn:smallinner} implies that for any estimator $\hat{\s{K}}$, we have $\bE\langle{\s{K},\hat{\s{K}}}\rangle = o(mk^2)$, and thus correlated recovery of $\s{K}$ is impossible.


\section{Conclusions and future work}\label{sec:Conc}

In this paper, we study the computational and statistical boundaries of the submatrix and consecutive submatrix detection and recovery problems. For both models, we derive asymptotically tight lower and upper bounds on the thresholds for detection and recovery. To that end, for each problem, we propose statistically optimal and efficient algorithms for detection and recovery and analyze their performance. Our statistical lower bounds are based on classical techniques from information theory. Finally, we use the framework of low-degree polynomials to provide evidence of the statistical-computational gap we observed in the submatrix detection problem. 

There are several exciting directions for future work. First, it would be interesting to generalize our results to any pair of distributions $\calP$ and $\calQ$. While our information-theoretic lower bounds hold for general distributions, it is left to construct and analyze algorithms for this case, as well as to derive computational lower bounds. In our paper, we assume that the elements inside the planted submatrices are i.i.d., however, it is of practical interest to generalize this assumption and consider the case of dependent entries, e.g., Gaussians with a general covariance matrix. For example, this is the typical statistical model of cryo-EM data~\cite{bendory2020single}. Finally, it will be interesting to prove a computational lower bound for the submatrix recovery problem using the recent framework of low-degree polynomials for recovery \cite{SCHRAMM22}, and well as providing other forms of evidence to the statistical computational gaps for the submatrix detection problem with a growing number of planted submatrices, e.g., using average-case reductions (see, for example, \cite{brennan18a}). 


\bibliographystyle{alpha}
\bibliography{bibfile}

\end{document}